\documentclass[10pt,letterpaper]{article}

\usepackage{
amsfonts,
amsmath,
amssymb,
amsthm,
color,
float,
graphicx,
latexsym,
mathtools, 
microtype,
multirow
}

\usepackage{
boxedminipage,
etoolbox,
framed,
ifthen,
tabularx,
tcolorbox,
thm-restate,
tikz,
xifthen
}
\usetikzlibrary{calc}

\usepackage[letterpaper, top=1in, bottom=1in, left=1in, right=1.0in, includefoot, marginparwidth=60pt]{geometry}

\RequirePackage[T1]{fontenc}
\RequirePackage[utf8]{inputenc}
\usepackage[english]{babel}

\usepackage[ruled,linesnumbered]{algorithm2e}

\newtheorem{proposition}{Proposition}
\newtheorem{theorem}{Theorem}
\newtheorem{corollary}{Corollary}
\newtheorem{lemma}{Lemma}
\newtheorem{observation}{Observation}
\newtheorem{claim}{Claim}

\newenvironment{cproof}{\proof[Proof of claim]}{\endproof}
\newtheorem{reduction}{Reduction rule}

\definecolor{black}{rgb}{0, 0, 0}

\definecolor{white}{rgb}{1, 1, 1}

\definecolor{grey}{rgb}{.6, .6, .6}

\definecolor{red}{rgb}{1, 0 ,0}

\definecolor{green}{rgb}{0, 1, 0}

\definecolor{blue}{rgb}{0, 0 ,1}

\definecolor{darkred}{rgb}{0.7, 0 ,0}

\definecolor{darkgreen}{rgb}{0, 0.7, 0}

\definecolor{darkblue}{rgb}{0, 0 , 0.55}

\definecolor{magenta}{rgb}{1, 0, 1}

\definecolor{cyan}{rgb}{0, 1, 1}

\definecolor{yellow}{rgb}{1, 0.9, 0}

\definecolor{purple}{rgb}{0.5, 0, 0.5}

\definecolor{orange}{rgb}{1, 0.5, 0}

\definecolor{turquoise}{rgb}{0, 0.7, 0.7}

\usepackage[
plainpages = true,
pdfpagelabels,
hyperfootnotes=true,
pdfstartview =,
bookmarks=true,
bookmarksopen = true,
bookmarksnumbered = true,
breaklinks = true,
hyperfigures,
pagebackref,
urlcolor = black,
anchorcolor = green,
hyperindex = true,
colorlinks = true,
linkcolor = blue,
citecolor = turquoise
]{hyperref}

\addto\extrasenglish{} 
\addto\extrasenglish{} 

\newcommand*\samethanks[1][\value{footnote}]{\footnotemark[#1]}

\newcommand{\Ccal}{\mathcal{C}}

\newcommand{\Fcal}{\mathcal{F}}

\newcommand{\Hcal}{\mathcal{H}}

\newcommand{\Ocal}{\mathcal{O}}

\newcommand{\Scal}{\mathcal{S}}

\newcommand{\Ucal}{\mathcal{U}}

\newcommand{\Wcal}{\mathcal{W}}

\newcommand{\bN}{\mathbb{N}}

\newcommand{\diam}{{\sf diam}}
\newcommand{\FPT}{{\sf FPT}\xspace}

\newcommand{\NP}{{\sf NP}\xspace}
\newcommand{\coNP}{{\sf co-NP}\xspace}
\newcommand{\NPp}{{\sf NP/poly}\xspace}
\newcommand{\no}{{\sf no}\xspace}
\newcommand{\para}{{\sf para-NP}\xspace}

\newcommand{\XP}{{\sf XP}\xspace}
\newcommand{\yes}{{\sf yes}\xspace}
\newcommand{\W}[1]{{\sf W[#1]}\xspace}

\newlength{\RoundedBoxWidth}
\newsavebox{\GrayRoundedBox}
\newenvironment{GrayBox}[1]%
   {\setlength{\RoundedBoxWidth}{.93\textwidth}
    \def\boxheading{#1}
    \begin{lrbox}{\GrayRoundedBox}
       \begin{minipage}{\RoundedBoxWidth}}%
   {   \end{minipage}
    \end{lrbox}
    \begin{center}
    \begin{tikzpicture}%
       \node(Text)[draw=black!20,fill=white,rounded corners,%
             inner sep=2ex,text width=\RoundedBoxWidth]%
             {\usebox{\GrayRoundedBox}};
        \coordinate(x) at (current bounding box.north west);
        \node [draw=white,rectangle,inner sep=3pt,anchor=north west,fill=white] 
        at ($(x)+(6pt,.75em)$) {\boxheading};
    \end{tikzpicture}
    \end{center}}
\newenvironment{defproblemx}[2][]{\noindent\ignorespaces%
                                \FrameSep=6pt%
                                \parindent=0pt%
                \vspace*{-1.5em}
                \ifthenelse{\isempty{#1}}{%
                  \begin{GrayBox}{\textsc{#2}}%
                }{%
                  \begin{GrayBox}{\textsc{#2}  parameterized by~{#1}}%
                }
                \begin{tabular*}{\textwidth}{@{\hspace{.1em}} >{\itshape} p{1.8cm} p{0.8\textwidth} @{}}%
            }{
                \end{tabular*}%
                \end{GrayBox}%
                \ignorespacesafterend
            }

\usepackage[tickmarkheight=0.1cm]{todonotes}

\newcommand{\id}{{\sc Identification}\xspace}
\newcommand{\Hid}[1]{{\sc #1-Identification}\xspace}
\newcommand{\kidH}[1]{{\sc #1-Identification}\xspace}

\title{Identification to Subclasses of Chordal Graphs\footnote{The research leading to these results has been supported by the Research Council of Norway via the BWCA Grant 314528 and the Extreme-Algorithms Grant 355137 and by the Leverhulme Trust Grant RPG-2024-182.}}
\date{}

\author{Petr A. Golovach\thanks{Department of Informatics, University of Bergen, Bergen, Norway.}\and Laure Morelle\samethanks[2]\and Dani\"el Paulusma\footnote{Department of Computer Science, Durham University, Durham, United Kingdom.\\ Emails: \texttt{petr.golovach@uib.no}, \texttt{laure.morelle@uib.no}, and \texttt{daniel.paulusma@durham.ac.uk}.}}

\begin{document}

\maketitle

\begin{abstract}
An identification of two vertices $u$ and $v$ in a graph replaces them with a new vertex whose neighborhood is the union of the neighborhoods of $u$ and $v$. We study the {\sc ${\cal H}$-Identification} problem, which is to decide whether a given graph $G$ can be transformed (``identified'') to a graph in ${\cal H}$ by applying at most $k$ vertex identifications. We determine the classical and parameterized complexity of this problem for various subclasses ${\cal H}$ of chordal graphs, obtaining an almost complete picture for two parameters: $k$ and $n-k$. We also consider the {\sc Identification} problem, which is to test for two given graphs $G$ and $H$ if $G$ can be identified to $H$. We determine the parameterized complexity of this problem when $H$ is a graph from one of our testbed classes, taking the number of simplicial vertices of $H$ as the parameter. 
\end{abstract}

\section{Introduction}\label{sec:intro}

An {\it identification} of two vertices $u$ and $v$ in a graph $G$ replaces $u$ and $v$ with a new vertex~$w_{uv}$ made adjacent to all neighbors of both $u$ and $v$ (equivalently, one may first add an edge between $u$ and $v$ if it is not already present, and then contract this edge). For a fixed class of graphs~${\cal H}$, we define the problem {\sc ${\cal H}$-Identification}. The input consists of a graph $G$ and an integer~$k$, and the question is whether $G$ can be transformed into some graph $H\in {\cal H}$ using at most $k$ identifications. We study the parameterized complexity of {\sc ${\cal H}$-Identification} and provide a comprehensive complexity analysis for well-known subclasses~${\cal H}$ of chordal graphs.

The {\sc ${\cal H}$-Identification} problem falls within the framework of graph modification problems, where the goal is to decide whether a given graph can be modified into a graph in a class~${\cal H}$ using at most $k$ operations from some specific set $S$ of allowed modifications. For example, if $S$ consists of vertex deletion, edge deletion, and edge contraction, then we obtain the ${\cal H}$-{\sc Minor} problem. Another well-known variant related to \Hid{$\cal H$} is {\sc ${\cal H}$-Contraction}, for which $S$ consists solely of edge contraction. 

Most graph modification problems are \NP-complete and have been extensively studied from the perspective of parameterized complexity. For example, if ${\cal H}$ is minor-closed, such as forests, series-parallel graphs, planar graphs, and graphs of bounded treewidth, then, for fixed~$k$, the yes-instances of
both ${\cal H}$-{\sc Minor} and \Hid{$\cal H$}
form a minor-closed graph class as well. Hence, 
both problems can be solved in 
$f(k,{\cal H})\cdot n^{1+o(1)}$
time~\cite{KorhonenPS24mino,RobertsonS04XX} (see also~\cite{GorskySW26thep,MorelleST25grap}). 
This argument does not hold for ${\cal H}$-{\sc Contraction}~\cite{MorelleST25grap} (see e.g.~\cite{GHP13,GM13,HHLLP14,HHLP13}). 

If the graphs in ${\cal H}$ have treewidth at most~$p$ and can be characterized in CMSO logic, such as forests, and outerplanar graphs, then the same holds for the class of \yes-instances of \Hid{$\cal H$}. Hence, Courcelle's Theorem~\cite{ArnborgLS91easy,Courcelle90them} yields an $g(p,k)\cdot n$ time algorithm.

However, when ${\cal H}$ is not minor-closed or does not have bounded treewidth, not much is known except for the case when ${\cal H}=\{H\}$ for some single graph~$H$. In this case,
we simply write $H$-\id and note that 
since each identification reduces the number of vertices by one, the parameter $k$ is fixed by the instance and can be omitted.\footnote{The same holds for $H$-{\sc Contraction},
but not for $H$-{\sc Minor}~\cite{GPS17}. While the latter is polynomial-time solvable for every fixed graph~$H$ when no bound on $k$ is imposed~\cite{RS95}, both $H$-{\sc Contraction} and $H$-{\sc Minor} are \NP-complete for some graphs~$H$~\cite{BV87,GPS17} and are far from being classified (see~\cite{GPS17,LPW08,LPW08b,HKPST12}).} 
To explain the known results, we first introduce an equivalent formulation. Two disjoint vertex subsets $S$ and $T$ are {\it adjacent} if there is an edge with one endpoint in $S$ and the other in $T$. A graph $G$ can be identified to a graph $H$ if and only if $V(G)$ can be partitioned into {\it bags} $\{W(x)\mid x\in V(H)\}$ such that

\begin{itemize}
\item [(i)] for every $x\in V(H)$, $W(x)$ is non-empty; and
\item [(ii)] for every pair $x,y\in V(H)$, $W(x)$ and $W(y)$ are adjacent if and only if $xy\in E(H)$.
\end{itemize}

\noindent
By identifying the vertices within each $W(x)$ to a single vertex, we obtain the graph $H$. 
We call ${\cal W}=\{W(x)\mid x\in V(H)\}$ an {\it $H$-witness structure} of $G$. See \autoref{fig:ex_id} for an illustration.
\begin{figure}[ht]
    \centering
    \includegraphics[scale=1]{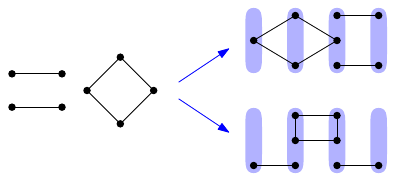}
    \caption{An example of a graph $G$ that can be identified to $P_4$, as shown with two different $P_4$-witness structures. Note that $G$ is disconnected and cannot be contracted to $P_4$.}
    \label{fig:ex_id}
\vspace*{-2mm}
\end{figure}

For graphs $G$ and $H$, a function $f:V(G)\to V(H)$ is called a {\it homomorphism} from $G$ to $H$ if $f(u)f(v)\in E(H)$ for every edge $uv\in E(G)$. A homomorphism $f:V(G)\to V(H)$ is a {\it compaction} if for each edge $xy\in E(H)$ with $x\neq y$, there exists an edge $uv\in E(G)$ such that $f(u)=x$ and $f(v)=y$. Here, we write $x\neq y$ to allow for the possibility that $E(H)$ contains self-loops $xx$. A graph is {\it reflexive} if all its vertices have a self-loop. For an {\it irreflexive} graph~$H$ (i.e. a graph with no self-loops), let $H^*$ be the reflexive graph obtained from $H$ by adding a self-loop to every vertex of $H$. Now, by using (i) and (ii), it follows that $G$ can be identified to $H$ if and only if there is a compaction from $G$ to $H^*$. Thus, the {\sc $H$-Identification} problem can be reformulated as the {\sc $H^*$-Compaction} problem, which asks whether a given irreflexive graph~$G$ admits a compaction to the fixed (reflexive) graph~$H^*$. 

Classifying the complexity of {\sc $H^*$-Compaction} is still open; see also the survey~\cite{BKM12}. However, Vikas
showed that {\sc $H^*$-Compaction}, and thus {\sc $H$-Identification}, is \NP-complete if $H=C_r$ for any $r\geq 4$~\cite{Vikas02comp}, but in
\XP\ when parameterized by $|V(H)|$
if $H$ is chordal~\cite{Vikas04comp}; a graph is {\it chordal} if it has no induced cycle on four or more vertices. 
This result for $H$-\id 
is tight in the sense that even $K_{1,\ell}$-\id is \W{1}-hard when parameterized by $\ell$, where $K_{1,\ell}$ denotes the star on $\ell+1$ vertices (which is chordal).
Indeed, by (i) and (ii), a graph $G$ without isolated vertices can be identified to~$K_{1,\ell}$ if and only if $G$ has an independent set of size at least~$\ell$, and the {\sc Independent Set} problem is \W{1}-complete when parameterized by solution size~\cite{DF95}. 
These results lead to our research~question:

\smallskip
\noindent
{\it What is the parameterized complexity of {\sc ${\cal H}$-Identification} beyond the case ${\cal H}=\{H\}$, when ${\cal H}$ ranges over \emph{classes} of chordal graphs?}

\smallskip
\noindent
Note that both the input graph $G$ and the fixed graph $H$ in {\sc $H$-Identification} are irreflexive. 

\medskip
\noindent
{\bf Our Results.}
We use the following subclasses of chordal graphs as testbeds for our study: paths, linear forests, stars, trees, forests, complete graphs, clusters, split graphs, interval graphs and chordal graphs. Note that forests have treewidth at most~$1$, implying that \Hid{$\cal H$} can be solved in $g(k)\cdot n$ time via meta-theorems~\cite{ArnborgLS91easy,Courcelle90them} if ${\cal H}$ is any subclass of forests, as observed earlier. In these cases, we still aim for a stronger result. All our other testbed classes ${\cal H}$ have unbounded treewidth and are not minor-closed, and therefore, none of the previous meta-theorems~\cite{ArnborgLS91easy,Courcelle90them,KorhonenPS24mino,MorelleST25grap,RobertsonS04XX} apply.

As our first result, we prove that for all our testbed classes~${\cal H}$,  {\sc ${\cal H}$-Identification} is \NP-complete, except for the cases where ${\cal H}$ is  the class of paths or linear forests. Specifically, when ${\cal H}$ is the class of paths, the problem is known to be solvable in $O(n^3)$ time~\cite{Vikas13algo}. We show that the proof of this result can be extended to hold even for linear forests.
 
We also observe that for each testbed class~${\cal H}$, it is possible to check in polynomial time whether a given graph $G$ belongs to ${\cal H}$. Hence, 
for every such class ${\cal H}$,
\Hid{$\cal H$} is in \XP\ when parameterized by $k$. Namely, we can consider
all $\binom{n}{2}^k\le n^{2k}$ sets~$T$ of at most $k$ vertex pairs and check in polynomial time for each $T$, whether identifying the pairs in $T$ yields a graph in ${\cal H}$. 
We can improve this \XP\ result to a polynomial kernel for stars, trees and forests, and to an \FPT\ algorithm for
complete graphs, cluster graphs and split graphs. See 
Table~\ref{t-1}, in which we also show that these results are best possible for our testbed classes except for two open cases: we do not know whether {\sc ${\cal H}$-Identification}, parameterized by~$k$, is \FPT\ when ${\cal H}$ is the class of interval graphs, nor when ${\cal H}$ is the class of chordal graphs; though we show that these cases do not admit a polynomial kernel (unless \coNP$\subseteq$~\NPp).

We now argue that $n-k$ is another natural choice of parameter that immediately leads to \XP\ algorithms for all our testbed classes. To see this, we first observe that all our classes are closed under edge contraction. Therefore, we can perform exactly $k$ identifications instead of at most 
$k$, since any additional edge contractions can be applied without leaving the class. As a result, we may assume that any obtained solution will have $n-k$ vertices. Now consider the set ${\cal G}_{n-k}$ of all graphs on exactly $n-k$ vertices. For each $H\in {\cal G}_{n-k}$, we check in polynomial time if $H\in {\cal H}$, which would imply in particular that $H$ is also chordal. Thus, we can apply the algorithm of Vikas~\cite{Vikas02comp} to check if $G$ can be identified to $H$. Recall that the algorithm of~\cite{Vikas02comp} runs in \XP\ time when parameterized by $|V(H)|$. Since for every $H\in {\cal G}_{n-k}$, $|V(H)|=n-k$, and the size of ${\cal G}_{n-k}$ is bounded by $2^{(n-k)^2}$, the result follows.

For parameter $n-k$, we were able to obtain a complete picture, as shown in Table~\ref{t-1}, which summarizes all our results for this parameter as well.  
From Table~\ref{t-1}, we can see in particular that for sparse chordal graph classes, positive results are obtained
for parameter~$k$, while for dense chordal graph classes this is the case for parameter $n-k$.  

Finally, we consider the
\id problem, which has two graphs $G$ and~$H$ as input and asks whether $G$ can be identified to $H$. Every chordal graph on two or more vertices has at least two {\it simplicial} vertices (vertices whose neighborhood is a clique)~\cite{Di61}. Hence, in this setting, the number of simplicial vertices becomes a natural parameter. As we will explain, the result of Vikas~\cite{Vikas13algo} immediately gives polynomial-time solvability if $H$ is a path. However, if $H$ is a linear forest, then the problem becomes \W{1}-hard. This is shown in Table~\ref{t-2}, in which we also summarize our other results for this parameter.

\begin{table}[t]
\centering
\begin{tabular}{|c|c|c|c|}
\hline
$\Hcal $&\NP-c & parameter: $k$ & parameter: $n-k$\\
\hline
paths~\cite{Vikas13algo}, linear forests & no 
& {\sf P} 
& {\sf P}\\ 
\hline
stars, trees, forests &yes
& poly-kernel 
&\XP\& \W{1}-hard\\ 
\hline
complete, cluster, split graphs  &yes & \FPT\ 
\& no poly-kernel 
&poly-kernel\\
\hline
interval, chordal graphs &yes & \XP \& no poly-kernel & poly-kernel\\
\hline
\end{tabular}\\[5pt]
\caption{Our results and the result from~\cite{Vikas13algo} for
{\sc ${\cal H}$-Identification} for
chordal subclasses.}\label{t-1} 

\centering
\begin{tabular}{|c|c|}
\hline
$H$ & parameter: $\ell$ (number of simplicial vertices of $H$)\\
\hline
a path~\cite{Vikas13algo} & {\sf P} \\
\hline
a linear forest, 
star, tree, forest 
&\XP\ \& \W{1}-hard\\
\hline
complete graph, cluster graph  &poly-kernel\\
\hline
a split graph, interval graph, chordal graph & 
para-\NP-complete (for $\ell=2$) \\
\hline
\end{tabular}\\[4pt]
\caption{Our results and the result of~\cite{Vikas13algo} for \id for instances $(G,H)$ when $H$ belongs to a specific subclass of chordal graphs}\label{t-2}
\vspace*{-7mm}
\end{table}

\medskip
\noindent
{\bf Methods.} 
We emphasize that our results are not isolated. In many cases, we use an algorithm developed for one result as a subroutine in another, allowing us to extend our techniques. For example, we first develop an \FPT\ algorithm for \Hid{$\cal H$}, parameterized by~$k$, when ${\cal H}$ is the class of complete graphs. We then use this algorithm as a black box to obtain \FPT\ algorithms for the cases where ${\cal H}$ is the class of cluster graphs and the class of split graphs. Moreover, it is not possible to adapt known techniques from other graph modification problems due to the nature of the identification operation. This is illustrated in Figure~\ref{fig:ex_id}, which shows a disconnected graph $G$ that can be identified to the connected graph $P_4$. As a consequence, the problem becomes significantly more challenging when $G$ is disconnected and requires additional combinatorial arguments. This difficulty is, for example, reflected in our \XP\ algorithm for \id on pairs $(G,H)$ where $H$ is a forest.

\medskip
\noindent
{\bf Paper Organization.}
The results of Tables~\ref{t-1} and~\ref{t-2} on acyclic graphs are in \autoref{sec:acyclic}; on complete and cluster graphs in \autoref{sec:cliques}; on split, interval and chordal graphs in \autoref{sec:chord}.

\section{Preliminaries}\label{sec:prelim}

For $p\geq 1$, we write $[p]=\{1,\ldots,p\}$, and for $p\leq q$, we write $[p,q]=\{p,\ldots,q\}$.
Let $G=(V,E)$ be a graph.
We usually let $n$ and $m$ denote the number of vertices and edges, respectively, if $G$ is part of the input. 
We write $G[U]$ to denote the subgraph of $G$ \emph{induced} by $U\subseteq V(G)$.
Given $U\subseteq V$, we also write $G-U$ to denote the graph obtained after removing the vertices of $U$. If $U=\{u\}$, we write $G-u$.
For a set of edges $R$, $G[R]$ denotes the subgraph of $G$ whose vertices are the endpoints of the edges of $R$ and the set of edges is $R$.
The \emph{neighborhood} $N_G(v)$ of a vertex $v\in V(G)$ is the set of all vertices \emph{adjacent} to $v$, i.e. all vertices $u$ such that $uv\in E(G)$.
We also write $N_G[v]=N_G(v)\cup\{v\}$.
For a set $U\subseteq V(G)$, $N_G[U]=\bigcup_{v\in U}N_G[v]$ and $N_G(U)=N_G[U]\setminus U$.
A set of vertices $U$ is a \emph{vertex cover} of $G$ if for every edge $uv\in E(G)$, $u\in U$ or $v\in U$. 

A \emph{separation} of a graph $G$ is a partition $(A,B)$ of $V(G)$ such that there is no edge $uv$ with $u\in A\setminus B$ and $v\in B\setminus A$; $|A\cap B|$ is the \emph{order} of the separation.
The {\it length} of a path $P$ is its number edges.
If $P$ has end-vertex $u$ and $v$, we say that $P$ is an
\emph{$u$-$v$-path}.
The \emph{distance} between vertices $u$ and $v$ is the smallest length of a $u$-$v$-path. The \emph{diameter} $\diam(G)$ of a connected graph $G$ is the maximum distance between two vertices of $G$.

A graph $G$ contains a graph $H$ as an {\it identification} (or $G$ can be {\it identified} to $H$) if $G$ can be transformed to $H$ by a sequence of identifications.
Recall that in such a case, $G$ has an $H$-witness structure ${\cal W}$.
If a bag of $\Wcal$ contains only one vertex, then we say that it is \emph{trivial}.

For algorithmic results, we use the following useful property of bags in witness structures corresponding to vertices of degree~$1$.

\begin{lemma}\label{lem:singletons}
Let $G$ and $H$ be graphs, and let $U\subseteq V(H)$ be a set of pairwise non-adjacent vertices of degree~$1$. If $G$ 
can be identified to $H$, then there is an $H$-witness structure
$\Wcal=\{W(x)\mid x\in V(H)\}$ such that $|W(x)|=1$ for all $x\in U$.
\end{lemma}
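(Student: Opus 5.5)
Take an $H$-witness structure $\Wcal$ that minimizes $\sum_{x\in U}|W(x)|$, and suppose some $x\in U$ has $|W(x)|\ge 2$. We build another $H$-witness structure with a strictly smaller value of this sum, which is a contradiction. Let $y$ be the unique neighbor of $x$ in $H$. Since $U$ is an independent set, $y\notin U$. By property (ii), $W(x)$ is adjacent to $W(y)$ and to no other bag. Hence every vertex of $W(x)$ has all its neighbors in $W(x)\cup W(y)$.

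The construction depends on whether some vertex of $W(x)$ has a neighbor in $W(y)$.
\begin{itemize}
\item Suppose such a vertex $u\in W(x)$ exists. Define $W'(x)=\{u\}$ and $W'(y)=W(y)\cup(W(x)\setminus\{u\})$, and leave all other bags unchanged. We check (i) and (ii).
\begin{itemize}
\item All bags are still nonempty.
\item $W'(x)$ is adjacent to $W'(y)$, and since $N_G(u)\subseteq W(x)\cup W(y)$, it is adjacent to no other bag.
\item For $z\ne x,y$, the bag $W(z)$ is adjacent to $W'(y)$ if and only if it was adjacent to $W(y)$. This holds because the vertices moved from $W(x)$ into $W'(y)$ have no neighbors outside $W(x)\cup W(y)$.
\end{itemize}
So the adjacency pattern is exactly that of $H$.
\item Suppose no vertex of $W(x)$ has a neighbor in $W(y)$. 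This cannot happen, since $W(x)$ and $W(y)$ are adjacent. (Note that we need the chosen $u$ to have a neighbor in $W(y)$ so that $W'(x)$ remains adjacent to $W'(y)$. This is why $u$ is picked as in the previous case.)
\end{itemize}

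The modification changes only the bags $W(x)$ and $W(y)$, and $y\notin U$. So every bag indexed by $U\setminus\{x\}$ is untouched. The bag of $x$ shrinks to size one, so the sum strictly decreases, contradicting minimality. Equivalently, we can apply the step iteratively to each $x\in U$: bags of other vertices in $U$ are never enlarged, because the enlarged bag $W(y)$ always belongs to a vertex outside $U$.

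The main point needing care is the independence of $U$. It guarantees that $y\notin U$, so the absorbed vertices never enter a bag that we need to keep trivial. It also guarantees that two vertices of $U$ never form an isolated edge whose bags would interfere. With $U$ independent, both concerns disappear and the argument is complete.
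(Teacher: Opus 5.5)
Your proposal is correct and follows essentially the same argument as the paper. The paper also takes a witness structure minimizing the total size of the bags indexed by $U$, keeps in a non-trivial bag $W(x)$ only a vertex with a neighbor in $W(y)$ (where $y\notin U$ is the unique neighbor of $x$), and moves the rest of $W(x)$ into $W(y)$, justifying property (ii) exactly as you do via $N_G[W(x)]\subseteq W(x)\cup W(y)$.
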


\begin{proof}
Let $\Wcal=\{W(x)\mid x\in V(H)\}$ be an $H$-witness structure such that the total number of vertices in $\bigcup_{x\in U}W(x)$ is minimum. We claim that $|W(x)|=1$ for all $x\in U$. 
For the sake of contradiction, assume that there is $y\in U$ such that $|W(y)|\geq 2$. Let $z$ be the unique neighbor of $y$ in $H$; note that $z\notin U$. Since $\Wcal$ is an $H$-witness structure, there are $v\in W(y)$ such that $v$ has a neighbor in $W(z)$. Consider  
$\Wcal'=\{W'(x)\mid x\in V(H)\}$, where $W'(y)=\{v\}$, $W'(z)=W(z)\cup (W(y)\setminus \{v\})$, and $W'(x)=W(x)$ for all $x\in V(H)$ distinct from $y$ and $z$. As $v\in W'(y)$ is a adjacent to a vertex of $W(z)\subseteq W'(z)$ and $N_G[W(y)]\subseteq W(y)\cup W(z)$ because $y$ is a pendant vertex, we obtain that $\Wcal'$ is an $H$-witness structure. However, the total number of vertices in $\bigcup_{x\in U}W'(x)$ is less than the size of $\bigcup_{x\in U}W(x)$
contradicting the choice of $\Wcal$. This completes the proof.
\end{proof}

We let $P_n$ denote the path on $n$ vertices. A forest is \emph{linear} if every connected component is a path. A vertex of degree at most~$1$ in a forest is called a \emph{leaf}, and other vertices are \emph{internal}. Note that isolated vertices are considered to be leaves.
A \emph{star} is a tree whose vertices except at most one (called \emph{center}) are leaves.
A graph whose vertex set is a clique is called \emph{complete}. We let $K_n$ denote the complete graph on $n$ vertices. 
Slightly abusing notations, may call a complete graph a clique.
A \emph{cluster graph} is a graph whose connected components are complete graphs.
A \emph{split graph} is a graph $G$ whose vertex set can be partitioned into a clique $K$ and an independent set $I$. Notice that such a partition is not always unique, and one of the sets may be empty. However, we always assume that $K\neq\emptyset$.
An \emph{interval graph} is a graph isomorphic to the intersection graph of intervals of the real line. 
Note that forests, complete graphs, split graphs, and interval graphs have no induced cycles on four or more vertices and thus are all chordal.

We will need an observation and known result that gives us Corollary~\ref{cor:paths}.

\begin{observation}\label{obs:connect}
Any identification of a connected graph is connected.
\end{observation}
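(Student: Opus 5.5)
It suffices to show that a single identification preserves connectivity; the statement then follows by induction on the number of identifications performed. So let $G$ be connected, let $u,v\in V(G)$, and let $G'$ be obtained by identifying $u$ and $v$ into the new vertex $w_{uv}$. Define the map $\varphi:V(G)\to V(G')$ by $\varphi(u)=\varphi(v)=w_{uv}$ and $\varphi(x)=x$ for all other $x$. The key property is that $\varphi$ maps every edge of $G$ either to an edge of $G'$ or to a single vertex. This is immediate from the definition. An edge $xy$ with $x,y\notin\{u,v\}$ is kept. An edge $xu$ or $xv$ with $x\notin\{u,v\}$ becomes $xw_{uv}$, because $N_{G'}(w_{uv})=(N_G(u)\cup N_G(v))\setminus\{u,v\}$. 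The edge $uv$, if present, collapses to $w_{uv}$.

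Now take any two vertices $a,b$ of $G'$ and choose preimages $a',b'$ in $G$; this is possible because $\varphi$ is surjective. Since $G$ is connected, there is an $a'$-$b'$-path $a'=x_0,x_1,\dots,x_t=b'$. Its image $\varphi(x_0),\dots,\varphi(x_t)$ is a walk in $G'$ from $a$ to $b$ once consecutive repeated vertices are deleted, since each consecutive pair is either equal or adjacent. Hence $a$ and $b$ are joined by a walk, and therefore by a path, in $G'$, so $G'$ is connected.

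An equivalent route works directly with the witness-structure formulation and avoids the induction. If $\Wcal$ is an $H$-witness structure of a connected graph $G$, then every edge of $G$ lies either inside a bag or between two adjacent bags, by (ii). So the image of any path of $G$ under the map sending a vertex to the vertex of $H$ whose bag contains it is a walk in $H$. There is no real obstacle in either version. The only point needing care is the case where $u$ and $v$ are adjacent: the edge $uv$ then becomes a loop, and it must be treated as a repeated vertex of the walk rather than as an edge of the irreflexive graph $G'$.
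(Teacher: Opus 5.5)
Your proof is correct. The paper states this as an observation without proof, and your argument is the routine justification it takes for granted: the map sending each vertex to its bag (or to $w_{uv}$) carries every path of $G$ to a walk in the identified graph, with surjectivity coming from condition (i) that bags are non-empty, which you use implicitly in the witness-structure version.
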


\begin{proposition}[\cite{Vikas13algo}]\label{prop:path}
Let $G$ be a graph with connected components $G_1,\ldots,G_s$. Then $G$ can be identified to a path $P_\ell$ with $\ell$ vertices if and only if $\sum_{i=1}^s\diam(G_i)\geq \ell-1$.  
\end{proposition}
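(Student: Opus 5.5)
We prove both directions of Proposition~\ref{prop:path} separately; the ``if'' direction is constructive, and the ``only if'' direction bounds the length of a path that any identification can produce.

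\emph{Sufficiency.} Suppose $\sum_{i=1}^s\diam(G_i)\ge \ell-1$. First, for each component $G_i$ with $d_i=\diam(G_i)$, fix a vertex $u_i$ of eccentricity $d_i$ and take the BFS layers $L^i_0,\dots,L^i_{d_i}$ from $u_i$. Contracting each layer gives a $P_{d_i+1}$-witness structure of $G_i$: every layer is nonempty, edges join only consecutive layers, and consecutive layers are adjacent. Next, concatenate these paths. Identify the last layer $L^i_{d_i}$ with the first layer $L^{i+1}_0$, which merges the endpoint of one path with the start of the next. The result is a path on $\sum_i d_i+1$ vertices. Finally, merge consecutive bags at one end, which is an edge contraction in the path, until exactly $\ell$ vertices remain. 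This is possible because $\ell\le \sum d_i+1$. We also need $\ell\ge 1$; if $\ell=1$, identify everything into one bag.

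\emph{Necessity.} Let $\Wcal=\{W(1),\dots,W(\ell)\}$ be a $P_\ell$-witness structure, with bags in path order. For each $j\in[\ell-1]$, bags $W(j)$ and $W(j+1)$ are adjacent, so we can choose an edge $e_j$ between them. Each $e_j$ lies in some component $G_{i}$; call $j$ \emph{assigned} to that component. We claim that the set $J_i$ of indices assigned to $G_i$ satisfies $|J_i|\le \diam(G_i)$. Summing this over $i$ gives $\ell-1\le\sum_i\diam(G_i)$.

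To prove the claim, let $a$ and $b$ be the minimum and maximum indices of bags meeting $V(G_i)$. The bags meeting $G_i$ then lie within $[a,b]$, and every $j\in J_i$ satisfies $a\le j<b$. Hence $|J_i|\le b-a$. Now pick $x\in W(a)\cap V(G_i)$ and $y\in W(b)\cap V(G_i)$. Any $x$-$y$ path in $G_i$ changes its bag index by at most one per edge, since edges of $G$ only join equal or consecutive bags. So $\operatorname{dist}_{G_i}(x,y)\ge b-a$, which gives $b-a\le\diam(G_i)$.

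The main subtlety is this necessity step. Components may interleave among the bags, so the counting has to be done through the assigned edges rather than by restricting $\Wcal$ to each component. The argument above handles this, since each edge $e_j$ lies in exactly one component and the bound $b-a\le\diam(G_i)$ uses only the path structure of the bags.
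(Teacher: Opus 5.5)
Your proof is correct and complete in both directions. The paper gives no proof of this proposition; it imports the result from Vikas, so there is no in-paper argument to compare against.

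Your sufficiency direction uses the same endpoint-gluing of paths that the paper uses in the proof of \autoref{thm:lin_forest_hard}. Your necessity argument charges each path edge $j$ to the component containing a chosen edge $e_j$. It then bounds $|J_i|\le b-a\le\diam(G_i)$, using that edges of $G$ only join equal or consecutive bags, and this correctly handles components whose bag intervals overlap.

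One cosmetic point: BFS layers need not induce connected subgraphs, so ``contracting each layer'' should read ``identifying each layer.'' Identification is all a witness structure requires.
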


\begin{corollary}\label{cor:paths}
\Hid{$\cal H$} is solvable in $\Ocal(nm)$ time if $\Hcal$ consists of paths or linear forests.
\end{corollary}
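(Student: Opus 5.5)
The plan is to derive the result from Proposition~\ref{prop:path}, handling paths and linear forests separately.

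\emph{Paths.} The class of paths is closed under contraction, so doing extra identifications (at most $n-1$ in total) never hurts. Hence $(G,k)$ is a yes-instance if and only if $G$ can be identified to $P_\ell$ for some $\ell\ge n-k$. If $G$ can be identified to $P_\ell$ with $\ell\ge 2$, contracting an end edge of the path shows it can also be identified to $P_{\ell-1}$; for $\ell=1$ the whole graph becomes a single vertex. By Proposition~\ref{prop:path}, the condition is therefore simply
\[
D:=\sum_{i=1}^s\diam(G_i)\ge n-k-1,
\]
with the convention $\diam$ of an isolated vertex is $0$. We compute all diameters by running a BFS from every vertex, which takes $\Ocal(n(n+m))$ time. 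This is $\Ocal(nm)$ after the usual handling of isolated vertices: they contribute diameter $0$, and we may assume $m\ge n/2$ up to them.

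\emph{Linear forests.} Identifying vertices of different components of $G$ never separates anything, and a linear forest with $c$ components each of $p_j$ vertices can be identified to one path with $\sum p_j$ vertices? No, it cannot: merging two path ends creates a path on $p_1+p_2-1$ vertices. So the count is not preserved, and I will instead argue directly.

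I claim $G$ can be identified to a linear forest on exactly $N$ vertices if and only if $N\le s+D$, where $s$ is the number of components of $G$.
\begin{itemize}
\item[($\Leftarrow$)] Identify each $G_i$ separately to the path $P_{\diam(G_i)+1}$, which is possible by Proposition~\ref{prop:path} applied to the connected graph $G_i$. This produces a linear forest with $s+D$ vertices. We then contract edges, or identify two isolated vertices, to decrease the vertex count one at a time while staying a linear forest. The one exception is when all components are single vertices, where we identify pairs of them.
\item[($\Rightarrow$)] Take a witness structure for a linear forest $F$ with components $F_1,\dots,F_t$. Each $G_i$ maps into a single component of $F$ by Observation~\ref{obs:connect}, because its image is connected. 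So the components of $G$ are partitioned into groups $\mathcal G_j$ identified onto $F_j$. Proposition~\ref{prop:path} then gives $|V(F_j)|-1\le\sum_{G_i\in\mathcal G_j}\diam(G_i)$. Summing over $j$ yields $N-t\le D$, and since $t\le s$ we get $N\le s+D$.
\end{itemize}

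Because the class is contraction-closed, monotonicity in $N$ again holds. The instance is thus a yes-instance if and only if $n-k\le s+D$, or trivially $k\ge n-1$. This is checked within the same BFS time bound.

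The main obstacle is the ($\Rightarrow$) direction for linear forests. There one has to check that each bag $W(x)$ lies entirely inside the preimage of one component, i.e.\ that the grouping of the components of $G$ is well defined. Bags may straddle several components of $G$, so the grouping must be taken as the classes of the relation ``share a bag or are mapped adjacently.'' With that grouping, each class is identified onto one $F_j$, and Proposition~\ref{prop:path} applies class by class.
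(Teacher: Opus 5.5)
Your proof is correct and takes the paper's route: you reduce both cases to Proposition~\ref{prop:path} via the diameter sum $D$, your linear-forest criterion $n-k\le s+D$ is exactly the paper's choice of target forest with components $P_{\diam(G_i)+1}$, and you compute all diameters by BFS. Your $(\Rightarrow)$ grouping argument via Observation~\ref{obs:connect} supplies the justification the paper leaves implicit. The worry in your last paragraph is unnecessary: each component of $G$ already lands in a single component of $F$, and this fixes the grouping even when bags straddle several components of $G$.
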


\begin{proof}
If $\Hcal$ is the class of paths, then \autoref{prop:path} implies that $(G,k)$ is a \yes-instance of the problem if and only if $G$ 
can be identified, with at most $k$ identifications, to $P_\ell$ for $\ell=\sum_{i=1}^s\diam(G_i)+1$, where $G_1,\ldots,G_s$ are connected components of $G$. The latter condition holds if and only if $k\geq n-\ell$. Because the connected components can be found in $\Ocal(n+m)$ and their diameters can be computed in time $\Ocal(nm)$ by the standard breadth-first search algorithm, the claim follows.
If $\Hcal$ is the class of linear forests, then by \autoref{prop:path},
$(G,k)$, where $G$ has connected components $G_1,\ldots,G_s$, is a \yes-instance if and only if $G$ can be identified with at most $k$ identifications to the linear forest 
with connected components $P_{\ell_1},\ldots,P_{\ell_s}$, where $\ell_i=\diam(G_i)+1$ for $i\in[s]$. This concludes the proof. 
\end{proof}

\section{Identification to Acyclic Graphs}\label{sec:acyclic}
In this section, we consider \Hid{$\Hcal$} for the cases when $\Hcal$ is a class of stars, trees, or forests. We refer to those problems as \kidH{Star}, \kidH{Tree}, and \kidH{Forest}, respectively.

\subsection{Algorithmic Lower Bounds}
We start with establishing hardness results for  \Hid{$\Hcal$} where $\Hcal$ is a class of acyclic graphs.

\begin{theorem}\label{th:NPtree}
Let $\Hcal$ be the class of stars, trees, or forests.
Then \Hid{$\Hcal$} is \NP-complete and \W{1}-hard parameterized by $n-k$.
\end{theorem}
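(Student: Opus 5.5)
Membership in \NP\ is immediate: the at most $k$ identifications form a certificate, and we can check in polynomial time that the resulting graph is a star, tree or forest. For hardness we use the observation from the introduction. A graph $G$ without isolated vertices can be identified to $K_{1,\ell}$ if and only if $G$ has an independent set of size at least $\ell$. Indeed, the leaf bags are pairwise non-adjacent. By Lemma~\ref{lem:singletons} we may take them to be singletons, so they give an independent set. Conversely, given an independent set $I$ of size $\ell$ in a graph without isolated vertices, we put all vertices outside $I$ into the center bag. That bag is adjacent to every vertex of $I$, because each vertex of $I$ has a neighbor and all its neighbors lie outside $I$. We must additionally ensure that $V(G)\setminus I$ is non-empty, which holds since $G$ has an edge. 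So the plan is a parameterized reduction from \textsc{Independent Set} parameterized by solution size $\ell$, which is \W{1}-complete~\cite{DF95} and \NP-complete. The output instance is $(G',k)$ with $k=n'-(\ell+1)$, so that $n'-k=\ell+1$.

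The main obstacle is that a solution for trees or forests need not be a star. We have to force every graph with $\ell+1$ vertices reachable from $G'$ to contain at least $\ell$ pairwise non-adjacent bags that come from an independent set of $G$. I would first make the target unambiguous. Given $(G,\ell)$ with $G$ having no isolated vertices (such vertices can be removed while decreasing $\ell$), we build $G'$ by adding a universal vertex $u$ adjacent to all of $V(G)$. Then $G'$ is connected, and the bag containing $u$ is adjacent to every other bag. The quotient graph therefore has a dominating vertex. A forest with a dominating vertex is a star, so for all three classes the solution graph, whose connected components all arise from identification, is a star.

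We use that all three classes are closed under contraction, as noted in the introduction. Hence we may assume exactly $k$ identifications, so the result has exactly $\ell+1$ vertices. If $\ell+1\ge 3$, the center of the star is unique and is the bag $W(c)$ containing $u$. Otherwise $u$ lies in a leaf bag, and then that leaf bag is adjacent to all other bags, which forces $\ell\le 1$, a trivial case. The remaining $\ell$ bags are pairwise non-adjacent and avoid $u$. Picking one vertex from each gives an independent set of size $\ell$ in $G$. Conversely, an independent set $I$ of size $\ell$ yields the star structure with center bag $V(G')\setminus I$, which contains $u$.

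The parameter $n'-k=\ell+1$ depends only on $\ell$, and the construction takes polynomial time. This gives \NP-hardness and \W{1}-hardness for all three classes simultaneously. The only delicate point is the small-$\ell$ boundary case and the possibility that $u$'s bag is a leaf, which I would dispose of as indicated above.
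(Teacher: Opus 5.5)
Your proof is correct and is essentially the paper's reduction. Both add a universal vertex $u$ to the \textsc{Independent Set} instance and set $k=n'-(\ell+1)$, which forces the target to be a star whose leaf bags yield the independent set. The paper forces the star via connectivity and diameter at most $2$ instead of your dominating-vertex argument, and it needs neither your isolated-vertex preprocessing nor the reduction to exactly $k$ identifications; these differences are cosmetic.
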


\begin{proof}
We reduce from {\sc Independent Set}. The \textsc{Independent Set} problem asks, given a graph $G$ and a positive integer $p$, whether $G$ has an independent set of size at least $p$. 
\textsc{Independent Set} is well-known to be \NP-complete~\cite{GareyJ79} and \W{1}-complete when parameterized by the solution size $p$~\cite{DowneyF13}.
Let $(G,p)$ be an instance of {\sc Independent Set} for $p\geq 2$.
Denote by  $G'$ the graph obtained from $G$ by adding a universal vertex $u$ (cf. \autoref{fig:NPtree}). We claim that $G$ has an independent set $I$ of size $p$ if and only if 
$(G',n-(p+1))$ is a \yes-instance of \Hid{$\Hcal$}.

\begin{figure}[ht]
\center
\includegraphics[scale=0.8]{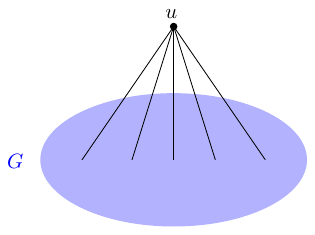}
\caption{The graph $G'$ constructed in the proof of \autoref{th:NPtree}.}
\label{fig:NPtree}
\end{figure}

Suppose that $(G,p)$ is a \yes-instance of {\sc Independent Set}.
Let $I$ be an independent set in $G$ of size~$p$.
Then, by identifying all vertices in $G'-I$ to a single vertex $u'$, we obtain a star on $p+1$ vertices.
So, $(G',n-(p+1))$ is a \yes-instance of \kidH{$\Hcal$} when $\Hcal$ is the class of stars, trees, or forests.

Suppose now that $(G',n-(p+1))$ is a \yes-instance of
\Hid{Forest}. 
Let $F$ be a forest obtained by doing at most $n-(p+1)$ identifications. Clearly, $F$ has at least $p+1$ vertices. Given that $G'$ is a connected graph of diameter at most~$2$, so is $F$ by~\autoref{obs:connect} and
the fact that the diameter of a graph cannot increase after an identification. Thus, $F$ is a star with at least $p$ leaves.
Thus, $(G',n-(p+1))$ is a \yes-instance of \kidH{Forest} if and only if it is a \yes-instance 
of \kidH{Star} (\kidH{Tree}).  
Let $\Wcal$ be an $F$-witness structure of $G$.
Then, by picking one vertex from the bag of each leaf, we obtain an independent set $I$ of size at least $p$. Note that because $u$ is universal, it does not belong to any independent set of size at least $p\geq 2$. Therefore, $I$ is an independent set in $G$.  
So, $(G,p)$ is a \yes-instance of {\sc Independent Set}. This concludes the proof.
\end{proof}

Since a star with $n$ vertices is unique and has $n-1$ leaves when $n\geq 3$, \autoref{th:NPtree} immediately implies the following algorithmic lower bound where the target is given as a part of the input.

\begin{corollary}\label{cor:given_forest}   
\id is \W{1}-hard on instances $(G,F)$ where $F$ is a forest, when parameterized by the number of leaves of $F$.
\end{corollary}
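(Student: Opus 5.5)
The plan is a parameterized reduction from \textsc{Independent Set} parameterized by $p$, reusing the construction in the proof of \autoref{th:NPtree}. Let $(G,p)$ be an instance with $p\geq 2$, and let $G'$ be $G$ plus a universal vertex $u$. We output the \id instance $(G',F)$, where $F=K_{1,p}$ is the star on $p+1$ vertices. This star is a forest (indeed a tree) with exactly $p$ leaves, so the parameter of the new instance equals $p$ and the reduction is polynomial. To apply the reduction to instances with $p\le 1$, which are trivial, we first solve them directly and map them to a fixed yes- or no-instance.

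For correctness we show that $G$ has an independent set of size $p$ if and only if $G'$ can be identified to $K_{1,p}$.

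For the forward direction, take an independent set $I$ of size $p$. Use the singleton bags $\{v\}$ for $v\in I$, and put all of $V(G')\setminus I$, which contains $u$, into the center bag. Each leaf bag is adjacent to the center bag because $u$ is universal, and distinct leaf bags are non-adjacent because $I$ is independent. So conditions (i) and (ii) hold.

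For the backward direction, take a $K_{1,p}$-witness structure of $G'$ and pick one vertex from each of the $p$ leaf bags. Since leaf bags are pairwise non-adjacent, these vertices form an independent set of size $p$ in $G'$. Because $p\geq 2$ and $u$ is adjacent to everything, $u$ cannot lie in an independent set of that size, so the set lies in $G$.

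This is essentially the argument already given in \autoref{th:NPtree}, specialized to a fixed target. The only point needing care is that the target is now part of the input. We must therefore fix $F$ to be exactly the star with $p$ leaves, using that the star on $p+1\geq 3$ vertices is unique and has $p$ leaves, rather than quantifying over all forests. With that choice the parameter is bounded by a function of $p$, and the stated \W{1}-hardness follows from the \W{1}-completeness of \textsc{Independent Set}. I do not expect a real obstacle beyond checking this parameter bound.
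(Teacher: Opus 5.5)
Your proposal is correct and follows the same route as the paper. The paper derives the corollary directly from the universal-vertex reduction from \textsc{Independent Set} in \autoref{th:NPtree}, using that the star on $p+1\geq 3$ vertices is unique and has exactly $p$ leaves. You re-verify that equivalence with the fixed target $K_{1,p}$ instead of citing the theorem.
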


We also observe that, while \id can be solved in $\Ocal(nm)$ time on instances $(G,P)$ where $P$ is a path by~\autoref{prop:path}, the problem becomes hard when the target graph is a linear forest.

\begin{theorem} \label{thm:lin_forest_hard}
\id is \W{1}-hard on instances $(G,L)$ where $L$ is a linear forest, when parameterized by the number of connected components of $L$.  
\end{theorem}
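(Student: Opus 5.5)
We reduce from \textsc{Unary Bin Packing} parameterized by the number of bins. This problem is known to be \W{1}-hard (Jansen, Kratsch, Marx and Schlotter), and it remains hard when the item sizes sum exactly to the total capacity. The reduction has two ingredients: \autoref{prop:path}, and the observation that connected components of $G$ cannot be split between components of the target.

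\textbf{The reduction.} An instance consists of positive integers $s_1,\ldots,s_n$ encoded in unary, a number of bins $p$ and a capacity $B$. We may assume $\sum_{j} s_j = pB$; otherwise pad with items of size~$1$, which keeps the instance equivalent and polynomial. Construct the following pair $(G,L)$.
\begin{itemize}
\item $G$ is the disjoint union of paths $Q_1,\ldots,Q_n$, where $Q_j$ has $s_j$ edges, so $\diam(Q_j)=s_j$.
\item $L$ is the disjoint union of $p$ copies of $P_{B+1}$.
\end{itemize}
Because the sizes are in unary, the construction is polynomial. The parameter, the number of connected components of $L$, equals $p$.

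\textbf{Correctness, packing to identification.} Suppose $I_1,\ldots,I_p$ is a packing. Since the sizes sum to $pB$, every bin is filled exactly. For each $i$, the graph $\bigcup_{j\in I_i} Q_j$ has diameter sum $B = (B+1)-1$. By \autoref{prop:path} it can therefore be identified to $P_{B+1}$. Taking the union of these $p$ witness structures gives an $L$-witness structure of $G$: different groups contain different components of $G$, so their bags are non-adjacent, as required between different components of $L$.

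\textbf{Correctness, identification to packing.} Take an $L$-witness structure $\Wcal$ and let $L_1,\ldots,L_p$ be the components of $L$. This direction requires a small structural argument. For each component $Q_j$ of $G$, its vertices lie in bags of some components of $L$. If $Q_j$ met bags of two different components $L_a$ and $L_b$, then some edge of $Q_j$ would join a bag of $L_a$ to a bag of $L_b$. That would force an edge between $L_a$ and $L_b$ in $L$, which is impossible. Hence each $Q_j$ lies entirely within the bags of a single component $L_i$. This defines $I_i=\{j : Q_j \subseteq \bigcup_{x\in V(L_i)}W(x)\}$.

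Next, restricting $\Wcal$ to $V(L_i)$ gives a $P_{B+1}$-witness structure of $\bigcup_{j\in I_i}Q_j$. By \autoref{prop:path}, $\sum_{j\in I_i}s_j \geq B$ for every $i$. Summing over $i$ and using $\sum_j s_j = pB$ forces $\sum_{j\in I_i}s_j = B$ for all $i$. So $I_1,\ldots,I_p$ is a valid packing.

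\textbf{Main obstacle.} The step needing the most care is choosing the right source problem. \textsc{Independent Set}-style reductions do not help here, because a connected input graph can only identify to a connected graph. Once the problem is recast as grouping components by their diameter sums, \textsc{Unary Bin Packing} fits naturally, and the equality $\sum_j s_j = pB$ turns the inequalities from \autoref{prop:path} into exact bin loads.
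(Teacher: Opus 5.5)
Your proposal is correct and follows essentially the same route as the paper: the same reduction from unary \textsc{Bin Packing} with exactly filled bins (paths of lengths $s_j$ mapped to $p$ paths of length $B$), concatenating paths for the forward direction, and connectivity to assign whole components of $G$ to components of $L$ for the converse. Your backward direction is slightly more explicit than the paper's: you invoke \autoref{prop:path} to get $\sum_{j\in I_i}s_j\ge B$ for each bin and then use $\sum_j s_j=pB$ to force equality, whereas the paper only appeals informally to comparing total lengths.
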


\begin{proof}
We reduce from the \textsc{Bin Packing} problem. In this problem, the input contains a set 
of $n$ items of positive integer sizes $s_1,\ldots,s_n$ and positive integers $k$ and $B$, and the task is to decide whether there is a partition $I_1,\ldots,I_k$ of $[n]$ such that
for every $i\in[k]$, $\sum_{j\in I_i}s_i\leq B$. \textsc{Bin Packing} parameterized by $k$ is known to be \W{1}-hard when item's sizes are encoded in unary~\cite{JansenKMS13}. Furthermore, the result holds when $\sum_{i=1}^ns_i=kB$, that is, the capacity of each bin should be fully used. We consider such an instance of \textsc{Bin Packing} and construct the graphs $G$ and $L$:
\begin{itemize}
\item $G$ is the disjoint union of $n$ paths $P_1,\ldots,P_n$ of length $s_1,\ldots,s_n$, respectively,
\item $L$ is the disjoint union of $k$ paths $Q_1,\ldots,Q_k$ of length $B$.
\end{itemize}
We claim that $G$ can be identified to $L$ if and only if the considered instance of \textsc{Bin Packing} is a \yes-instance.

Suppose that there is a partition $I_1,\ldots,I_k$ of $[n]$ such that
for every $i\in[k]$, $\sum_{j\in I_i}s_i=B$. For each $i\in[k]$, we consider the paths $P_j$ with $j\in I_i$ and consequently identify their endpoints to obtain a single path. Note that this is a path of length $B$. Thus, $L$ is an identification of $G$. For the opposite direction, assume that $G$ can be identified to $L$. For each $i\in[k]$, let $I_i$ be the set of indices of the connected components of $G$ whose vertices are used to obtain $Q_i$. Notice that by~\autoref{obs:connect}, the vertices of the same connected component cannot be used in distinct $Q_i$ and $Q_j$. Then, because the total length of paths $P_1,\ldots,P_n$ is the same as the total length of $Q_1,\ldots,Q_k$, we obtain that the total length of the paths $P_j$ with $j\in I_i$ is exactly $B$ for each $i\in[k]$. Thus, for every $i\in[k]$, $\sum_{j\in I_k}s_j=B$, and the partition $I_1,\ldots,I_k$ is a solution to the instance of \textsc{Bin Packing}. This concludes the proof.
\end{proof}

\subsection{Kernelization for Identification to Acyclic Graphs}
In this section, we investigate the parameterized complexity of \Hid{$\Hcal$} parameterized by $k$ for the cases when $\Hcal$ is the class of stars, trees, or forests. 
We show that in all these cases, \Hid{$\Hcal$} admits a polynomial kernel when parameterized by $k$.\medskip

For stars, we observe that the problem is equivalent to \textsc{Vertex Cover}.  
We remind that the task of the \textsc{Vertex Cover} problem is to decide, given a graph $G$ and a non-negative integer $k$, whether $G$ has a vertex cover of size at most $k$.

\begin{observation}\label{obs:vc}
Let $G$ be a graph 
with $p$ isolated vertices, and let $k\geq 0$ be an integer. Then $G$ has a vertex cover of size at most $k$ if and only if $G$ admits an identification to  a star by at most $p+k-1$ identifications.
\end{observation}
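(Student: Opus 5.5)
The plan is to prove both directions directly via witness structures, with \autoref{lem:singletons} doing the work in the backward direction. Let $I$ be the set of the $p$ isolated vertices of $G$ and $n=|V(G)|$. Since each identification removes one vertex, a star $S$ is obtained by at most $p+k-1$ identifications exactly when $|V(S)|\geq n-p-k+1$.

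For the forward direction, take a vertex cover $C$ of $G$ with $|C|\leq k$ and identify all vertices of $C\cup I$ into a single vertex $w$. This uses $|C\cup I|-1\leq p+k-1$ identifications.
\begin{itemize}
\item The vertices of $V(G)\setminus (C\cup I)$ remain pairwise non-adjacent because $C$ is a vertex cover.
\item Each of them is non-isolated, so it has a neighbor, and that neighbor lies in $C$; hence it becomes adjacent to $w$.
\end{itemize}
Thus the result is a star with center $w$. The degenerate case $C\cup I=\emptyset$ forces $G$ to have no edges and no isolated vertices, so $G$ is the empty graph and there is nothing to show.

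For the backward direction, let $S$ be a star with $|V(S)|\geq n-p-k+1$ and let $\Wcal$ be an $S$-witness structure of $G$. We first choose a center $c$ and a set $U$ of leaves.
\begin{itemize}
\item If $|V(S)|\geq 3$, let $c$ be the genuine center and $U$ the set of all leaves; these are pairwise non-adjacent and of degree~$1$.
\item If $S=K_2$, let $c$ be one endpoint and $U$ the other.
\end{itemize}
Apply \autoref{lem:singletons} to $U$ to obtain a witness structure, still denoted $\Wcal$, in which every bag $W(x)$ with $x\in U$ is a single vertex.
\begin{itemize}
\item Distinct leaf bags are non-adjacent, so $V(G)\setminus W(c)$ is an independent set. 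Hence $W(c)$ is a vertex cover.
\item Each singleton leaf bag must be adjacent to $W(c)$, so it is not an isolated vertex. Therefore $I\subseteq W(c)$.
\end{itemize}
Set $C=W(c)\setminus I$. Since isolated vertices cover no edges, $C$ is still a vertex cover, and
\[|C|=n-(|V(S)|-1)-p\leq k.\]
If $S=K_1$, take $C=V(G)\setminus I$ directly. Here $n-1\leq p+k-1$ gives $|C|=n-p\leq k$, and $C$ trivially covers all edges.

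The only delicate point is that a leaf bag could a priori contain internal edges, or contain isolated vertices of $G$. \autoref{lem:singletons} removes this issue by making leaf bags singletons, which in turn forces all isolated vertices into the center bag. What remains is the small-star edge cases ($S\in\{K_1,K_2\}$), where the leaves are not pairwise non-adjacent or there are none; these are handled by the choices above.
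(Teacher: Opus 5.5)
Your proof is correct and follows essentially the same route as the paper: in the forward direction you identify the vertex cover together with the isolated vertices into a single center, and in the backward direction you use \autoref{lem:singletons} to make the leaf bags singletons, so that the center bag minus the isolated vertices is a vertex cover of size at most $k$. Your explicit treatment of the degenerate stars $K_1$ and $K_2$ is a bit more careful than the paper, which handles these cases only implicitly by fixing a center $c$ and making all other bags singletons.
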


\begin{proof}
The claim is trivial if $G$ has no edges. Assume from now on that this is not the case. 

Let $X$ be a vertex cover of size at most $k$. As $G$ has an edge, $X\neq\emptyset$.
We identify the vertices of $X$ into a single vertex, and then identify this vertex with the $p$ isolated vertices of the original graph. It is straightforward to see that the resulting graph is a star obtained by $|X|-1+p\leq p+k-1$ identifications.

For the opposite direction, assume that a star $S$ is an identification of $G$ obtained by at most $p+k-1$ identifications. Denote by $c$ the center vertex of $S$. Let $\Wcal=\{W(x)\mid x\in V(S)\}$ be an $S$-witness structure. By~\autoref{lem:singletons}, we can assume that for every $v\in V(S)$ distinct from $c$, $|W(v)|=1$. Then $|W(c)|\leq p+k$ and the isolated vertices of $G$ are in $W(c)$. Let $X\subseteq W(c)$ be the set of non-isolated vertices of $W(c)$. We have that $|X|\leq k$, and because for each leaf $v$ of $S$, its bag $W(v)$ is a singleton, $X$ is a vertex cover of $G$. This completes the proof.  
\end{proof}

The equivalence between \textsc{Vertex Cover} and  \kidH{Star},
established in~\autoref{obs:vc}, immediately implies that all \FPT and kernelization results for \textsc{Vertex Cover} (see~\cite{CyganFKLMPPS15,DowneyF13}) can be rewritten for \kidH{Star} parameterized by $k$. We leave this to the reader and consider more interesting cases of \Hid{$\Hcal$} when $\Hcal$ is either the class of forests or the class of trees. For this, we use the following lemma. To state it, denote by ${\sf Id}_F(G)$ the minimum number of identifications needed to obtain a forest from $G$. 

\begin{lemma}\label{lem:cut_comp}
Let $G$ be a graph, and let $(A,B)$ be a separation of $G$ of order at most one.
Suppose that $F$ is a forest such $F$ can be obtained from $G$ using ${\sf Id}_F(G)$ identifications. 
Then for any $F$-witness structure $\Wcal=\{W(x)\mid  x\in V(F)\}$, 
there are exactly $|A\cap B|$ vertices $x\in V(F)$ with $W(x)\cap A\neq\emptyset$ and $W(x)\cap B\neq\emptyset$. Furthermore, $A\cap B\subseteq W(x)$ if $A\cap B\neq\emptyset$. 
\end{lemma}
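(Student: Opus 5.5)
The plan is an exchange argument. Fix an $F$-witness structure $\Wcal=\{W(x)\mid x\in V(F)\}$. Since every identification removes exactly one vertex, $F$ has exactly $n-{\sf Id}_F(G)$ vertices, and no forest obtainable from $G$ has more vertices than $F$. So it suffices to refine $\Wcal$ into a witness structure of some forest $F'$ with
\[
|V(F')|\geq |V(F)|+t,
\]
where $t$ is the number of \emph{extra} bags meeting both $A$ and $B$ (extra meaning beyond the $|A\cap B|$ bags that are allowed). Optimality then forces $t=0$.

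\textbf{Basic tool.} For a set $S\subseteq V(G)$, let $F_S$ be the graph with vertex set $\{x\in V(F)\mid W(x)\cap S\neq\emptyset\}$. Put $xy\in E(F_S)$ exactly when $W(x)\cap S$ and $W(y)\cap S$ are adjacent in $G$. Then $\{W(x)\cap S\}$ is an $F_S$-witness structure of $G[S]$. Moreover, adjacency of the restricted bags implies adjacency of the full bags, so $F_S$ is a subgraph of $F$ and hence a forest. I will apply this with $S=A$ and $S=B$. Since $(A,B)$ is a separation, every edge of $G$ lies inside $G[A]$ or inside $G[B]$.

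\textbf{Case $A\cap B=\emptyset$.} Here $G$ is the disjoint union of $G[A]$ and $G[B]$. The bags $\{W(x)\cap A\}\cup\{W(x)\cap B\}$, keeping only the nonempty ones, form a partition of $V(G)$. They give a witness structure of the disjoint union $F_A\cup F_B$. This holds because no edge joins $A$ to $B$, so bags on different sides are non-adjacent. The graph $F_A\cup F_B$ is a forest with $|V(F)|+t$ vertices, where $t$ is the number of bags meeting both sides. Optimality gives $t=0$, as required.

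\textbf{Case $A\cap B=\{c\}$.} Let $x_0$ be the vertex of $F$ with $c\in W(x_0)$. Take $F'$ to be $F_A$ and $F_B$ glued at their common vertex $x_0$, which lies in both because $c\in A\cap B$. Its bags are $W(x)\cap A$ and $W(x)\cap B$ for $x\neq x_0$, together with the single bag $W(x_0)$.

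The key step is verifying that this is a valid $F'$-witness structure. Every edge of $G$ lies within $A$ or within $B$. The only bag meeting both $A\setminus B$ and $B\setminus A$ is $W(x_0)$, and it is adjacent to precisely the bags adjacent to $W(x_0)\cap A$ in $F_A$ or to $W(x_0)\cap B$ in $F_B$. Gluing two forests at one vertex creates no cycle, so $F'$ is a forest.

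Counting vertices, $|V(F')|=|V(F_A)|+|V(F_B)|-1=|V(F)|+t$, where $t$ is the number of bags other than $W(x_0)$ meeting both $A$ and $B$. Optimality forces $t=0$. Hence $x_0$ is the unique vertex whose bag meets both sides, and $A\cap B\subseteq W(x_0)$.

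\textbf{Main obstacle.} I expect the only delicate point to be the witness-structure verification in the second case, especially the adjacency condition (ii) for the glued bag $W(x_0)$ and the handling of bags that become empty after restriction. Both are settled by the observation that no edge crosses from $A\setminus B$ to $B\setminus A$.
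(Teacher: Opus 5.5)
Your proposal is correct and follows essentially the same route as the paper: restrict the bags to $A$ and to $B$ to obtain forests $F_A,F_B$ that are subgraphs of $F$, then take their disjoint union (order zero) or glue them at the vertex whose bag contains the cut vertex (order one), and invoke minimality of the number of identifications. The only difference is presentational: you count vertices directly via $|V(F')|=|V(F)|+t$, whereas the paper argues by contradiction by comparing numbers of identifications. These are equivalent, since a forest reached by $j$ identifications has $n-j$ vertices.
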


\begin{proof}
Let  $\Wcal=\{W(x)\mid  x\in V(F)\}$ be an arbitrary $F$-witness structure. 
Clearly, if $A\cap B\neq\emptyset$, that is, $A\cap B=\{w\}$ for some $w\in V(G)$, then there is $x\in V(F)$ such that $A\cap B\subseteq W(x)$. Therefore, it is sufficient to show that 
there are at most $|A\cap B|$ vertices $x\in V(F)$ with $W(x)\cap A\neq\emptyset$ and $W(x)\cap B\neq\emptyset$.

For the sake of contradiction, assume that there is $y\in V(F)$ such that $W(y)$ contains vertices of both $A$ and $B$ and the unique vertex of $A\cap B$ is not in $W(y)$ if $A\cap B\neq\emptyset$. Let $X_1=\{x\in V(F)\mid W(x)\cap A\neq\emptyset\}$ and 
$X_2=\{x\in V(F)\mid W(x)\cap B\neq\emptyset\}$. 
Consider $\Wcal_1=\{W_1(x)=W(x)\cap A\mid x\in X_1\}$ and $\Wcal_2=\{W_2(x)=W(x)\cap B\mid x\in X_2\}$.
Note that, by the construction, the sets of $W_i(y)$ for $i\in[2]$ are disjoint. Furthermore, there are $|A\cap B|$ sets in $\Wcal_1$ and $\Wcal_2$ containing common vertices, and if 
$A\cap B\neq\emptyset$ then the unique vertex of $A\cap B$ is this common vertex. 
For $i\in[2]$, denote by $F_i$ the graph with the vertex set $X_i$ such that distinct $x_1,x_2\in X_i$ are adjacent if and only if $W_i(x_1)$ and $W_i(x_2)$ are adjacent in $G$. By the definition, $F_1$ and $F_2$ are subgraphs of $F$ and, therefore, are forests.

Suppose that $A\cap B=\emptyset$. Then the disjoint union of copies of $F_1$ and $F_2$ is an identification of $G$, since $G$ has no edges with their endpoints in $A$ and $B$. 
Also, we have that $\Wcal_1\cup \Wcal_2$ is the set of bags in the corresponding witness structure.
This is a forest obtained by 
$\sum_{x\in X_1}(|W_1(x)|-1)+\sum_{x\in X_2}(|W_2(x)|-1)$ identifications. However,
$\sum_{x\in X_1}(|W_1(x)|-1)+\sum_{x\in X_2}(|W_2(x)|-1)<\sum_{x\in V(F)}(|W(x)|-1)$ because the bag $W(y)$ contains vertices of both $A$ and $B$. This contradicts the assumption that the forest $F$ is an identification of $G$ requiring the minimum number of identifications. 

Assume that $A\cap B=\{w\}$ for $w\in V(G)$. Let $x_1\in X_1$ and $x_2\in X_2$ be the vertices of $F_1$ and $F_2$, respectively, such that $w\in W_1(x_1)$ and $w\in W_2(x_2)$. Consider the forest $F'$ obtained from the disjoint copies of $F_1$ and $F_2$ by identifying the copies of $x_1$ and $x_2$. Because $G$ has no edges with their endpoints in $A\setminus\{w\}$ and $B\setminus\{w\}$, $F'$ is the identification of $G$ with witness structure 
$(\Wcal_1\setminus \{W_1(x_1)\})\cup(\Wcal_2\setminus\{W_2(x_2)\}\cup\{W_1(x_1)\cup W_2(x_2)\}$.
The number of identifications is
$\sum_{x\in X_1\setminus\{x_1\}}(|W_1(x)|-1)+\sum_{x\in X_2\setminus\{x_2\}}(|W_2(x)|-1)+
|W_1(x_1)|+|W_2(x_2)|-1=
\sum_{x\in X_1\setminus\{x_1\}}(|W_1(x)|-1)+\sum_{x\in X_2\setminus\{x_2\}}(|W_2(x)|-1)+|W(z)|-1
$, where $z\in V(F)$ is a vertex of $F$ such that $w\in W(z)$. Because $y\neq z$ and $W(y)\neq W(z)$ contains vertices of both $A\setminus\{w\}$ and $B\setminus\{w\}$,    
$\sum_{x\in X_1\setminus\{x_1\}}(|W_1(x)|-1)+\sum_{x\in X_2\setminus\{x_2\}}(|W_2(x)|-1)+|W(z)|-1<\sum_{x\in V(F)}(|W(x)|-1)$ contradicting the choice of $F$. 

In both cases, we got a contradiction. This proves the lemma.
\end{proof}

In particular, \autoref{lem:cut_comp} immediately implies the following two properties. 

\begin{lemma}\label{lem:sep_one}
Let $G$ be a graph, and let $(A,B)$ be a separation of $G$ of order one. Let also $k\geq 0$
be an integer. Then the instances $(G,k)$ and $(G',k)$ of \kidH{Forest} are equivalent, where $G'$ is the disjoint union of the copies of $G[A]$ and $G[B]$. 
 \end{lemma}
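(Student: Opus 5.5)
We show ${\sf Id}_F(G)={\sf Id}_F(G')$; then $(G,k)$ is a \yes-instance of \kidH{Forest} if and only if ${\sf Id}_F(G)\le k$, and likewise for $G'$. This suffices because forests are closed under identification of two vertices in distinct components, so doing more identifications than necessary is harmless. Let $A\cap B=\{w\}$. In $G'$ the vertex $w$ has two copies, $w_A$ in the copy of $G[A]$ and $w_B$ in the copy of $G[B]$.

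\emph{Step 1: ${\sf Id}_F(G)\le {\sf Id}_F(G')+1$.} Take an optimal forest $F'$ for $G'$ with witness structure $\Wcal'$. Map $V(G')\to V(G)$ by sending $w_A,w_B\mapsto w$ and all other vertices to themselves. Under this map the bags containing $w_A$ and $w_B$ merge into a single bag, which amounts to identifying two vertices of $F'$.
\begin{itemize}
\item If these two bags lie in different components of $F'$, the result is still a forest. It is obtained from $G$ with one fewer identification than $G'$ needed, so in fact ${\sf Id}_F(G)\le {\sf Id}_F(G')$.
\item If the two bags are the same bag, the count drops by one.
\item The remaining danger is that the two bags are distinct but in the same component of $F'$, so merging would create a cycle.
\end{itemize}
In that last case, first separate the two sides as in \autoref{lem:cut_comp}. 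Optimality of $F'$ and the fact that $G'$ is disconnected between its $A$-part and $B$-part let us assume that no bag of $\Wcal'$ meets both parts. If some bag did, we could split it as in the $A\cap B=\emptyset$ case of \autoref{lem:cut_comp} and strictly decrease the count, contradicting optimality. Then $F'$ is a disjoint union of $F'_A$ and $F'_B$. Re-pair its components so that the bag of $w_A$ and the bag of $w_B$ lie in different components, simply by taking the union of the two forests as separate pieces; this is legitimate because no edges run between the parts. Merging then gives a forest, and we obtain ${\sf Id}_F(G)\le{\sf Id}_F(G')$.

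\emph{Step 2: ${\sf Id}_F(G')\le {\sf Id}_F(G)$.} Take an optimal $F$ for $G$ with witness structure $\Wcal$. By \autoref{lem:cut_comp}, exactly one bag $W(z)$ meets both $A$ and $B$, and it contains $w$. Every other bag lies in $A\setminus\{w\}$ or in $B\setminus\{w\}$. Split $W(z)$ into $W(z)\cap A$, with $w$ replaced by $w_A$, and $W(z)\cap B$, with $w$ replaced by $w_B$. This uses exactly the construction from the proof of \autoref{lem:cut_comp}, run in reverse.

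The resulting structure is a witness structure for $G'$ onto the graph $F_1\sqcup F_2$, where $F_1$ and $F_2$ are the subgraphs of $F$ induced on bags meeting $A$ and $B$ respectively. Both are subgraphs of $F$ and hence forests. The bag contents satisfy
$$\bigl(|W(z)\cap A|-1\bigr)+\bigl(|W(z)\cap B|-1\bigr)=|W(z)|-1,$$
since $|W(z)\cap A|+|W(z)\cap B|=|W(z)|+1$. So the number of identifications is unchanged.

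\emph{Main obstacle.} The delicate point is Step 1, namely making sure merging $w_A$ and $w_B$ creates no cycle. Combining Step 2 with the decomposition of optimal solutions of $G'$ into an $A$-side forest and a $B$-side forest handles it. I expect most of the care to go into the bookkeeping of identification counts and into the claim that optimal solutions of a disconnected graph may be assumed not to mix the two parts; that claim is exactly the order-zero case of \autoref{lem:cut_comp}.
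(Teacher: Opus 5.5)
Your proof is correct and follows essentially the route the paper intends when it says the lemma is immediate from \autoref{lem:cut_comp}. For one direction you split the unique bag containing $w$; for the other you use the order-zero case of \autoref{lem:cut_comp} (equivalently \autoref{lem:union}) to write an optimal solution for $G'$ as $F'_A\sqcup F'_B$ and glue the bags of $w_A$ and $w_B$. That order-zero fact already puts those two bags in different components, so your second and third cases in Step~1 never arise and the ``re-pairing'' step is unnecessary.
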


\begin{lemma}\label{lem:union}
Let $G$ be a graph with connected components $G_1,\ldots,G_s$. Then 
${\sf Id}_F(G)={\sf Id}_F(G_1)+\dots+{\sf Id}_F(G_s)$.
\end{lemma}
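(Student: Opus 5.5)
We prove the equality by two inequalities, using induction on the number $s$ of connected components. The case $s=1$ is trivial. For $s\geq 2$, let $A=V(G_1)$ and $B=V(G)\setminus V(G_1)$. Then $(A,B)$ is a separation of $G$ of order zero, since no edge joins $G_1$ to the other components. Hence it suffices to show ${\sf Id}_F(G)={\sf Id}_F(G[A])+{\sf Id}_F(G[B])$ and then apply induction to $G[B]$, whose components are $G_2,\ldots,G_s$.

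\textbf{Upper bound.} Let $F_A$ be a forest obtained from $G[A]$ with ${\sf Id}_F(G[A])$ identifications, and define $F_B$ analogously for $G[B]$. Performing both sequences of identifications on $G$ yields the disjoint union of $F_A$ and $F_B$. This is a forest, and the identifications in $A$ never interact with those in $B$, because no edges join $A$ and $B$. Therefore ${\sf Id}_F(G)\leq {\sf Id}_F(G[A])+{\sf Id}_F(G[B])$.

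\textbf{Lower bound.} Take a forest $F$ obtained from $G$ with exactly ${\sf Id}_F(G)$ identifications, and let $\Wcal$ be an $F$-witness structure. Since $|A\cap B|=0$, \autoref{lem:cut_comp} says that no bag of $\Wcal$ meets both $A$ and $B$. So $\Wcal$ splits into the bags contained in $A$ and the bags contained in $B$. The bags in $A$ form a witness structure of $G[A]$ for the subgraph $F_1$ of $F$ induced by the corresponding vertices: adjacency between bags inside $A$ is witnessed only by edges of $G[A]$. The subgraph $F_1$ is a forest, because it is an induced subgraph of $F$. The same holds for $B$ with a forest $F_2$. The number of identifications is $\sum_{x}(|W(x)|-1)$, and this sum splits into the sum over bags in $A$ plus the sum over bags in $B$. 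The first part is at least ${\sf Id}_F(G[A])$ and the second at least ${\sf Id}_F(G[B])$, which gives the reverse inequality.

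The only point needing care is the lower bound. There one must use the minimality of $F$ to invoke \autoref{lem:cut_comp}, since a non-optimal witness structure may well have bags mixing components. Once that lemma is applied, both directions are routine bookkeeping.
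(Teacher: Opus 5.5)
Your proof is correct and is essentially the paper's argument: the paper presents this lemma as an immediate consequence of \autoref{lem:cut_comp}, and you make that explicit. You apply it to the order-zero separation $(V(G_1),V(G)\setminus V(G_1))$ for the lower bound, and obtain the upper bound from the disjoint union of optimal forests, with the induction on $s$ serving only to peel off one component at a time.
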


For \kidH{Forest}, we show that the problem admits a polynomial kernel when parameterized by $k$.

\begin{theorem}\label{th:kernel_forest}
\kidH{Forest} admits a  kernel with $\Ocal(k^3)$ vertices 
when parameterized by $k$.
\end{theorem}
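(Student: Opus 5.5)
We would design a kernel by reduction rules that exploit Lemmas~\ref{lem:cut_comp}, \ref{lem:sep_one} and \ref{lem:union}, together with the structure of the bags in a solution.

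\emph{Cleaning rules.} First, delete every connected component that is already a tree; by Lemma~\ref{lem:union} this changes neither ${\sf Id}_F$ nor the answer. Next, apply Lemma~\ref{lem:sep_one} exhaustively: whenever $G$ has a cut vertex $w$ with separation $(A,B)$, replace $G$ by the disjoint union of copies of $G[A]$ and $G[B]$. Each component of the result is then $2$-connected or a single edge. We then again delete tree components, so every remaining component $C$ is $2$-connected and contains a cycle. This step is polynomial and does not change $k$.

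\emph{Bounding the components.} Each remaining component $C$ has ${\sf Id}_F(C)\ge 1$. So by Lemma~\ref{lem:union}, if there are more than $k$ components we may return a trivial \no-instance. It remains to bound the size of a single $2$-connected component $C$ with ${\sf Id}_F(C)\le k$.

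Fix an optimal witness structure of $C$ to a forest $F$. Let $S$ be the union of the nontrivial bags, so $|S|\le 2k$. Every vertex outside $S$ is a singleton bag, and $C-S$ is an induced subgraph of $F$, hence a forest. Thus $S$ is a feedback vertex set of size at most $2k$. Moreover, each tree $T$ of $C-S$ has, for every bag $W(x)\subseteq S$, at most one vertex adjacent to $W(x)$; otherwise $F$ would contain a cycle. Hence each tree of $C-S$ has at most $|S|\le 2k$ vertices with neighbours in $S$.

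With this structure we use the standard feedback-vertex-set kernel ideas:
\begin{itemize}
\item In a $2$-connected graph there are no degree-$1$ vertices. Long induced paths of degree-$2$ vertices in $C-S$ can be shortened, since the identifications of a solution never need to touch the interior of such a path. The rule is to contract a degree-$2$ vertex whose two neighbours also have degree~$2$; this must be justified carefully for identifications, and a path of length $3$ should suffice.
\item Also no vertex of $C-S$ can have two neighbours in the same nontrivial bag beyond the allowed structure.
\end{itemize}
A counting argument then follows. The number of trees of $C-S$ attached to pairs of bags is controlled by the fact that $F$ is a forest: contracting each tree and each bag gives a forest, so the number of trees meeting at least two bags is $\Ocal(k)$. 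Trees meeting only one bag are excluded by $2$-connectivity plus the cut-vertex splitting, since the attachment vertex would be a cut vertex. So there are $\Ocal(k)$ trees, each with $\Ocal(k)$ leaves and branching vertices, giving $\Ocal(k^2)$ vertices per component after degree-$2$ compression. Over at most $k$ components this totals $\Ocal(k^3)$.

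\emph{Main obstacle.} The solution $S$ is unknown, so the rules must be applied without it. We would therefore state the rules purely in terms of $G$: the cut-vertex splitting, tree-component deletion, and degree-$2$ path shortening. We then argue only in the analysis that a reduced \yes-instance has $\Ocal(k^3)$ vertices, and we reject if the size is larger. The delicate step is proving that shortening long degree-$2$ paths is safe for identifications. Such a path might be used to merge across, since identifications of non-adjacent vertices are allowed. We would first try an exchange argument in the style of Lemma~\ref{lem:singletons}: any bag containing an interior path vertex can be modified so that the interior vertices become singletons without increasing the count.
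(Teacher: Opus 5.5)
Your cleaning phase is sound: deleting tree components, splitting at cut vertices via \autoref{lem:sep_one}, and rejecting when more than $k$ cyclic components remain. The bounding phase, however, has two genuine gaps.

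\textbf{First gap: the degree-$2$ compression is unsafe.} It is not merely delicate, and the intuition that a solution ``never needs to touch the interior'' of a long path is backwards. Suppose an edge $uv$ lies on a cycle and both $u$ and $v$ are singleton bags. The image in $F$ of the rest of that cycle is an $x_v$--$x_u$ walk. Since $x_vx_u$ is a bridge of the forest $F$, this walk must use that edge, which forces a second edge of $G$ between $\{v\}$ and $\{u\}$, a contradiction. So in a bridgeless graph the set $S$ of vertices in nontrivial bags is a \emph{vertex cover}, not just a feedback vertex set. Consequently, any path of $L$ degree-$2$ vertices forces roughly $L/2$ of them into nontrivial bags. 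In particular ${\sf Id}_F(C_n)\ge n/4$: a vertex cover of $C_n$ has at least $n/2$ vertices, and a nontrivial bag $W$ costs $|W|-1\ge |W|/2$. Hence $(C_{100},3)$ is a no-instance. Yet exhaustively suppressing degree-$2$ vertices with degree-$2$ neighbours turns $C_{100}$ into a short cycle such as $C_4$ or $C_3$, which becomes a tree after one identification. The same observation makes compression unnecessary: $C-S$ is an independent set, and every vertex outside $S$ has at least two neighbours, all of them in $S$.

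\textbf{Second gap: the counting misses leaves attached to a single bag.} Your claim that a tree of $C-S$ meeting only one bag would create a cut vertex is false, because a nontrivial bag contains several vertices of $G$. A vertex outside $S$ can be adjacent to two distinct vertices of the same bag $W(x)$; in $F$ it is then just a leaf at $x$, and $F$ may have arbitrarily many such leaves. Concretely, $K_{2,m}$ is $2$-connected and has ${\sf Id}_F(K_{2,m})=1$ (identify the two centres to get a star). None of your rules applies to it, so your final size check rejects a yes-instance.

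The missing ingredient is the paper's common-neighbourhood rule: if $|N_G(u)\cap N_G(v)|>k+1$, identify $u$ and $v$ and decrease $k$. It is safe because, if $u$ and $v$ were in different bags, all their common neighbours outside $W(x_u)\cup W(x_v)$ would have to share a single bag (two such bags give a $C_4$ in $F$). That would cost at least $k+1$ identifications. Combine this rule with bridge and isolated-vertex deletion and the vertex-cover observation above. Then $|S|\le 2k$, every vertex outside $S$ has at least two neighbours in $S$, and no pair in $S$ has more than $k+1$ common neighbours. This yields $|V(G)|\le 2k+(k+1)\binom{2k}{2}=\Ocal(k^3)$, which is exactly the paper's argument.
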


\begin{proof}
Let $(G,k)$ be an instance of \kidH{Forest}.
We exhaustively apply the following reduction rules. 

\begin{reduction}\label{R1}
Delete all bridges and isolated vertices in $G$.
\end{reduction}

\begin{claim}\label{cl:del_bridges}
The instances $(G,k)$ and $(G',k)$, where $G'$ is the graph obtained after applying \autoref{R1},  are equivalent instances of \kidH{Forest}.
\end{claim}

\begin{cproof}
Since $G'$ is a subgraph of $G$ and the class of forests is closed under taking subgraphs, if $(G,k)$ is a \yes-instance, then $(G',k)$ is a \yes-instance as well. For the opposite direction, assume that $(G',k)$ is a \yes-instance.

Suppose first that $G'$ is constructed from $G$ by deleting isolated vertices.
Then, because an edgeless graph is a forest, we have that
${\sf Id}_F(G)={\sf Id}_F(G')$ by~\autoref{lem:union}. Then $(G,k)$ is a \yes-instance of \kidH{Forest}.

Assume now that $G'$ is constructed by deleting bridges, and denote by $G_1,\ldots,G_s$ the connected components of $G'$. By~\autoref{lem:union}, 
${\sf Id}_F(G')={\sf Id}_F(G_1)+\dots+{\sf Id}_F(G_s)$. For each $i\in[s]$, let $F_i$ be a forest obtained from $G_i$ by ${\sf Id}_F(G_i)$ identifications, and let $F$ be the disjoint union of $F_1,\ldots,F_s$. We have that $F$ is an identification of $G$ obtained by ${\sf Id}_F(G')$ identifications. Consider the corresponding $F$-witness structure $\Wcal=\{W(x)\mid x\in V(F)\}$.
For every bridge $uv$ of $G$, there are $x_u,x_v\in V(F)$ in distinct connected components of $F$ such that $u\in W(x_u)$ and $v\in W(x_v)$. Let $F'$ be the graph obtained from $F$ by making $x_u$ and $x_v$ adjacent for all bridges $uv$. It is straightforward to see that $F'$ is a forest. Furthermore, $\Wcal$ is an $F'$-witness structure of $G$. Thus, ${\sf Id}_F(G)\leq {\sf Id}_F(G')$ and, therefore, 
$(G,k)$ is a \yes-instance of \kidH{Forest}.

Since the deletion of isolated vertices and bridges can be done consequently, the above arguments prove that $(G,k)$ is a yes-instance. This completes the proof.
\end{cproof}

If $G$ becomes empty after applying \autoref{R1}, then before the application of the rule $G$ was a forest. Thus, in this case,
we return a trivial \yes-instance of constant size and stop.

\begin{reduction}\label{R2}
For any two distinct vertices $u,v\in V(G)$, if $|N_G(u)\cap N_G(v)|>k+1$, then identify $u$ with $v$ and decrease $k$ by one.
\end{reduction}

\begin{claim}\label{cl:ident_common}
The instances $(G,k)$ and $(G',k-1)$, where $G'$ is the graph obtained after applying \autoref{R2},  are equivalent instances of \kidH{Forest}.
\end{claim}

\begin{cproof}
Obviously, if $(G',k-1)$ is a \yes-instance, then $(G,k)$ is a \yes-instance.
Reciprocally, let $\Wcal=\{W(x)\mid x\in V(F)\}$ be an $F$-witness structure of $G$ for some forest $F$ obtained from $G$ by at most $k$ identifications.
Let $x_u,x_v\in V(F)$ be the vertices such that $u\in W(x_u)$ and $v\in W(x_v)$. We claim that 
$x_u=x_v$.
Suppose for the sake of contradiction that $x_u\ne x_v$.
Let $a$ (resp. $b$) be the number of vertices of $N_G(u)\cap N_G(v)$ in $W(x_u)$ (resp. $W(x_v)$).
Let $R$ be the set of remaining vertices of $N_G(u)\cap N_G(v)$, i.e., that are neither identified with $u$ nor $v$.
Given that $|N_G(u)\cap N_G(v)|\ge k+2$, $|R|\ge k+2-a-b$.
If there are two distinct witness bags $W(r_1)$ and $W(r_2)$ containing vertices of $R$, then $r_1,u,r_2,v$ is a $4$-cycle in $F$, a contradiction.
Therefore, there is a vertex $r\in V(H)$ such that $R\subseteq W(R)$.
However, the total number of identifications in this case is at least 
$(|W(x_u)|-1)+(|W(x_v)|-1)+(|W(r)|-1)\geq a+b+|R|-1\geq k+1$.
Therefore, identifying to $F$ requires at least $k+1$ identifications; a contradiction.
Thus, if $(G,k)$ is a \yes-instance, then $u$ and $v$ are identified. Since this identification can be done first, we have that $(G',k-1)$ is a \yes-instance. This proves the claim.
\end{cproof}

We immediately stop and return a trivial no-instance of \kidH{Forest} of constant size and stop if $k$ becomes negative after applying \autoref{R2}. From now on, we assume that this is not the case.

We claim that after exhaustively applying \autoref{R1} and \autoref{R2}, either the obtained graph has bounded size, or we have a \no-instance.

\begin{claim}\label{cl:upper}
    If $(G,k)$ is a \yes-instance to which \autoref{R1} and \autoref{R2} cannot be applied, then $|V(G)|\le 2k+(k+1)\cdot\binom{2k}{2}$.
\end{claim}

\begin{cproof}
Let $(G,k)$ be a \yes-instance of \kidH{Forest} to which \autoref{R1} and \autoref{R2} cannot be applied, and let $F$ be a forest obtained from $G$ by performing at most $k$ identifications. 
Denote by $S$ the set of vertices of $G$ that are identified with some other vertices, that is, belong to non-trivial bags of the $F$-witness structure. Since the total number of identifications is at most $k$, $|S|\leq 2k$.
For any edge $e$ in $G$, given that $e$ is part of a cycle (because $e$ is not a bridge), at least one of its endpoints has to be identified with some other vertex. This means that $S$ is a vertex cover of $G$ of size at most $2k$.  
Because $G$ has no isolated vertices and bridges by \autoref{R1}, every vertex of $G$ has degree at least~$2$. In particular, every $v\in I$ of the independent set $I=V(G)\setminus S$ has at least two neighbors in $S$. Observe that because \autoref{R2} is not applicable, for any distinct $u,v\in S$, there are at most $k+1$ common neighbors of these vertices. Therefore, $|I|\leq (k+1)\cdot\binom{|S|}{2}$, and $|V(G)|=|S|+|I|\leq 2k+(k+1)\cdot\binom{2k}{2}$. This proves the claim.
\end{cproof}

Using \autoref{cl:upper}, we return a trivial \no-instance if the graph $G$ obtained by 
the exhaustive application of \autoref{R1} and \autoref{R2} has more than $2k+(k+1)\cdot\binom{2k}{2}$ vertices. Otherwise, we return $(G,k)$.

The correctness of the kernelization algorithm follows from \autoref{cl:del_bridges}, \autoref{cl:ident_common}, and \autoref{cl:upper}. Since bridges can be found in linear time by the standard depth-first search algorithm, \autoref{R1} can be applied in linear time. Since
\autoref{R2} can be trivially applied in $\Ocal(n^3)$ time, the overall running time is polynomial. This proves that \kidH{Forest} admits a kernel with $\Ocal(k^3)$ vertices.
\end{proof}

Using \autoref{th:kernel_forest}, we prove a similar result for \kidH{Tree}.

\begin{corollary}\label{cor:trees}
\kidH{Tree} admits a kernel with $\Ocal(k^3)$ vertices when parameterized by $k$.
\end{corollary}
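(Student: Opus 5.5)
We will reduce \kidH{Tree} to \kidH{Forest} by relating the two minima. The key observation is that a graph $G$ with connected components $G_1,\ldots,G_s$ can be identified to a tree using at most $k$ identifications if and only if ${\sf Id}_F(G)+(s-1)\le k$.

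For the forward direction, identifications never decrease the number of components by more than one each. A tree solution must therefore merge the $s$ components, and the total count is at least ${\sf Id}_F(G)+s-1$. To make this precise, take a tree witness structure and split every bag along the components of $G$. Each piece $W(x)\cap V(G_i)$ becomes its own bag, and the resulting graph is a subgraph of the tree, hence a forest. This forest is an identification of $G$, and it uses exactly (number of tree identifications) $-$ (number of extra bags created) identifications. The number of extra bags is at least $s-1$, because the tree is connected: the auxiliary bipartite graph between bags and components must be connected, and then counting its edges gives at least $(\#\text{bags})+s-1$ incidences.

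For the backward direction, take an optimal forest for $G$. It has one or more components, and by \autoref{lem:union} together with \autoref{obs:connect} we can treat each $G_i$ separately, getting a forest $F_i$ from $G_i$. Identifying one vertex from each $F_i$ into a single vertex uses $s-1$ further identifications and yields a tree, because gluing trees at single vertices creates no cycle. Note that if a component of $F_i$ is itself disconnected we first glue its components; more simply, using \autoref{lem:cut_comp} with order-zero separations, each $F_i$ may be taken to be a tree, since $G_i$ is connected. So optimality gives a forest with exactly $s$ components.

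The kernel then works as follows. If $s-1>k$, output a trivial \no-instance. Otherwise set $k'=k-(s-1)$ and run the kernel of \autoref{th:kernel_forest} on $(G,k')$, obtaining $(G^*,k^*)$ with $\Ocal(k^3)$ vertices and equivalent for \kidH{Forest}. To convert back to a \kidH{Tree} instance, compute the number $s^*$ of components of $G^*$ and output $(G^*,k^*+s^*-1)$. The obstacle is that $s^*$ might be large, which would break the parameter bound. However, $G^*$ has $\Ocal(k^3)$ vertices, so $s^*=\Ocal(k^3)$, and the new parameter is polynomial in $k$. This is acceptable for a polynomial kernel whose size is measured in vertices. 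The equivalence holds because $(G^*,k^*+s^*-1)$ is a \yes-instance of \kidH{Tree} iff ${\sf Id}_F(G^*)\le k^*$ iff ${\sf Id}_F(G)\le k'$ iff $(G,k)$ is a \yes-instance. The main care point is the splitting argument in the forward direction, namely verifying the exact counting of identifications saved when bags are split by components.
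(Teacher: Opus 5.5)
Your proof is correct, but it differs from the paper's at two points.

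\textbf{The forward direction of the key equivalence.} The equivalence says a tree is reachable within $k$ identifications iff ${\sf Id}_F(G)+s-1\le k$. To prove the forward direction, the paper first performs $s-1$ component-merging identifications and then undoes the resulting cut vertices with \autoref{lem:sep_one}. You instead split every bag of a tree-witness structure along the components of $G$. Connectivity of the bag--component incidence graph then shows that at least $s-1$ new bags appear. Your argument is more elementary and self-contained, since it needs neither \autoref{lem:cut_comp} nor \autoref{lem:sep_one}. It is also more careful than the paper's informal reordering step.

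\textbf{Two small corrections.} First, the split graph is not a subgraph of $T$. It is a disjoint union of graphs $F_i$, each isomorphic to a subgraph of $T$ via $x\mapsto W(x)\cap V(G_i)$, and that is what makes it a forest. Second, in the backward direction each $F_i$ is a tree directly by \autoref{obs:connect}, so the appeal to \autoref{lem:cut_comp} is unnecessary.

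\textbf{Converting the forest kernel back into a \kidH{Tree} instance.} The paper glues one vertex of each component of $G^*$ into a single cut vertex and invokes \autoref{lem:sep_one}. This makes the output graph connected and keeps the parameter at most $k$. You keep $G^*$ and raise the parameter to $k^*+s^*-1$. This is correct by your equivalence, and it is still polynomial since $s^*\le |V(G^*)|=\Ocal(k^3)$, though the parameter may grow to $\Ocal(k^3)$.

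If you want the parameter to stay $\Ocal(k)$, there is a simple fix. When the forest kernel outputs its reduced graph rather than a constant-size trivial instance, every component of $G^*$ has minimum degree at least two, so it is not a forest. By \autoref{lem:union}, $s^*>k^*$ then certifies a \no-instance.
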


\begin{proof}
We reduce the problem to \kidH{Forest} where $\Fcal$ is the class of forests using the following claim.

\begin{claim}\label{cl:trees}
An instance $(G,k)$ of \kidH{Tree} for a graph $G$ with $s$ connected components is equivalent to the instance $(G,k-s+1)$ of \kidH{Forest}.    
\end{claim}

\begin{cproof}
Let $G$ be a graph with connected components $G_1,\ldots,G_s$. The claim holds for $s=1$ by \autoref{obs:connect}. Assume that $s\geq 2$.

Suppose that $(G,k)$ is a \yes-instance of \kidH{Tree}. Then there is a sequence of at most $k$ identifications of the vertices of $G$ resulting in a tree $T$. 
Because $G$ has $s$ connected components, this sequence contains $s-1$ identifications resulting in a connected graph $G'$. Then $G'$ can be identified to a forest by at most $k-s+1$ identifications. Denote by $S$ the vertices of $G'$ obtained by identifying distinct vertices of $G$. 
Notice that for each vertex $v\in S$, there is a separation $(A,B)$ of $G'$ with $A\cap B=\{v\}$. Then by~\autoref{lem:sep_one}, if $(G',k-s+1)$ is a \yes-instance of \kidH{Forest}, then $(G,k-s+1)$ is a \yes-instance. 

For the opposite implication, assume that $(G,k-s+1)$ is a \yes-instance of \kidH{Forest}.
Denote by $G_1,\ldots,G_s$ the connected components of $G$. 
By~\autoref{lem:union}, ${\sf Id}_F(G)={\sf Id}_F(G_1)+\dots+{\sf Id}_F(G_s)$.
For $i\in[s]$, denote by $F_i$ the forest obtained from $G_i$ by ${\sf Id}_F(G_i)$ identifications. By~\autoref{obs:connect}, each $F_i$ is a tree. 
Let $F$ be the disjoint union of $F_1,\ldots,F_s$. We have that $F$ is an identification of $G$ obtained by ${\sf Id}_F(G)\leq k-s+1$ identification. Finally, we observe that we can obtain a tree from $F$ by $s-1$ identification. Thus, $(G,k)$ is a \yes-instance of \kidH{Tree}. This proves the claim.
\end{cproof}

Let $(G,k)$ be an instance of \kidH{Tree}. Denote by $G_1,\ldots,G_s$ the connected components of $G$. If $k+1<s$, we conclude that $(G,k)$ is a \no-instance by \autoref{cl:trees}. In this case, we return a trivial \no-instance of \kidH{Tree} and stop. Otherwise, we apply the kernelization algorithm from~\autoref{th:kernel_forest}. It returns an instance $(G',k')$ of \kidH{Forest}, where $G'$ has $\Ocal(k^3)$ vertices. By~\autoref{cl:trees}, $(G,k)$ and $(G',k')$ are equivalent. If $G'$ is a connected graph, then because of~\autoref{obs:connect}, $(G',k')$ is an equivalent instance of \kidH{Tree}. However, $G'$ may be disconnected. In this case, consider the connected components $G_1',\ldots,G_t'$ of $G'$. For each $i\in[t]$, let $v_i$ be an arbitrary vertex of $G_i'$. Denote by $G''$ the graph obtained by identifying $v_1,\ldots,v_t$ into a single vertex $v$. By~\autoref{lem:sep_one}, the instances $(G',k')$ and $(G'',k')$ of \kidH{Forest} are equivalent. Since $G''$ is connected, $(G'',k')$ is an equivalent instance of \kidH{Tree}. 
Therefore, our kernelization algorithm returns $(G'',k')$. This completes the description of the kernelization algorithm and its correctness proof. Since we can find the connected components in linear time by the standard breadth-first search, the running time is polynomial. This proves the corollary.
\end{proof}

\autoref{th:kernel_forest} and \autoref{cor:trees} imply that \Hid{$\Hcal$} can be solved in $k^{\Ocal(k)}\cdot n^{\Ocal(1)}$ time when $\Hcal$ is the class of forests or trees, as we can solve the problem by brute force on the reduced instances by guessing identifications.

\begin{corollary}\label{cor:FPT_F_T}
Let $\Hcal$ be the class of forests or trees.
Then \Hid{$\Hcal$} can be solved in $k^{\Ocal(k)}\cdot n^{\Ocal(1)}$ time. 
\end{corollary}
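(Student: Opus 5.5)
The plan is to run the kernelization first and then brute-force the small instance it produces. Given an instance $(G,k)$ of \Hid{$\Hcal$}, where $\Hcal$ is the class of forests or trees, I apply the kernelization algorithm of \autoref{th:kernel_forest} (for forests) or of \autoref{cor:trees} (for trees). This takes polynomial time. The output is either a trivial instance, which we answer directly, or an equivalent instance $(G',k')$ with $k'\leq k$ and $|V(G')|=\Ocal(k^3)$. In the tree case, \autoref{cor:trees} returns an instance of \kidH{Tree} whose graph is connected and whose parameter is at most $k$, so the same bounds apply.

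On the kernel I solve the problem by exhaustive search over identification sequences. Since every identification can be performed first, a \yes-instance is witnessed by a sequence of at most $k'$ identifications. Each identification picks a pair of vertices in the current graph, which has at most $N=\Ocal(k^3)$ vertices, so there are at most $\binom{N}{2}\leq N^2$ choices per step. Over all lengths $0,\ldots,k'$ this gives at most $(k'+1)\cdot N^{2k'}=(k^{\Ocal(1)})^{\Ocal(k)}=k^{\Ocal(k)}$ sequences. For each sequence I perform the identifications in $\Ocal(N^2)$ time and test in polynomial time whether the result is a forest (or a tree, by additionally checking connectivity).

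The total running time is the polynomial kernelization plus $k^{\Ocal(k)}\cdot k^{\Ocal(1)}$, that is, $k^{\Ocal(k)}\cdot n^{\Ocal(1)}$. Correctness follows from the equivalence of the kernel with the original instance.

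No step is a real obstacle here. The only point to verify is the parameter bound: the kernel's parameter must satisfy $k'\leq k$ and its size must be polynomial in $k$. This holds because \autoref{R2} only decreases $k$, and because \autoref{cl:trees} gives $k-s+1\leq k$ in the tree case.
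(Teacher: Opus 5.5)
Your proposal is correct and is the paper's argument: run the kernel of \autoref{th:kernel_forest} (or \autoref{cor:trees} for trees), then brute-force over the at most $\binom{N}{2}^{k'}$ identification sequences on the $N=\Ocal(k^3)$-vertex kernel, which gives $k^{\Ocal(k)}\cdot n^{\Ocal(1)}$ time. Your explicit check that the kernel's parameter satisfies $k'\leq k$ in both cases fills in a detail the paper leaves implicit.
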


\subsection{Identification to a Given Forest}
By~\autoref{cor:given_forest}, \id is \W{1}-hard on instances $(G,F)$ where $F$ is a forest, when parameterized by the number of leaves of $F$. In this section, we complement this result by demonstrating an \XP algorithm for this parameterization. First, we show the result when the input graph $G$ is connected.
Note that any identification of a connected graph $G$ is connected by~\autoref{obs:connect}. Therefore, we can assume that the input forest $F$ is a tree.

\begin{lemma}\label{lem:tree}
\id is solvable in time $\Ocal(n^{\ell-1}\cdot (n+m))$ on instances $(G,T)$ where $G$ is a connected graph and $T$
is a tree on $\ell$ leaves. 
\end{lemma}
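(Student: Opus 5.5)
The plan is to reduce the problem to a purely metric condition on the images of the leaves of $T$ and then check that condition by brute force over $\ell-1$ of them. The trivial cases are handled first: if $|V(T)|\le 2$, then $G$ can be identified to $T$ if and only if $|V(G)|\ge |V(T)|$ and, when $T=K_2$, $G$ has an edge. So assume $T$ has $\ell\ge 2$ leaves $x_1,\ldots,x_\ell$, pairwise non-adjacent. By \autoref{lem:singletons} we may restrict to $T$-witness structures in which every leaf bag is a single vertex $W(x_i)=\{v_i\}$. Equivalently, we look for a compaction $f\colon V(G)\to V(T^*)$ with $f^{-1}(x_i)=\{v_i\}$.

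The structural statement I would prove is the following: $G$ can be identified to $T$ if and only if there are distinct vertices $v_1,\ldots,v_\ell$ of $G$ with $d_G(v_i,v_j)\ge d_T(x_i,x_j)$ for all $i\ne j$.

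\emph{Necessity.} A compaction to the reflexive graph $T^*$ is $1$-Lipschitz (it does not increase distances), so it gives $d_T(x_i,x_j)\le d_G(v_i,v_j)$.

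\emph{Sufficiency.} This has two parts.
\begin{itemize}
\item First, extend the partial map $v_i\mapsto x_i$ to a $1$-Lipschitz map $f\colon V(G)\to V(T)$. Reflexive trees are absolute retracts: balls in a tree have the Helly property. Hence a partial $1$-Lipschitz map from the connected graph $G$ to $T^*$ extends vertex by vertex, each new vertex being sent into the (nonempty) intersection of the balls $B_T(f(w),d_G(u,w))$ over the already mapped vertices $w$. The hypothesis is exactly that the initial partial map is $1$-Lipschitz.
\item Second, check that such an $f$ is a compaction and has the right leaf bags. Every edge $e$ of $T$ lies on the $x_i$--$x_j$ path for some two leaves. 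The image under $f$ of a shortest $v_i$--$v_j$ path in $G$ is a walk in $T^*$ from $x_i$ to $x_j$, so it must traverse every edge of that tree path, in particular $e$. The condition $f^{-1}(x_i)=\{v_i\}$ is obtained by modifying $f$: any other vertex $u$ with $f(u)=x_i$ has all its neighbours mapped into $\{x_i,p_i\}$, where $p_i$ is the neighbour of $x_i$ in $T$, so $u$ can be moved to $p_i$. Repeating this preserves the $1$-Lipschitz property, keeps $f(v_i)=x_i$, and still covers the edge $x_ip_i$ via the path argument; it also mirrors the proof of \autoref{lem:singletons}.
\end{itemize}

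Algorithmically, I would enumerate all ordered choices of $v_1,\ldots,v_{\ell-1}$ for the leaves $x_1,\ldots,x_{\ell-1}$, which is $\Ocal(n^{\ell-1})$ guesses. For each guess, run a BFS from each $v_i$ (or use a single precomputed all-pairs BFS table), check the pairwise conditions, and test whether some vertex $v_\ell$ satisfies $d_G(v_i,v_\ell)\ge d_T(x_i,x_\ell)$ for all $i<\ell$. This is a linear scan over $V(G)$, which generalizes the eccentricity test behind \autoref{prop:path} for $\ell=2$. To reach the claimed bound $\Ocal(n^{\ell-1}(n+m))$ rather than an extra factor of $\ell$, I would precompute BFS from every vertex once, in $\Ocal(n(n+m))$ time, which is absorbed for $\ell\ge 2$. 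Then the per-guess work is $\Ocal(\ell n)$, and I would argue $\ell n=\Ocal(n+m)$ or organize the guesses incrementally so that the checks for $v_1,\ldots,v_{\ell-2}$ are reused across choices of $v_{\ell-1}$.

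I expect the main obstacle to be the sufficiency direction, specifically a clean, self-contained proof of the extension step, i.e.\ the Helly property of balls in a tree applied to graph distances, together with the leaf-bag adjustment. This is where I would spend the most care, possibly by giving an explicit construction. One option is to root $T$ at $x_\ell$ and place each vertex $u$ according to the values $d_G(u,v_i)$, checking adjacency preservation directly. The running-time bookkeeping is secondary.
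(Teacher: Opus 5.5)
Your proof is correct, but it takes a different route from the paper's. The paper roots $T$ at a leaf $r$ and guesses singleton bags for the other $\ell-1$ leaves. It then uses an exchange argument to show that one may assume a \emph{regular} witness structure: every non-root internal bag is forced to be $W(y)=N_G(\bigcup_{x\in C_y}W_x)$, and $W(r)$ takes all remaining vertices. So each guess determines a single candidate structure, which is built and checked in $\Ocal(n+m)$ time.

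You instead prove a metric characterization: $G$ can be identified to $T$ if and only if some $\ell$ vertices satisfy $d_G(v_i,v_j)\ge d_T(x_i,x_j)$. Your three ingredients are all sound:
\begin{itemize}
\item homomorphisms to $T^*$ are $1$-Lipschitz;
\item partial $1$-Lipschitz maps into a tree extend, because the balls $B_T(f(w),d_G(u,w))$ pairwise meet by the triangle inequality and subtrees have the Helly property;
\item the image of any $v_i$--$v_j$ path must cross every edge of the $x_i$--$x_j$ path.
\end{itemize}
Your route gives a clean certificate that generalizes \autoref{prop:path} (the case $\ell=2$), and it needs only distance lookups per guess. The cost is an external tool (Helly) and a less explicit construction of the bags. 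The paper's argument is self-contained and outputs the witness structure directly.

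Two parts of your argument can be dropped. The leaf-bag normalization $f^{-1}(x_i)=\{v_i\}$ is unnecessary, since any compaction to $T^*$ already yields a $T$-witness structure. For necessity, you can take any vertex of each leaf bag without invoking \autoref{lem:singletons}.

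On the running time, your first option does not work: $\ell n=\Ocal(n+m)$ is false in general. For example, with $G=T=K_{1,n-1}$ you have $\ell n=\Theta(n^2)$ but $n+m=\Theta(n)$. Your second option does work. Enumerate $v_1,\ldots,v_{\ell-1}$ depth-first and maintain
$F_j=\{w\mid d_G(v_i,w)\ge d_T(x_i,x_\ell)\text{ for all } i\le j\}$.
Each $F_j$ is obtained from $F_{j-1}$ in $\Ocal(n)$ time, and each new $v_j$ is checked against its predecessors in $\Ocal(j)$ time. After rejecting $\ell>n$, this sums to $\Ocal(n^{\ell})$ over the search tree. Adding $\Ocal(n(n+m))$ for the all-pairs BFS table keeps the total within $\Ocal(n^{\ell-1}(n+m))$ for $\ell\ge 2$.
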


\begin{proof}
Let $G$ be a connected graph, and let $T$ be a tree. We assume that $|V(T)|\leq |V(G)|=n$, as otherwise, we have a trivial \no-instance. We also assume that $\ell\geq 2$ because for the single-vertex $T$, the problem is trivial. 
We select an arbitrary leaf $r$ of $T$ as a \emph{root} vertex which defines the parent-child relation on $V(T)$. For $x\in V(T)$, we denote by $T_x$ the subtree of $T$ induced by the descendants of $x$ (including $x$). Given a $T$-witness structure $\Wcal=\{W(x)\mid x\in V(T)\}$ of $G$, we set $W_x=\bigcup_{y\in V(T_x)}W(y)$ for $x\in V(T)$.

Let $L$ be the set of leaves of $T$ in the rooted tree; note that $|L|=\ell-1$ as the root is not a leaf of the rooted tree $T$. 
Our algorithm constructs a special $T$-witness structure $\Wcal=\{W(x)\mid x\in V(T)\}$ bottom-up starting from $x\in L$.
We say that a $T$-witness structure $\Wcal=\{W(x)\mid x\in V(T)\}$ is \emph{regular} if 
(i)~$|W(x)|=1$ for $x\in L$ and (ii)~for every non-root vertex $y\in V(T)\setminus L$,
$W(y)=N_G(\bigcup_{x\in C_y}W_x)$ where $C_y\subseteq V(G)$ is the set of children of $y$ in $T$.  
\begin{claim}\label{cl:regular}
If $G$ can be identified to $T$, then $G$ has a regular $T$-witness structure.     
\end{claim}

\begin{cproof}
Since the vertices of $L$ are pairwise non-adjacent, there is a $T$-witness structure where each leaf-bag is a singleton by~\autoref{lem:singletons}. For each $T$-witness structure of this type, property (i) is fulfilled. Among all $T$-witness structures satisfying (i), we select a $T$-witness structure $\Wcal=\{W(x)\mid x\in V(T)\}$ such that the sum of distances in $T$ from $r$ to the non-root vertices $y\in V(T)\setminus L$ with $W(y)\neq N_G(\bigcup_{x\in C_y}W_x)$ is minimum. We claim that $\Wcal$ satisfies~(ii).

To obtain a contradiction, assume that $y\in V(T)\setminus L$ is a non-root vertex at maximum distance from~$r$ for which, $W(y)\neq N_G(\bigcup_{x\in C_y}W_x)$. Since $y\neq r$, $y$ has the unique parent $z$. Since $T$ is a tree, we have that $N_G(\bigcup_{x\in C_y}W_x)\subseteq W(y)$ by the definition of a witness structure. 
Also, each neighbor $v$ of a vertex $u\in W(y)$, is in $N_G[W(y)]$, that is, either $v\in W(z)$ or $v\in W(t)$ for $t\in C\cup\{y\}$. Consider 
$\Wcal'=\{W'(x)\mid x\in V(T)\}$ where $W'(y)=N_G(\bigcup_{x\in C_y}W_x)$, $W'(z)=W(z)\cup(W(y)\setminus W'(y))$, and $W'(x)=W(x)$ for all $x\neq y,z$. 
Since $G$ is connected, for every $x\in C\cup\{z\}$, there are adjacent $v_x\in W'(x)$ and $v_y\in W(y)$. Then, because $N_G[W(y)\setminus W'(y)]\subseteq W(y)\cup W(z)$, we have that $\Wcal'$ is a $T$-witness structure. However, $W'(y)=N_G(\bigcup_{x\in C_y}W_x)$ contradicting the choice of $\Wcal$. Thus, $\Wcal$ satisfies~(i) and (ii). This completes the proof.
\end{cproof}
    
Based on~\autoref{cl:regular}, our algorithm finds a regular $T$-witness structure (if it exists). For each $x\in L$, we guess a vertex $v_x$ of $G$ such that $W(x)=\{v_x\}$. As $|L|=\ell-1$, there are at most $n^{\ell-1}$ possibilities to guess the vertices $v_x$. For each choice, we verify whether it can be extended to a regular $T$-witness structure. 
If we find such a choice, we return the corresponding solution. Otherwise, we report a \no-instance. 
                             
From now on, we assume that the bags $\{v_x\}$ for $x\in L$ are given. To extend the witness structure, we apply the standard bottom-up dynamic programming approach. For every non-root vertex $y\in V(T)\setminus L$ such  that for all children $x$ of $y$ in $T$ (and, therefore, for their descendants), the bags are constructed, we define
$W(y)=N_G(\bigcup_{x\in C_y}W_x)$ where $C_y\subseteq V(G)$ is the set of children of $y$ in $T$.
Finally, we set $W(r)=V(G)\setminus\bigcup_{x\in V(T)\setminus\{r\}}W(x)$. Since $T$ has at most $n$ vertices, the family $\Wcal=\{W(x)\mid x\in V(T\}$ can be constructed in $\Ocal(n+m)$ time for a given set of leaf-bags. In the last step, we verify in $\Ocal(n+m)$ time whether $\Wcal$ is a $T$-witness structure using the definition. 

This completes the description of the algorithm. Its correctness follows from \autoref{cl:regular}. Since initially we guess $n^{\ell-1}$ vertices of $G$ in the leaf-bags, the overall running time is $\Ocal(n^{\ell-1}(n+m))$. This concludes the proof.
\end{proof}

In~\autoref{thm:lin_forest_hard}, we proved that \id is \W{1}-hard on instances $(G,L)$ where $L$ is a linear forest, when parameterized by the number of connected components of $L$. The proof was based on the close relation of our problem to \textsc{Bin Packing}. Using this relation further, we show that the problem is in \XP when parameterized by the number of connected components of $L$. We use this result as a building block for the algorithm for \id on instances $(G,F)$ where $F$ is a forest.

\begin{lemma}\label{lem:linear_forest_XP}
\id can be solved in $\Ocal(nm+n^{t+1})$ time on instances $(G,L)$ where $L$ is a linear forest with $t$ connected components.    
\end{lemma}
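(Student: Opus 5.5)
We reduce the question to a packing problem, using \autoref{prop:path} and \autoref{obs:connect}. Let $G_1,\ldots,G_s$ be the connected components of $G$ and $d_i=\diam(G_i)$; all of these are computed in $\Ocal(nm)$ time by breadth-first search from every vertex. Let $L$ have components $Q_1,\ldots,Q_t$, where $Q_j$ is a path on $b_j$ vertices. By \autoref{obs:connect}, in any $L$-witness structure each component $G_i$ lies entirely inside the bags of a single $Q_j$. Hence $G$ can be identified to $L$ if and only if the index set $[s]$ can be partitioned into $I_1,\ldots,I_t$ such that, for every $j\in[t]$, the graph $G_{I_j}=\bigcup_{i\in I_j}G_i$ can be identified to $P_{b_j}$. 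In particular every $I_j$ is nonempty, since bags are nonempty. By \autoref{prop:path}, this last condition is equivalent to $\sum_{i\in I_j}d_i\geq b_j-1$.

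For the reverse direction we also need that $G_{I_j}$ has at least $b_j$ vertices. This holds automatically, because $|V(G_i)|\geq d_i+1$ and so $\sum_{i\in I_j}|V(G_i)|\geq \sum_{i\in I_j} d_i+1\geq b_j$. Gluing the witness structures for the individual $Q_j$ then gives an $L$-witness structure of $G$. The problem is therefore the following: partition $s$ items of sizes $d_i$ into $t$ nonempty groups so that group $j$ has total size at least $c_j=b_j-1$.

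To solve this covering problem, first observe that any excess is harmless. Since the requirement is only a lower bound, adding items to a group never breaks feasibility. So it suffices to find $t$ pairwise disjoint nonempty groups meeting their thresholds; any leftover items can be dumped into group $1$. We run a dynamic program over the items $i=1,\ldots,s$. The state is a vector $(a_1,\ldots,a_t)$, where $a_j=\min\bigl(c_j,\sum d_i\bigr)$ records the capped load of group $j$, together with a bit per group recording nonemptiness.

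For the running time, we may assume $|V(L)|\leq n$, since otherwise we answer no immediately. Then each $c_j<n$ and $\sum_j (c_j+1)\leq n$, so the number of load vectors is at most $\prod_j(c_j+1)\leq n^t$. The nonemptiness bits can be avoided altogether. If some $c_j=0$ we need only one item for that group, so we require $s\geq t$ and handle these groups by reserving items. More simply, we can treat every threshold as $\max(c_j,\epsilon)$, i.e.\ count the load as $d_i+$ a marker, which keeps the table size $\Ocal(n^t)$ up to a factor absorbed by $\prod(c_j+1)\leq n^t$ after a mild rewriting. Each item is processed in $\Ocal(t)$ per state, giving $\Ocal(s\cdot t\cdot n^t)$ time. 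Bounding this carefully by $\Ocal(n^{t+1})$ needs the AM--GM bound $\prod(c_j+1)\leq (n/t)^t$, which absorbs the factor $t$.

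The main obstacle is this bookkeeping: handling the nonemptiness constraint without blowing up the table, and tightening the polynomial so the total is exactly $\Ocal(nm+n^{t+1})$. If the marker trick is awkward, the fallback is to observe that groups with $c_j=0$ (single-vertex paths) can simply be assigned any leftover component after the DP, provided enough components remain. The DP is then run only over groups with $c_j\geq1$, which are automatically nonempty once their thresholds are met since every $d_i$ is finite. For this fallback we track the number of items used, but only through a minimization of items used per state, which does not enlarge the table.
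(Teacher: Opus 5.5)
Your proposal is correct and follows essentially the paper's route: \autoref{obs:connect} and \autoref{prop:path} reduce the question to partitioning the components so that each group's diameter sum reaches the length of its path, and a dynamic program over the components with a $t$-tuple of capped loads then decides this. The paper stores residual demands $\max\{0,p_j-d_i\}$, which is the same thing. The differences are minor. The paper removes single-vertex paths by greedily giving them the components of smallest diameter; your fallback (run the DP only on paths with $c_j\geq 1$ and minimize the number of components used) is an equally sound fix, whereas the marker trick as written is too vague to rely on. Your AM--GM bound $\prod_j(c_j+1)\leq (n/t)^t$ also justifies absorbing the factor $t$ that the paper's $\Ocal(n^{t+1})$ count passes over silently.
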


\begin{proof}
Let $G$ be a graph with $s$ connected components $G_1,\ldots,G_s$, and let $P_1,\ldots,P_t$ be the connected components of $L$. We assume that $|V(L)|\leq |V(G)|=n$, $|E(L)|\leq |E(G)|=m$, and $t\leq s$ as, otherwise, we have a trivial \no-instance of the problem. Also, if $L$ is an edgeless graph and $t\leq s$, then $G$ can be identified to $L$ 
in a straightforward way.
From now on, assume that these cases do not apply.
For $i\in[s]$, let $d_i=\diam(G_i)$. Let also $\ell_i$ be the length of $P_i$ for $i\in[t]$.
We have the following property.

\begin{claim}\label{cl:diam_bin}
The graph $G$ can be identified to $L$ if and only if there is a partition $I_1,\ldots,I_t$ of $[s]$ such that for every $i\in[t]$, $\sum_{j\in I_i}d_j\geq\ell_i$.    
\end{claim}

\begin{cproof}
Suppose that $G$ can be identified to $L$. Then by~\autoref{obs:connect}, there is a partition $I_1,\ldots,I_t$ of $[s]$ such that for every $i\in[t]$, $P_i$ is an identification of the union of $G_j$ for $j\in I_i$. Then by~\autoref{prop:path}, for every $i\in [i]$, $\ell_i\geq \sum_{j\in I_i}\diam(G_j)=\sum_{j\in I_i}d_j$. For the opposite implication, assume that 
there is a partition $I_1,\ldots,I_t$ of $[s]$ such that for every $i\in[t]$, $\sum_{j\in I_i}d_j\geq\ell_i$. Then by~\autoref{prop:path}, for each $i\in[t]$, the subgraph of $G$ consisting of the connected components $G_j$ with $j\in I_i$ can be identified to $P_i$. This immediately implies that $G$ can be identified to $L$. This completes the proof.
\end{cproof}

Using~\autoref{cl:diam_bin}, we construct a dynamic programming algorithm inspired by the standard dynamic programming for \textsc{Bin Packing} for a bounded number of bins. 
However, it is convenient to do the following preprocessing. Notice that if $\ell_i=0$ for some $i\in[t]$, that is, $P_i$ is a trivial path, then we can select $I_i=\{j\}$ for $j\in[s]$, where $d_j=\min\{d_1,\ldots,d_s\}$. Thus, we may greedily select $I_i$ for trivial $P_i$. 
By this procedure, we reduce it to the case when $\ell_i\geq 1$ for $i\in[t]$. 
From now on, we assume that $\ell_i\geq 1$ for $i\in[t]$ and $s<n$.

For $i\in[0,s]$ and a $t$-tuple $(p_1,\ldots,p_t)$, where $p_i\in[0,\ell_i]$ for $i\in[t]$, we define the Boolean function $T(i,(p_1,\ldots,p_t))$ by setting 
$T(i,(p_1,\ldots,p_t))={\sf true}$ if there is a partition $I_1,\ldots,I_t$ of $[0,i]$, where some sets may be empty, such that for every $i\in[t]$,
$\sum_{j\in I_i}d_j\geq p_i$, and we set $T(i,(p_1,\ldots,p_t))={\sf false}$, otherwise.
Here, we assume that for $I_i=\emptyset$, $\sum_{j\in I_i}d_j=0$.
Observe that $T(s,(\ell_1,\ldots,\ell_t))={\sf true}$ if and only if 
 there is a partition $I_1,\ldots,I_t$ of $[s]$ such that for every $i\in[t]$, $\sum_{j\in I_i}d_j\geq\ell_i$; in particular, each $I_i\neq\emptyset$ because $\ell_i\geq 1$.
Our dynamic programming algorithm computes the table of values of $T(s,(\ell_1,\ldots,\ell_t))$ for all $i\in[0,s]$ and all $t$-tuples $(p_1,\ldots,p_t)$.

We consequently compute the tables for $i=0,\ldots,s$. Initially, for $i=0$, we set
$T(0,(p_1,\ldots,p_t))={\sf true}$ for $p_1=\dots=p_t=0$, and $T(i,(p_1,\ldots,p_t))={\sf false}$, otherwise. For $i\geq 1$, we use the following recurrence:
\begin{equation}\label{eq:rec}
T(i,(p_1,\ldots,p_t))=\bigvee_{j\in[t]}T(i-1,(p_1,\ldots,p_{j-1},
\max\{0,p_j-d_i\},p_{j+1},\ldots,p_t)).    
\end{equation}

The correctness of computing $T(0,(p_1,\ldots,p_t))$ immediately follows from the definition of the function. The correctness of the recurrence~(\ref{eq:rec}) is proved by the standard arguments using the observation that for each $i\in[s]$, $i$ should be in some $I_j$ for $j\in[t]$. Since $\ell_j\leq n-1$ for $j\in[t]$ and $s< n$, the table of values of $T(s,(p_1,\ldots,p_t))$ can be computed in $\Ocal(n\cdot n^{t})$ time. Because the connected components of $G$ can be found in linear time and the diameters of the components can be computed in $\Ocal(nm)$ time, the overall running time is $\Ocal(nm+n^{t+1})$. 
This concludes the proof.
\end{proof}

Now we show that \id is in \XP on instances $(G,F)$ where $F$ is a forest, when parameterized by the number of leaves of $F$.

\begin{theorem}\label{thm:forest_XP}
\id is solvable in $n^{\Ocal(\ell^2)}$ time on instances $(G,F)$ where $F$ is a forest with $\ell$ leaves.
\end{theorem}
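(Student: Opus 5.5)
We combine \autoref{lem:tree} and \autoref{lem:linear_forest_XP}. Let $F$ have connected components $T_1,\ldots,T_q$, and split them into \emph{non-path} components (trees with at least three leaves) and \emph{path} components (including single vertices, which count as leaves). Every component contributes at least one leaf, and every non-path component at least three, so there are at most $\ell$ components and at most $\ell/3$ non-path components. Let $G_1,\ldots,G_s$ be the components of $G$. By \autoref{obs:connect}, any $F$-witness structure induces a map assigning each $G_j$ to the unique component $T_i$ whose bags contain $V(G_j)$. Hence $G$ can be identified to $F$ if and only if there is a partition $I_1,\ldots,I_q$ of $[s]$ such that, for each $i$, the union of the $G_j$ with $j\in I_i$ can be identified to $T_i$.

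The key difficulty is that a tree component $T_i$ may absorb several components of $G$, so \autoref{lem:tree} does not apply directly: it requires a connected input. I would first prove a structural claim. If a disconnected graph $G'$ with components $H_1,\ldots,H_p$ can be identified to a tree $T$ with $\ell_T$ leaves, then some choice of at most $\ell_T$ (or $O(\ell_T)$) components already carries the ``branching'' part. The remaining components can be absorbed in one of two ways: by identifying them wholesale into a single bag, which is harmless when that bag is not a leaf bag (compare Figure~\ref{fig:ex_id}), or by stretching along the paths of $T$ between branch vertices. More concretely, I would decompose $T$ into its set $B$ of vertices of degree $\neq 2$, with $|B|=O(\ell_T)$, and the at most $O(\ell_T)$ maximal paths connecting them. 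I would then guess, in $n^{O(\ell_T)}$ ways, representative vertices of $G$ for the bags of $B$, together with which components of $G$ contain them. Fixing those bags turns the remaining task into making the bare subdivided paths realizable, which is a path/linear-forest identification question with lengths constrained by diameters. This is exactly the kind of question handled by \autoref{prop:path} and the bin-packing dynamic program of \autoref{lem:linear_forest_XP}. I expect this claim to be the main obstacle. It requires an exchange argument, analogous to \autoref{cl:regular} and \autoref{lem:singletons}, showing that a witness structure exists in which each component of $G$ not containing a guessed vertex is either entirely inside one non-leaf bag or serves to lengthen one of the $O(\ell)$ paths. Such a component's contribution would be governed only by its diameter.

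Assuming the claim, the algorithm runs as follows. Guess representatives for all $O(\ell)$ branch and leaf bags over all tree components of $F$, giving $n^{O(\ell)}$ choices. Also guess, for each of the $O(\ell)$ paths, which connected component of $G$ (if any) provides its realization near each endpoint. Then run a single combined dynamic program in the style of~(\ref{eq:rec}) over the unassigned components of $G$. Each such component is assigned either to one of the $O(\ell)$ paths, whose required remaining lengths are tracked in $[0,n]$, or to the ``absorb into an internal bag'' option. The table has size $n^{O(\ell)}$. Checking the guessed skeletons with the bottom-up construction of \autoref{lem:tree} costs polynomial time per guess. The guessing contributes $n^{O(\ell)}$ combinations, and the guessed component assignments for each of the $O(\ell)$ paths add up to $n^{O(\ell^2)}$ in the crude count. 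Together with the DP this gives total running time $n^{O(\ell^2)}$, matching the claimed bound.
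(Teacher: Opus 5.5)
\textbf{There is a genuine gap.} The overall shape is right: a few components of $G$ carry the branching part of $F$, and the rest is a bin-packing question on a linear forest as in \autoref{lem:linear_forest_XP}. But the step you defer is the substance of the proof, and the algorithm around it does not work as described.

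\textbf{Checking the skeleton.} The branching part of a tree of $F$ is in general realized jointly by several \emph{different} components of $G$ (e.g.\ $3K_2$ identifies to $K_{1,3}$). The bottom-up construction of \autoref{lem:tree} genuinely needs a connected input. Take $G=2K_2$ with edges $u_1v_1,u_2v_2$ and $T=P_3$: $G$ identifies to $T$ by putting $v_1,u_2$ in the middle bag, yet building bags upward from a singleton leaf bag always fails. You point out this obstruction yourself, but then still propose to check the guessed skeletons with that construction.

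\textbf{Fixing branch bags does not leave a diameter question.} A component meeting a branch bag may extend along several incident paths at once. How far it reaches in each direction is a joint constraint that \autoref{prop:path} does not capture.

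\textbf{The missing gluing step.} The paper's route is as follows.
\begin{itemize}
\item For every $x\in V_{\neq 2}$, it defines a family ${\sf Imp}(x)$: for each leaf $y$ of the tree containing $x$, one component meeting $W(x)$ that reaches closest to $y$. So there are up to $\ell$ components per branch vertex, which is where $n^{\Ocal(\ell^2)}$ comes from.
\item It guesses one vertex in $W(x)$ from each component of ${\sf Imp}(x)$ and identifies these vertices.
\item For consecutive branch vertices $x,y$ of a skeleton component, it uses a \emph{connector}: a bag on the $x$-$y$ path containing a vertex of a component from ${\sf Imp}(x)$ and one from ${\sf Imp}(y)$. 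It guesses these two vertices and identifies them.
\item Only then is each skeleton component realized by a connected graph, so \autoref{lem:tree} applies.
\end{itemize}
That connectors exist is a separate structural claim, proved via the closest-to-leaf choice in ${\sf Imp}$. One representative per bag plus the name of a component per path end gives you neither the vertices to identify nor the connectors.

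\textbf{The structural claim is only asserted.} The paper proves it in three steps.
\begin{itemize}
\item It takes $I\subseteq[s]$ inclusion-minimal such that $G_I$ identifies to $F$. Components outside $I$ can be put wholesale into \emph{any} bag, leaf bags included.
\item Every component of $G_I$ meeting some $W(x)$ with $x\in V_{\neq 2}$ lies in some ${\sf Imp}(x')$. Otherwise, each edge of $F$ it realizes is also realized by an important component, so it could be dropped, contradicting minimality.
\item The remaining components of $G_I$ avoid all branch bags and realize a linear forest $L=F-E(F')$ whose paths lie inside degree-$2$ paths of $F$. The pair $(F',L)$ is guessed in $n^{\Ocal(\ell)}$ ways by choosing the endpoints of the $L$-part on each degree-$2$ path.
\end{itemize}
Your DP with an ``absorb'' option fits this last step. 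However, the remaining lengths it tracks presuppose that $F'$ has been guessed, and your count (inflating $n^{\Ocal(\ell)}$ to $n^{\Ocal(\ell^2)}$ ``in the crude count'') is not tied to an actual guessing step.

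As it stands, both load-bearing ingredients are missing: the minimality argument bounding the important components, and the connector-based gluing that makes \autoref{lem:tree} applicable.
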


\begin{proof}
Let $G$ be a graph with $s$ connected components $G_1,\ldots,G_s$, and let $T_1,\ldots,T_t$ be the connected components of a forest $F$ with $\ell$ leaves. As in the proof of~\autoref{lem:linear_forest_XP}, we assume that $|V(F)|\leq |V(G)|=n$, $|E(F)|\leq |E(G)|=m$, and $t\leq s$ to rule out trivial \no-instances. 

Our algorithm is based on structural properties of solutions. Suppose that $G$ can be identified to $F$. For $I\subseteq[s]$, we denote by $G_I$ the subgraph of $G$ composed by the connected components $G_i$ for $i\in I$.

Observe that $G$ can be identified to $F$ if and only if there is $I\subseteq [s]$ such that  $G_I$ can be identified to $F$. The forward implication is trivial, and if $F$ is an identification of $G_I$ for some $I\subseteq[s]$, an identification of $G$ can be constructed by identifying the vertices of $G_i$ for $i\in[s]\setminus I$ with an arbitrary vertex of $F$. We assume that $I\subseteq [s]$ is a set of minimum size such that $G_I$ can be identified to $F$. Denote by $\Wcal=\{W(x)\mid x\in V(F)\}$ the corresponding $F$-witness structure.

Given that $F$ has $\ell$ leaves, $F$ has at most $2\ell-2$ vertices of degree $\ne2$. 
Let $V_{\ne 2}$ be these vertices. 
For a vertex $x\in V_{\ne 2}$ and a leaf $y$ of the connected component $T_j$ of $F$ containing $x$, we say that a connected component $G_i$ for $i\in I$ is \emph{$y$-important for $x$} if
(i)~$W(x)\cap V(G_i)\neq \emptyset$ and (ii)~the distance between $y$ and the closest to $y$ vertex $z\in V(T_j)$ with  $W(z)\cap V(G_i)\neq\emptyset$ is minimum, where the minimum is taken over all $G_i$ containing a vertex of $W(x)$. For $x\in V_{\neq 2}$, we 
denote by ${\sf Imp}(x)\subseteq I$ an inclusion minimal set such that for each leaf $y$ of $T_j$, there is an $y$-important for $x$ connected component $G_i$ 
with $i\in{\sf Imp}(x)$. Note that because every leaf of $T_j$ is a leaf of $F$,
$|{\sf Imp}(x)|\leq \ell$ for $x\in V_{\neq 2}$.
We show the following property.

\begin{claim}\label{cl:imp}
Let $G_i$ for $i\in I$ be a connected component such that $V(G_i)$ has a vertex in $W(x)$ for some $x\in V_{\neq 2}$. Then $i\in{\sf Imp}(x')$ for some $x'\in V_{\neq 2}$.
\end{claim}
 
\begin{cproof}
The proof is by contradiction. Assume that there is $i\in I$ and $x\in V_{\neq 2}$ such that 
$W(x)\cap V(G_i)\neq \emptyset$ and 
$i\notin{\sf Imp}(x')$ for any $x'\in V_{\neq 2}$. 
Let $T_j$ for $j\in[t]$ be the connected component of $F$ containing $x$. Note that by~\autoref{obs:connect}, only bags $W(z)$ for $z\in V(T_j)$ may contain vertices of $G_i$.
If $T_j$ has a single vertex, then ${\sf Imp}(x)$ contains a unique element and, by the minimality of $I$, the claim holds. Assume that $T_j$ has edges. 
Consider an arbitrary 
edge $z_1z_2\in E(T_j)$ such that $W(z_1)\cap V(G_i)\neq\emptyset$ and 
$W(z_2)\cap W(G_i)\neq\emptyset$. Because $T_j$ is a tree, $z_1z_2$ is an edge of an $x$-$y$ path for some leaf $y$ of $T_j$.  
Because $i\notin{\sf Imp}(x)$, there is a $y$-important for $x$ connected component $G_h$ with $h\in{\sf Imp}(x)$. Then by~\autoref{obs:connect}, there are $v_1\in W(z_1)$ and $v_2\in W(z_2)$ such that $v_1,v_2\in V(G_h)$ and $v_1v_2\in E(G_h)$. However, this implies that
$\Wcal'=\{W(z)\setminus V(G_i)\mid z\in I\setminus\{i\}\}$ is an $F$-witness structure for $G_{I'}$ where $I'=I\setminus\{i\}$ contradicting the minimality of $I$. This proves the claim.
\end{cproof}

Let $I'=\bigcup_{x\in V_{\neq 2}}{\sf Imp}(x)$ and let $I''=I\setminus I'$. 
For $x\in V(F)$, let $W'(x)=W(x)\cap(\bigcup_{i\in I'}V(G_i))$.
Consider the subgraph $F'$ of $F$ with $V(F')=\{x\in V(F)\mid W'(x)\neq\emptyset\}$ such that distinct $x,y\in V(F')$ are adjacent if and only if the sets $W'(x)$ and $W'(y)$ are adjacent.
By the definition, we have that $G_{I'}$ is an identification of $F'$. Furthermore,
the construction and \autoref{obs:connect} immediately imply the following properties:
\begin{itemize}
\item[(i)] for each $x\in V_{\neq 2}$, there is a connected component of $F'$ containing $x$ and its neighbors in $F$, and 
\item[(ii)] each connected component of $F'$ contains a vertex $x\in V_{\neq 2}$ and its neighbors in $F$.
\end{itemize}

Besides (i) and (ii), we have the next useful property of the connected components of $F'$.
Let $x,y\in V_{\neq 2}$ be distinct vertices in the same connected component of $F'$ such that
the unique $x$-$y$-path $P$ in $F'$ does not contain other vertices of $V_{\neq 2}$. We say that a vertex $z\in V(P)$ is an \emph{$x$-$y$-connector} if there are $i\in{\sf Imp}(x)$ and  $j\in{\sf Imp}(y)$ such that both $G_i$ and $G_j$ have a vertex in $W(z)$.

\begin{claim}\label{cl:gluing}
Let $x,y\in V_{\neq 2}$ be distinct vertices in the same connected component of $F'$ such that
the unique $x$-$y$-path $P$ in $F'$ does not contain other vertices of $V_{\neq 2}$. 
Then $P$ contains an $x$-$y$-connector $z$.
\end{claim}

\begin{cproof}
The claim is straightforward if there is $i\in{\sf Imp}(x)$ such that $W(y)$ contains a vertex of $G_i$ or, symmetrically, there is $j\in{\sf Imp}(y)$ such that $W(x)$ contains a vertex of $G_j$. Assume that this is not the case. Let $T_j$ for $j\in[t]$ be the connected component of $F$ containing $x$ and $y$. Let $a$ be a leaf of $T_j$ such that $y$ is a vertex of the $x$-$a$-path in $T_j$. Respectively, let $b$ be a leaf of $T_j$ such that $x$ is a vertex of the $y$-$b$-path in $T_j$. There is $i\in {\sf Imp}(x)$ such that $G_i$ is 
$a$-important for $x$, and there is $j\in {\sf Imp}(y)$ such that $G_j$ is 
$b$-important for $y$. Let $a'$ be the vertex of $T_j$ closest to $a$ such that $W(a')$ contains a vertex of $G_i$, and let $b'$ be the vertex of $T_j$ closest to $b$ such that $W(b')$ contains a vertex of $G_i$.
Observe that because $W(y)\cap V(G_i)=\emptyset$ and $W(x)\cap V(G_j)=\emptyset$,     
$a'$ and $b'$ are internal vertices of $P$. We claim that the $x$-$a'$ and $y$-$b'$-paths $Q_1$ and $Q_2$ in $T_j$ have a common vertex~$z$.

For the sake of contradiction, assume that these paths have no common vertices. Then there is an edge $r_1r_2$ of $P$ such that $r_1r_2$ is not an edge of $Q_1$ and $Q_2$. Because $r_1r_2$ is an edge of $T_j$, there is $r\in V_{\neq 2}\cap V(T_j)$ and $h\in {\sf Imp}(r)$ such that $W(r_1)$ and $W(r_2)$ contain adjacent vertices of $G_h$. By symmetry, we can assume that $r$ is closer to $x$ than $y$. Then by~\autoref{obs:connect}, $G_h$ contains a vertex of $W(x)$. Because both $W(r_1)$ and $W(r_2)$ contain vertices of $G_h$, we obtain a contradiction with the fact that $G_i$ is $a$-important for $x$. This proves that  
$Q_1$ and $Q_2$ have a common vertex $z$.

It remains to observe that $W(z)$ contains vertices of both $G_i$ and $G_j$, that is,
$z$ an $x$-$y$-connector. This completes the proof.
\end{cproof}

Now, we deal with the edges of $F$ that are not in $F'$.
Let $R=E(F)\setminus E(F')$, and set $L=F[R]$. Then, because of properties (i) and (ii) of $F'$, we have that 
\begin{itemize}
\item[(a)] $L$ is a linear forest,
\item[(b)] each connected component of $L$ is a subpath of an $x$-$y$-path in $F$ with $x,y\in V_{\neq 2}$ whose endpoints are the vertices of $V(F')$.
\end{itemize}
In particular, by (b), the vertices of $L$ have degree~$2$ in $G$, and every $x$-$y$-path in $F$ with $x,y\in V_{\neq 2}$ contains vertices of at most one connected component of $L$. 

For $x\in V(L)$, let $W''(x)=W(x)\setminus W'(x)$ if $x\in V(F')$ and $W''(x)=W(x)$, otherwise.
Notice that for each edge $y_1y_2\in E(L)$, there is $i\in I''$ and adjacent $v_1,v_2\in V(G_i)$ such that $v_1\in W(y_1)$ and $v_2\in W(y_2)$. By the construction of the sets $W''(x)$, 
$v_1\in W''(y_1)$ and $v_2\in W''(y_2)$. Because $G_i$ has no vertices in the bags $W(x)$ for $x\in V_{\neq 2}$, we obtain that $\Wcal''=\{W''(x)\mid x\in V(L)\}$ is an $F''$-witness structure for $G_{J}$ for $J\subseteq I''$. Thus, $L$ is an identification of $G_{I''}$. 

We use the above properties to guess $I'$, $F'$, and $L$. To verify the correctness of each guess, we use the following claim.

\begin{claim}\label{cl:paths}
Let $F'$ be a subgraph of $F$ such that (i)~for each $x\in V_{\neq 2}$, there is a connected component of $F'$ containing $x$ and its neighbors in $F$, and (ii)~each connected component of $F'$ contains a vertex $x\in V_{\neq 2}$ and its neighbors in $F$. Let also 
$L=G[E(F)\setminus E(F')]$ be the linear forest. 
Suppose that there is a partition $I',I''$ of $[s]$, where $I''$ may be empty, such that $G_{I'}$ can be identified to $F'$, and $G_{I''}$
 can be identified to $L$. 
Then $G$ can be identified to $F$.
\end{claim}

\begin{cproof}
Assume that $\Wcal'=\{W'(x)\mid x\in V(F')\}$ is an $F'$-witness structure for $G_{I'}$, and let $\Wcal''=\{W''(x)\mid x\in V(L)\}$ be an $L$-witness structure for $G_{I''}$.
We set $W'(x)=\emptyset$ for $x\in V(F)\setminus V(F')$, and define $W''(x)=\emptyset$ for 
$x\in V(F)\setminus V(L)$. Then for $x\in V(F)$, define $W(x)=W'(x)\cup W''(x)$. 
Because the sets of indices $I'$ and $I''$ are disjoint, 
we find
that $\Wcal=\{W(x)\mid x\in V(F)\}$ is an $F$-witness structure for $G$.
This proves the claim. 
\end{cproof}

Now we are ready to construct our \XP-algorithm. 

First, we find the set of vertices $V_{\neq 2}\subseteq V(F)$. Recall that $|V_{\neq 2}|\leq 2\ell-2$. Then we guess $F'$ and $L$. For this, consider the paths $P$ in $F$ whose endpoints are the vertices of $V_{\neq 2}$. Notice that $F$ has at most $|V_{\neq 2}|-1\leq 2\ell-3$ such paths. For each $P$, we guess either zero or two distinct vertices $x$ and $y$ of degree~$2$. In the first case, the vertices and edges of $P$ are included in $F'$. In the second case, the $x$-$y$-subpath $P'$ of $P$ becomes a connected component of $L$. Respectively, the edges of $E(P)\setminus E(P')$ become the edges of $F'$, and their endpoints are included in $V(F')$. Since the number of paths is upper-bounded by $2\ell-3$ and $|V(F)|\leq n$,
we have $n^{\Ocal(\ell)}$ possibilities to guess $F'$ and $L$. Note that $L$ may be empty.
We denote by $T_1',\ldots,T_r'$ the connected components of $F'$.

In the next step, we guess ${\sf Imp}(x)$ for $x\in V_{\neq 2}$. More precisely, for each $x\in V_{\neq 2}$, we guess a vertex of $G_i$ with $i\in {\sf Imp}(x)$ in the  bag $W(x)$. 
This choice defines ${\sf Imp}(x)$ and the set of vertices $U(x)\subseteq W(x)$ representing the connected components $G_i$ with $i\in{\sf Imp}(x)$.
Since $|V_{\neq 2}|\leq2\ell-2$ and $|{\sf Imp}(x)|\leq \ell$, we have $n^{\Ocal(\ell^2)}$ choices. Then $I'=\bigcup_{x\in V_{\neq 2}}{\sf Imp}(x)$ by~\autoref{cl:imp}. 

To verify whether $G_{I'}$ can be identified to $F'$ respecting the choice of 
${\sf Imp}(x)$ for $x\in V_{\neq 2}$ and the sets of vertices  $U(x)$, we check whether the sets $U(x)$ for $x\in V_{\neq 2}$ are disjoint and discard the choice if this is not the case. Otherwise, for each $x\in V_{\neq 2}$, we identify the vertices of $U(x)$. 
Then we consider the connected components $T_i'$ for $i\in[r]$ containing at least two vertices of $V_{\neq 2}$. For each pair $x$ and $y$ of distinct vertices of $V_{\neq 2}\cap V(T_i')$ such that the $x$-$y$-path in $T_i'$ does not contain other vertices of $V_{\neq 2}$, we use \autoref{cl:gluing}, and guess two vertices $v_x\in V(G_i)$ with $i\in{\sf Imp}(x)$ and $v_y\in V(G_j)$ with $j\in{\sf Imp}(y)$ such that $v_x,v_y\in W(z)$ for an $x$-$y$-connector $z$. Then we identify $v_x$ and $v_y$.  Since we have at most $2\ell-3$ such pairs $x,y$ in $F'$, the total number of guesses for connectors is $n^{\Ocal(\ell)}$.

Let $G'$ be the graph obtained from $G_{I'}$ by these identifications. Notice that for correct guesses, the vertices of $G'$ included in the bags of the vertices of the same connected component of $F'$ should be in the same connected component of $G'$. By~\autoref{obs:connect}, we have that the number of connected components of $G'$ should be the same as the number of connected components of $F'$. We check this and discard the current choice of identified vertices if this is not the case. Otherwise, denote by $G_1',\ldots,G_r'$ the connected components of $G'$. Observe that by the choice of ${\sf Imp}(x)$ for $x\in V_{\neq 2}$, we know which connected component of $G'$ has vertices in the bags of the vertices of each connected component of $F'$. Thus, we assume that for each $i\in[r]$, $T_i'$ 
is an identification of $G_i'$. We verify this in time $n^{\Ocal(\ell)}$ by calling the algorithm of~\autoref{lem:tree}  for $G_i'$ and $T_i'$ for every $i\in[r]$. 

If we find that $G_{I'}$ can be identified to $F'$ in a way that respects
the choice of ${\sf Imp}(x)$ for $x\in V_{\neq 2}$, then we verify whether $G_{I''}$ for $I''=[s]\setminus I'$ can be identified to $L$. For this, we use the algorithm from~\autoref{lem:linear_forest_XP}. Since the number of connected components of $L$ is at most $2\ell-3$, this can be done in $n^{\Ocal(\ell)}$ time. 

If $G_{I'}$ can be identified to $F'$ and $G_{I''}$ can be identified to $L$, then we conclude by~\autoref{cl:paths} that $G$ can be identified to $F$ and report the answer. Otherwise, we discard the current guess of  ${\sf Imp}(x)$ for $x\in V_{\neq 2}$. If we fail to find an identification for every guess, we conclude that we have a \no-instance. This completes the description of the algorithm and its correctness proof.

To evaluate the running time, observe that the total number of guesses we make is $n^{\Ocal(\ell^2)}$. Since the connected components of a graph can be found in linear time by standard tools and the algorithms from~\autoref{lem:tree} and \autoref{lem:linear_forest_XP} run in $n^{\Ocal(\ell)}$, we have that the overall running time is $n^{\Ocal(\ell^2)}$. This concludes the proof.
\end{proof}

\section{Identification to Cliques and Cluster Graphs}\label{sec:cliques}

In this section, we consider \Hid{$\Hcal$} where $\Hcal$ is a class of complete or cluster graphs.
We refer to these problems as \kidH{Clique} and \kidH{Cluster}, respectively.

\subsection{Algorithmic Lower Bounds}
We begin with establishing the following algorithmic and kernelization lower bound. 

\begin{theorem}\label{th:NPclique}
\kidH{Clique} is \NP-complete and does not admit a polynomial kernel when parameterized by $k$ unless \coNP$\subseteq$~\NPp. Furthermore, these results hold for connected input graphs. 
\end{theorem}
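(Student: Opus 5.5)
We plan to reduce from \textsc{Red-Blue Dominating Set}, in its set-cover form: given a family $\mathcal F=\{F_1,\ldots,F_m\}$ over a universe $U$ with $|U|=u$ and an integer $t$, decide whether some $t$ sets cover $U$. This problem is \NP-complete. Parameterized by $u+t$ it admits no polynomial kernel unless \coNP$\subseteq$\NPp, and it has a polynomial-parameter transformation to many targets (Dom, Lokshtanov, Saurabh). Membership of \kidH{Clique} in \NP\ is clear from the witness-structure characterization. So it suffices to give a polynomial-time reduction producing $(G,k)$ with $k$ polynomial in $u+t$; this yields both claims at once. We may assume every element lies in some set and $t\le u$.

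\emph{Construction.} The graph $G$ consists of a clique $Q=\{q_1,\ldots,q_m\}$ of set-vertices and an independent set $U$ of element-vertices, with $e\in U$ adjacent to $q_i$ exactly when $e\in F_i$. We set $k=u+t-1$. $G$ is connected because $Q$ is a clique and every element has a neighbor in $Q$. If the family admits a naive weakness, we will replace each element by $k+2$ false twins, so that elements cannot be ``cheaply'' separated. This blow-up replaces $U$ by an independent set of size $(k+2)u$, so the resulting identification budget must be recomputed accordingly and still kept polynomial in $u+t$. We prefer to avoid the blow-up if the direct argument works.

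\emph{Forward direction.} Let $F_{i_1},\ldots,F_{i_t}$ cover $U$. We take one bag $B=U\cup\{q_{i_1},\ldots,q_{i_t}\}$ and keep every other vertex as a singleton. Singletons in $Q$ are pairwise adjacent and adjacent to $B$ through the clique, so the result is a clique obtained with $|B|-1=u+t-1$ identifications.

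\emph{Backward direction; this is the main obstacle.} Take a $K$-witness structure $\Wcal$ of $G$ using at most $u+t-1$ identifications. Two element-vertices in distinct bags must have those bags adjacent, which is possible only via set-vertices inside the bags. Our plan is a normalization (exchange) argument.
\begin{enumerate}
\item Note that $U$ is independent, so a bag containing element-vertices and no set-vertex is adjacent only through $Q$.
\item Repeatedly merge all bags that meet $U$ into a single bag $B$. A merge never destroys the clique property, since merged bags remain adjacent to everything they were adjacent to. What must be controlled is the identification count.
\item Then show that $B$ contains at most $t$ set-vertices and that these dominate $U$. Domination holds because each element $e\in B$ either is adjacent to a set-vertex of its original bag or that bag lies entirely within $U$.
\end{enumerate}
The count for step 2 is where care is needed. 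If the element-vertices are spread over $r$ bags and these bags contain $s$ set-vertices in total, the identifications used inside these bags are at least $u+s-r$. We need to compare this with $u+(\text{number of set-vertices needed})-1$. Pairwise adjacency of the $r$ bags, each of which needs a set-vertex adjacent to the elements of the others or to its own elements, should force the chosen set-vertices to cover $U$ with a slack that pays for the $r-1$ extra bags.

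If this accounting fails, we will fall back on the twin blow-up described above. With the blow-up, splitting the copies of an element across bags costs more than the budget allows, which forces all element-vertices into few bags and makes the counting go through.
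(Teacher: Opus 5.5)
\textbf{There is a genuine gap, and it lies in the construction itself, not only in the accounting.} Your graph sends every instance with $t\ge 1$ to a yes-instance. Put all of $U$ into one bag together with an arbitrary set-vertex $q_1$, and keep every other vertex as a singleton. This costs $u\le u+t-1=k$ identifications. The bag $U\cup\{q_1\}$ is adjacent to every $q_j$ through the clique $Q$, so the result is $K_m$.

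Your own forward witness already shows this. Because $B$ contains all of $U$, whether $q_{i_1},\ldots,q_{i_t}$ cover $U$ plays no role. So step~3 of your backward plan is false, not merely delicate: element-vertices in the same bag as the chosen set-vertices need not have any neighbor there, and nothing forces domination.

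The false-twin fallback does not repair this, because twins keep the element side independent. Dumping all $(k+2)u$ element-vertices into a single set-vertex bag is still always a valid clique-witness, so any budget at least the number of element-vertices is trivially yes. For a smaller budget you have not specified the budget; your definition is circular, since the number of twins depends on $k$ and $k$ depends on the number of twins. Nor have you given any mechanism that turns adjacency into coverage.

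\textbf{The missing idea, which the paper uses, is to make the element copies pairwise adjacent and more numerous than the budget.} The paper's construction is:
\begin{itemize}
\item take $2m$ copies $V_u$ of each element;
\item make both $V_{\mathcal U}=\bigcup_u V_u$ and $V_{\mathcal S}$ into cliques;
\item join $v_S$ to $V_u$ exactly when $u\in S$;
\item add an independent set $I$ of size $m-k$, complete to $V_{\mathcal U}$ and with no edges to $V_{\mathcal S}$;
\item use budget $m-1$.
\end{itemize}
The backward direction then goes through in three steps.
\begin{itemize}
\item[(a)] Some bag $W(x)\subseteq V_{\mathcal S}$ exists, because $|V_{\mathcal S}|=m$ exceeds the budget.
\item[(b)] Every vertex of $I$ must share a bag with a non-$I$ vertex, since a bag inside $I$ is not adjacent to $W(x)$. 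This consumes $m-k$ identifications and leaves $|W(x)|\le k$.
\item[(c)] Each $V_u$ has a copy in a singleton bag. These singletons are harmless among themselves because $V_{\mathcal U}$ is a clique, but each must be adjacent to $W(x)$, and this is exactly the covering condition.
\end{itemize}

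Note that the budget $m-1$ is dominated by the cost of disposing of the unchosen set-vertices. This is why the paper transforms from \textsc{Set Cover} parameterized by the number $m$ of sets rather than by $u+t$. In your setting that cost does not disappear once element copies are kept apart: an unchosen singleton $\{q_j\}$ is not adjacent to a surviving copy of an element outside $F_j$. A budget polynomial in $u+t$ alone would therefore need an additional gadget that your proposal does not contain.
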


\begin{proof}
We present a reduction from the {\sc Set Cover} problem. The task of this problem is, given 
a universe $\Ucal$, a family $\Scal$ of subsets of $\Ucal$, and $k\in\bN$, to decide whether 
there a subfamily $\Scal'\subseteq \Scal$ of size at most $k$ such that $\bigcup_{S\in\Scal'}S=\Ucal$.
{\sc Set Cover} is well-known to be \NP-complete~\cite{GareyJ79}. Furthermore,
{\sc Set Cover} parameterized by the number $m$ of sets has no polynomial kernel~\cite{DomLS09inco,CyganFKLMPPS15} unless
\coNP$\subseteq$~\NPp.
Since our reduction is a polynomial parameter transformation, we simultaneously prove both claims of the theorem.

\begin{figure}[ht]
\center
\includegraphics[scale=1]{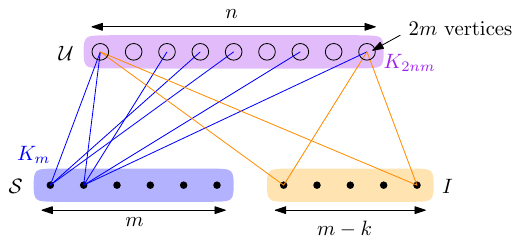}
\caption{The graph $G$ constructed in the proof of \autoref{th:NPclique}.}
\label{fig:NPclique}
\end{figure}

Let $\Ucal$ be a universe with $n$ elements, $\Scal$ be a family of $m$ subsets of $\Ucal$, and $k\in\bN$.
We construct a graph~$G$ on $2nm+2m-k$ vertices as follows (cf. \autoref{fig:NPclique} for an illustration).

\begin{itemize}
\item For each element $u\in\Ucal$, create a set $V_u$ containing $2m$ vertices.
\item For each subset $S\in\Scal$, create a vertex $v_S$.
\item Make a clique out of $V_\Ucal=\bigcup_{u\in\Ucal}V_u$ and out of $V_\Scal=\{v_S\mid S\in\Scal\}$.
\item For each $u\in \Ucal$ and each $S\in\Scal$, add an edge between $v_S$ and each vertex of $V_u$ if and only if $u\in\Scal$.
\item Construct an independent set $I$ of size $m-k$ and add an edge between each vertex of $I$ and each vertex of $V_u$, for all $u\in\Ucal$.
\end{itemize}
Notice that $G$ is connected. 
We claim that $(\Ucal,\Scal,k)$ is a \yes-instance of \textsc{Set Cover} if and only if $(G,m-1)$ is a \yes-instance of \kidH{Clique}.

Suppose that $(\Ucal,\Scal,k)$ is a \yes-instance of {\sc Set Cover}, and let $\Scal'\subseteq \Scal$ be a family of size $k$ such that $\bigcup_{S\in\Scal'}S=\Ucal$.
We identify the vertices in $G$ corresponding to $\Scal'$ into a single vertex (this requires $k-1$ identifications). Since $\Scal'$ is a set cover, the obtained vertex is adjacent to every vertex of $V_\Ucal$. Then we identify pairwise each vertex corresponding to a set in $\Scal\setminus \Scal'$ with a vertex of $I$ (this requires $m-k$ identifications).
The obtained graph is a clique.
So, $(G,m-1)$ is a \yes-instance of \kidH{Clique}.

Suppose now that $(G,m-1)$ is a \yes-instance of \kidH{Clique}.
Let $K_p$ for $p\geq 2nm+m-k+1$ be the clique obtained after at most $m-1$ identifications, and let $\Wcal=\{W(x)\mid x\in V(K_p)\}$ be a $K_p$-witness structure of $G$.
In the next two claims, we show that there is a bag $W(x)\subseteq V_\Scal$ in $\Wcal$ corresponding to a set cover of size at most $k$.

\begin{claim}
There exists $x\in V(K_p)$ such that $W(x)\subseteq V_\Scal$.
\end{claim}

\begin{cproof}
For the sake of contradiction, assume that for each $v\in V_\Scal$, there is $u\in V(G)\setminus V_\Scal$ such that $u$ and $v$ are in the same bag of $\Wcal$. This means that each vertex $v\in V_\Scal$ is identified with a vertex outside $V_\Scal$, and because $|V_\Scal|\geq m$, we need at least $m$ identifications, contradicting that the number of identifications is at most $m-1$. This proves the claim.
\end{cproof}

\begin{claim}\label{cl:small}
We have $|W(x)|\le k$.
\end{claim}

\begin{cproof}
Notice that for each $v\in I$, there is a vertex $u\in V(G)\setminus I$ such that $u$ and $v$ are in the same bag of $\Wcal$. Otherwise, if there is $y\in V(K_p)$ such that $W(y)\subseteq I$, the bags $W(x)$ and $W(y)$ are not adjacent, contradicting that we obtain a clique by the identifications. Thus, we need at least $|I|=m-k$ identifications for the vertices of $I$. This implies that to identify the vertices in $W(x)$ into a single vertex, we can use at most $(m-1)-(m-k)=k-1$ identifications. Therefore, $|W(x)|\leq k$.  
\end{cproof}

Let $\Scal'=\{S\in\Scal\mid v_S\in W(x)\}$.
Given that we do at most $m-1$ identifications, for each $u\in \Ucal$, there is at least one vertex $v_u$ out of the $2m$ vertices of $V_u$ such that $W(y_u)=\{v_u\}$ for some $y_u\in V(K_p)\setminus\{x\}$.
Given that $x$ is adjacent to each $y_u$, there is $v_S\in W(x)$ such that $v_S$ is adjacent to $v_u$ in $G$. This implies that the element $u\in\Ucal$ is covered by $S$.
Thus, $\Scal'$ covers $\Ucal$ and $(\Ucal,\Scal,k)$ is a \yes-instance of {\sc Set Cover}. This completes the proof.
\end{proof}

Because the lower bounds from \autoref{th:NPclique} hold for connected graphs, by making use of~\autoref{obs:connect}, we obtain the same results for cluster graphs. 

\begin{corollary}\label{cor:NPcluster}
\kidH{Cluster} is \NP-complete and does not admit a polynomial kernel when parameterized by $k$ unless \coNP$\subseteq$~\NPp.
\end{corollary}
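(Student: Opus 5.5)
The plan is to reduce \kidH{Cluster} to \kidH{Clique} using the instances produced in the proof of \autoref{th:NPclique}. Membership in \NP\ is immediate: a sequence of at most $k$ identifications is a certificate, and whether the resulting graph is a cluster graph can be checked in polynomial time. For hardness we take the same instance $(G,m-1)$ built from a \textsc{Set Cover} instance $(\Ucal,\Scal,k)$. This construction is a polynomial parameter transformation from \textsc{Set Cover} parameterized by $m$, and $G$ is connected. We then claim that $(G,m-1)$ is a \yes-instance of \kidH{Cluster} if and only if it is a \yes-instance of \kidH{Clique}. Given this equivalence, both NP-hardness and the kernel lower bound transfer verbatim, since the transformation is unchanged.

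One direction is trivial, because every complete graph is a cluster graph. For the other direction, suppose $G$ can be identified to a cluster graph $H$ using at most $m-1$ identifications. By \autoref{obs:connect}, every identification of a connected graph is connected. Hence $H$ is a connected cluster graph, that is, a complete graph, and so $(G,m-1)$ is a \yes-instance of \kidH{Clique}.

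The only point needing care is to confirm that \autoref{obs:connect} applies to the final graph after any sequence of identifications, not just a single one. This follows by induction on the number of identifications. More generally, the same argument shows that for any connected input $G$, the instances $(G,k)$ of \kidH{Cluster} and \kidH{Clique} are equivalent, so any reduction producing connected graphs transfers. There is no real obstacle beyond this observation.
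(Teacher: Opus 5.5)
Your proposal is correct and follows essentially the same route as the paper. The paper likewise reuses the connected instances from \autoref{th:NPclique} and uses \autoref{obs:connect} (a connected cluster graph is a clique) to transfer both the \NP-completeness and the kernel lower bound to \kidH{Cluster}.
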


\subsection{FPT Algorithms and Kernelization for Identification to Cliques and Cluster Graphs}

In this section, we provide algorithmic results for \kidH{Clique} and \kidH{Cluster}
parameterized by $k$ and the dual parameterization by $n-k$.
In~\autoref{th:NPclique} and \autoref{cor:NPcluster}, we proved that it is unlikely that \kidH{Clique} and \kidH{Cluster} have polynomial kernels when parameterized by $k$. However, the problems are \FPT for this parameterization. We first demonstrate such an algorithm for 
\kidH{Clique} and then use it for \kidH{Cluster}.

We consider the auxiliary {\sc $S$-Constrained Clique-Identification} problem. This problem asks, given a graph $G$, a set of vertices $S\subseteq V(G)$ such that 
$G-S$ is a clique, and an integer $k$, whether $G$ can be identified to a clique $K_h$ for $h\geq n-k$ with a $K_h$-witness structure $\Wcal=\{W(x)\mid x\in V(K_h)\}$, where each bag contains at most one vertex of $S$. In other words, we ask whether we can obtain a clique by at most $k$ identifications in such a way that the vertices of $S$ are not identified with each other. This variant of \kidH{Clique} has a polynomial kernel when parameterized by the size of $S$ and $k$.

\begin{lemma}\label{lem:aux_kernel}
{\sc $S$-Constrained Clique-Identification} has a kernel with $\Ocal(|S|k)$ vertices
and can be solved in $(|S|k)^{\Ocal(k)}\cdot n^{\Ocal(1)}$ time.
\end{lemma}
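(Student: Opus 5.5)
Let $C=V(G)\setminus S$ denote the clique. The plan is to understand the structure of a solution and then shrink $C$ with reduction rules. Take a $K_h$-witness structure $\Wcal$ with each bag holding at most one vertex of $S$, obtained by at most $k$ identifications. At most $2k$ vertices lie in non-trivial bags. Every vertex of $C$ that lies in a trivial bag must be adjacent, in $G$, to every vertex of $S$ that also lies in a trivial bag. In addition, every bag containing a vertex $s\in S$ must be adjacent to every trivial bag $\{c\}$ with $c\in C$. A bag consisting only of $C$-vertices would be pointless, because $C$ is already a clique; so we may assume every non-trivial bag contains exactly one vertex of $S$ together with some vertices of $C$. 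Hence a solution amounts to choosing, for each $s\in S$, a set $A_s\subseteq C$ to merge into $s$ (the sets pairwise disjoint, $\sum_s|A_s|\le k$), subject to one condition: for every pair $s,s'\in S$, the bags $\{s\}\cup A_s$ and $\{s'\}\cup A_{s'}$ are adjacent, and every vertex $c\in C\setminus\bigcup_s A_s$ has a neighbor in each bag $\{s\}\cup A_s$. Since $c$ is adjacent to all of $C$, this condition holds automatically whenever $A_s\neq\emptyset$. So the constraint only bites for those $s$ with $A_s=\emptyset$: each such $s$ must be adjacent to all unmerged vertices of $C$, and to the bags of the other vertices of $S$.

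This leads to the following kernel. For each $s\in S$, let $\bar N(s)=C\setminus N_G(s)$. If $A_s=\emptyset$, then all of $\bar N(s)$ must be absorbed into other bags, which requires $|\bar N(s)|\le k$. I will call $s$ \emph{expensive} if $|\bar N(s)|>k$; such an $s$ is forced to have $A_s\neq\emptyset$. The key point is that for an expensive $s$ it does not matter which vertex of $C$ goes into $A_s$, except through its effect on the bag adjacencies to other vertices of $S$.

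I mark the vertices of $C$ as follows. First, mark all of $\bigcup\{\bar N(s): s \text{ not expensive}\}$; this has size at most $|S|k$. Second, for each pair $s,s'\in S$, mark up to $k+1$ vertices of $C\cap N(s')$ and, for each $s$, additionally mark $k+1$ arbitrary further vertices of $C$. I expect the pair-marking is only needed per $s'$, which keeps the total at $\Ocal(|S|k)$. The reduction rule deletes an unmarked vertex $c\in C$ and keeps $k$ unchanged.

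For safeness, in the forward direction take a solution of the original instance. If $c$ was merged into some $A_s$, replace it by an unused marked vertex of the same type; at most $k$ marked vertices are used in total, and the $k+1$ spares per type guarantee such a vertex exists. If $c$ was unmerged, simply delete it; this cannot break anything. In the backward direction, re-add $c$ as a trivial bag. Because $c$ is unmarked, it is adjacent to every non-expensive $s$, and every expensive $s$ has a non-empty $A_s\subseteq C$ which $c$ sees. Hence $c$ is adjacent to all bags. This yields a kernel with $|S|+\Ocal(|S|k)$ vertices.

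The running time then follows by brute force on the kernel: guess for each of the at most $k$ identifications a pair consisting of a vertex of $C$ and a vertex of $S$ in the kernel, which gives $(|S|k)^{\Ocal(k)}$ choices, and check each candidate in polynomial time. I expect the main obstacle to be making the marking scheme precise. The crux is that the adjacency requirement between $A_s\cup\{s\}$ and $\{s'\}\cup A_{s'}$ can depend on which specific vertex of $C$ is chosen; this is exactly what the per-$s'$ marking of neighbors of $s'$ is meant to handle. Stating the exchange argument carefully, so that a substituted vertex preserves every adjacency witnessed by the deleted one, is the delicate step.
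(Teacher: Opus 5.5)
Your skeleton is close to the paper's: keep the non-neighbors of cheap $S$-vertices, keep $k+1$ spare clique vertices, and re-insert a deleted vertex as a trivial bag. However, your marking breaks the backward direction. You call $s$ expensive with respect to $G$ but never protect its non-neighbors. After unmarked vertices are deleted, $s$ may have only $k$ non-neighbors left in $C$. A solution of the reduced instance may then keep $\{s\}$ as a singleton bag, so ``every expensive $s$ has a non-empty $A_s$'' fails there. Re-inserting an unmarked non-neighbor of $s$ as a trivial bag then destroys adjacency.

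\textbf{Counterexample.} Take $k=1$ and cliques $S=\{s,t_0,t_1\}$ and $C=\{c_1,\dots,c_N\}$ with $N\ge 20$. Let $s,t_0,t_1$ be adjacent to all of $C$ except $\bar N(s)=\{c_1,c_2\}$, $\bar N(t_0)=\{c_3\}$ and $\bar N(t_1)=\{c_2\}$. This is a no-instance. Since $G$ is not a clique, exactly one bag has size two. If $s$ is not in that bag, the bag must contain both $c_1$ and $c_2$, and then it is not adjacent to $\{s\}$. Otherwise the bag is $\{s,x\}$ with $x\in C$; then $t_0$ forces $x=c_3$, and $\{c_2\}$ is not adjacent to $\{t_1\}$. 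Your scheme marks $c_2$, $c_3$, common neighbors and arbitrary spares, so $c_1$ may be deleted. Yet $G-c_1$ is a yes-instance with the single bag $\{t_0,c_2\}$.

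\textbf{The missing ingredient.} The paper marks, for \emph{every} $v\in S$, up to $k+1$ of its non-neighbors in $C$. Then a vertex with at least $k+1$ non-neighbors keeps at least $k+1$ in $G'$. So the forcing argument also holds in $G'$: if $v$'s bag contains no clique vertex, all these non-neighbors must be absorbed into bags containing some other vertex, costing more than $k$. Hence every bag of a solution for $G'$ with no clique vertex is some $\{w\}$ with $w\in S$. Such a $w$ has at most $k$ non-neighbors in $C$, all of them marked, so $w$ is adjacent to the re-inserted vertex.

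\textbf{Forward direction.} This also needs repair, though the fix is easy and makes your pair-marking unnecessary. ``Replace $c$ by an unused marked vertex of the same type'' is not defined. Marking neighbors of each $s'$ separately does not give a replacement adjacent simultaneously to all singleton bags for which $c$ was the only witness. The right observation is that a clique vertex in a bag containing no $S$-vertex lies in a bag that already sees every other bag. The paper marks one extra set $U$ of $k+1$ vertices. By counting, some $u\in U$ lies in a bag $W(z)$ with $W(z)\cap S=\emptyset$, distinct from the bag $W(y)$ of the deleted vertex. One then replaces $W(y),W(z)$ by $(W(y)\cup W(z))\setminus\{c\}$. One vertex and one bag disappear, so the budget is unchanged and every bag still has at most one $S$-vertex.

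\textbf{Normal form.} Your normal form, that every non-trivial bag contains exactly one $S$-vertex, is false. Let $S=\{s_1,\dots,s_4\}$ be a clique, let $s_1,s_2$ miss only $a\in C$ and $s_3,s_4$ miss only $b\in C$, and let $k=1$. Then the unique solution merges $a$ with $b$. So drop the $A_s$-reformulation and the bound of $k$ merged clique vertices; the paper's argument uses neither. The brute-force running-time bound is fine once the kernel is correct.
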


\begin{proof}
Let $(G,S,k)$ be an instance of \textsc{$S$-Constrained Clique-Identification}.
If we have $|V(G)|\leq |S|(k+2)+k$, we return the input instance, as it has a bounded size. Assume that this is not the case and $|V(G)|\geq|S|(k+2)+k+1$.
The following claim will be useful for us.

\begin{claim}\label{cl:samebag}
Suppose that $v\in S$ has at least $k+1$ non-adjacent vertices in $V(G)\setminus S$. 
Then in any $K_h$-witness structure $\Wcal=\{W(x)\mid x\in V(K_h)\}$ for $h\geq n-k$, there is $x\in V(K_h)$ such that $W(x)$ contains $v$ and a vertex of $V(G)\setminus S$.
\end{claim}

\begin{cproof}
The proof is by contradiction. 
Suppose that there is a $K_h$-witness structure $\Wcal=\{W(x)\mid x\in V(K_h)\}$ such that the bag $W(x)$ containing $v$ does not include any vertex of $V(G)\setminus S$. Let $X\subseteq V(G)\setminus S$ be the set of non-neighbors of $v$ that are not in $S$. Since $vu\notin E(G)$ for $u\in X$, we have that each bag $W(y)$ containing a vertex of $X$ should contain at least one vertex in $V(G)\setminus X$. However, this means that we need at least $|X|\geq k+1$ identifications, contradicting that $h\geq n-k$. This proves the claim.  
\end{cproof}

For each $v\in S$, we construct the set $R_v$ of at most $k+1$ non-neighbors of $v$ in $V(G)\setminus S$---if $v$ has at most $k$ such non-neighbors, we include all of them in $R_v$, and we include arbitrary $k+1$ non-neighbors, otherwise. We say that the vertices of $R_v$ are \emph{marked}. After marking vertices for $v\in S$, we select an arbitrary set 
$U\in V(G)\setminus(S\cup(\bigcup_{v\in S}R_v))$ of size $k+1$ and mark these vertices as well; note that $U$ exists because $|V(G)|\geq|S|(k+2)+k+1$. Now we delete all non-marked vertices of $V(G)\setminus S$. Denote by $G'$ the obtained graph. Note that 
$|V(G')|\leq |S|+\sum_{v\in S}|R_v|+|U|\leq |S|(k+2)+k+1$. We show that the instances $(G,S,k)$ and $(G',S,k)$ of \textsc{$S$-Constrained Clique-Identification} are equivalent.

\begin{claim}\label{cl:equiv_aux}
$(G,S,k)$ and $(G',S,k)$ are equivalent instances of \textsc{$S$-Constrained Clique-Identification}.
\end{claim}

\begin{cproof}
Let $v$ be a non-marked vertex of $G-S$. Assume that $G'=G-v$. As non-marked vertices can be deleted one by one, it is sufficient to show that $(G,S,k)$ and $(G',S,k)$ are equivalent. 

Suppose that $(G,S,k)$ is a yes-instance of \textsc{$S$-Constrained Clique-Identification}, and let $\Wcal=\{W(x)\mid x\in V(K_h)\}$ be a $K_h$-witness structure for $h\geq n-k$. Let $y\in V(K_h)$ be the vertex such that $v\in W(y)$. Since $|U|\geq k+1$, there is $z\in V(K_h)$ distinct from $y$ 
such that $W(z)$ contains a vertex $u\in U$ but $W(z)\cap S=\emptyset$.
Consider the family of sets $\Wcal'=\{W'(x)\mid x\in V(K_h)\setminus \{y\}\}$, where 
$W'(z)=(W(y)\cup W(z))\setminus\{v\}$ and $W'(x)=W(x)$ for $x\in V(K_h)$ distinct from $y$ and $z$. 
Observe that because $W(z)\cap S=\emptyset$, each bag $W'(x)$ contains at most one vertex of $S$.
We show that $\Wcal'$ is a $(K_h-y)$-witness structure for $G'$. 
By the construction $\Wcal'$ is a partition of the vertex set of $K_h-y$. Thus, we have to show that the sets $W'(x)$ are pairwise adjacent. 
As $W'(x)=W(x)$ for $x\neq y,z$, we have that $W'(x)$ and $W'(x')$ are adjacent if both $x$ and $x'$ are distinct from $y$ and $z$ because $\Wcal$ is a $K_h$-witness structure.  
Since $W(z)\subseteq W'(z)$, $W'(z)$ and $W'(x)=W(x)$ for $x\neq y,z$ are adjacent.   
This 
proves that $\Wcal'$ is a $(K_h-y)$-witness structure for $G'$. Since $|V(G')|=n-1$ and $K_h-y$ has at least $h-1$ vertices, we obtain that 
$(G',S,k)$ is a \yes-instance of \textsc{$S$-Constrained Clique-Identification}.

For the opposite implication, assume that $(G',S,k)$ is a yes-instance of \textsc{$S$-Constrained Clique-Identification}, and let $\Wcal'=\{W'(x)\mid x\in V(K_{h-1})\}$ be a  corresponding $K_{h-1}$-witness structure for $h-1\geq (n-1)-k$. Assume that $K_{h-1}=K_h-y$ for some $y\in V(K_h)$. We define 
$\Wcal=\{W(x)\mid x\in V(K_{h})\}$ by setting $W(x)=W'(x)$ for $x\in V(K_{h-1})$, and defining $W(y)=\{v\}$. We claim that $\Wcal$ is a $K_h$-witness structure for $G$.
Note that $\Wcal$ is a partition of $V(G)$. Because $\Wcal'$ is a $K_{h-1}$-witness structure for $G'$, the sets $W(x)=W'(x)$ for $x\in V(K_{h-1})$ are pairwise adjacent. If $W(x)=W'(x)$ for $x\in V(K_h)$ contains a vertex of $V(G')\setminus S$, then $W(y)$ and $W(x)$ are adjacent because $V(G)\setminus S$ is a clique. Assume that $W(x)$ does not contain a vertex of $V(G')\setminus S$. Then by~\autoref{cl:samebag}, $W(x)=\{w\}$ for $w\in S$ and $w$ has at most $k$ non-neighbors in $V(G')\setminus S$. Then $w$ has at most $k+1$ non-neighbors in $V(G)\setminus S$ and these vertices are marked. Since $v$ is non-marked, we have that $v$ is a neighbor of $w$ in $G$. Therefore, $W(x)$ and $W(y)$ are adjacent. This proves that $\Wcal$ is a $K_h$-witness structure for $G$. Therefore, $(G,S,k)$ is a \yes-instance of \textsc{$S$-Constrained Clique-Identification}. This proves the claim.
\end{cproof}

Because $(G,S,k)$ and $(G',S,k)$ are equivalent and $|V(G')|\leq |S|(k+2)+k+1$, our kernelization algorithm returns $(G',S,k)$.
Since marking of the vertices of $G$ can be done in polynomial time, $G'$ can be constructed in polynomial time. This proves that \textsc{$S$-Constrained Clique-Identification} has a polynomial kernel with $\Ocal(|S|k)$ vertices.

To see that the problem can be solved in $(|S|k)^{\Ocal(k)}\cdot n^{\Ocal(1)}$ time, note that given an instance where $n=\Ocal(|S|k)$, we can use brute force using the fact that there are at most $\binom{n}{2}^k$ possibilities to perform $k$ identifications.
This completes the proof.
\end{proof}

Now we prove the main result of the subsection.

\begin{theorem}\label{th:FPTclique}
\kidH{Clique} is solvable in time $k^{\Ocal(k)}\cdot n^{\Ocal(1)}$.
\end{theorem}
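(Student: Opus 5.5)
We reduce \kidH{Clique} to \textsc{$S$-Constrained Clique-Identification} and then invoke \autoref{lem:aux_kernel}. Let $(G,k)$ be an instance and let $\Wcal=\{W(x)\mid x\in V(K_h)\}$ be a hypothetical solution with $h\ge n-k$. Call $D=\bigcup\{W(x)\mid |W(x)|\ge 2\}$ the set of vertices lying in non-trivial bags. The total number of identifications is $\sum_x(|W(x)|-1)\le k$, so $|D|\le 2k$. The trivial bags are pairwise adjacent, so $V(G)\setminus D$ is a clique. Hence $D$ is a vertex cover of the complement $\overline{G}$ of size at most $2k$.

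The first step is to enumerate candidates for $D$ in FPT time. We use the standard bounded search tree for \textsc{Vertex Cover} on $\overline{G}$: branching on non-edges of $G$ enumerates all inclusion-minimal vertex covers of $\overline{G}$ of size at most $2k$ in $2^{2k}n^{\Ocal(1)}$ time. Let $C$ be such a minimal cover contained in $D$; it is produced by some branch.

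It remains to guess how the vertices of $C$ are grouped into bags. Since $|C|\le 2k$, there are at most $(2k)^{2k}=k^{\Ocal(k)}$ partitions of $C$. For a guessed partition, with classes $C_1,\dots,C_q$, we identify each class $C_i$ into a single vertex $c_i$. This costs $|C|-q$ identifications, and we set $k'=k-(|C|-q)$. Call the resulting graph $G'$ and put $S=\{c_1,\dots,c_q\}$. Then $G'-S=G-C$ is a clique, because $C$ covers $\overline{G}$. The remaining solution never merges two distinct $c_i$, since that would contradict the guessed partition. So $(G',S,k')$ is exactly an instance of \textsc{$S$-Constrained Clique-Identification}, where vertices of $D\setminus C$ may still be merged with the $c_i$ or among themselves.

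By \autoref{lem:aux_kernel} this instance is solvable in $(|S|k)^{\Ocal(k)}n^{\Ocal(1)}=k^{\Ocal(k)}n^{\Ocal(1)}$ time, since $|S|\le 2k$. Summed over all $2^{\Ocal(k)}\cdot k^{\Ocal(k)}$ guesses, this gives the claimed running time.

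For correctness, the forward direction is the argument above: a solution $\Wcal$ yields a minimal cover $C\subseteq D$, and the partition of $C$ induced by $\Wcal$ gives a yes-instance of the constrained problem. The backward direction is immediate, since any solution of a constrained instance composes with the identifications already performed.

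The main obstacle is making sure that restricting attention to the \emph{minimal} cover $C$ instead of the full set $D$ loses no solutions. The concern is that a bag of $\Wcal$ may contain several vertices of $C$ together with vertices outside $C$. Our guess records only the partition of $C$ induced by $\Wcal$, and the constrained problem still allows the non-$C$ vertices of each bag to join the corresponding $c_i$, so every bag of $\Wcal$ is realizable. I would also verify the budget accounting carefully: the identifications performed plus $k'$ must equal the original budget, which holds because $\sum_x(|W(x)|-1)$ splits into the merges inside $C$ plus the remaining merges.
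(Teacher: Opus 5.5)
Your proof is correct and follows the paper's argument: find at most $2k$ vertices whose removal leaves a clique (a vertex cover of $\overline{G}$), guess the partition of them induced by the solution, identify each class, and solve the resulting \textsc{$S$-Constrained Clique-Identification} instance via \autoref{lem:aux_kernel}. The only difference is that you branch to obtain a cover $C\subseteq D$, whereas the paper computes one arbitrary modulator $S$ with $|S|\le 2k$ and guesses the restriction of $\Wcal$ to it; containment in $D$ is never needed, so your ``main obstacle'' about minimality does not arise, though the extra $2^{\Ocal(k)}$ branching is harmless.
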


\begin{proof}
Let $(G,k)$ be an instance of \kidH{Clique}. 

Observe that if $(G,k)$ is a \yes-instance, then we can delete at most $2k$ vertices of $G$ to obtain a clique---given at most $k$ identifications, we simply delete every vertex which is identified with some other vertex. 
Since a set of vertices $X$ is a clique in $G$ if and only if $X$ is an independent set in the complement graph $\overline{G}$, we have that in the case of \yes-instance, $\overline{G}$ has a vertex cover of size at most $2k$. Because \textsc{Vertex Cover} is \FPT parameterized by the solution size, we can in $2^{\Ocal(k)}\cdot n^{\Ocal(1)}$ time (see, e.g.,~\cite{ChenKX06impr,CyganFKLMPPS15}), either 
find $S\subseteq V(G)$ of size at most $2k$ such that $G-S$ is a clique, or report that $(G,k)$ is a \no-instance of \kidH{Clique}.

Assume that there is $S\subseteq V(G)$ of size at most $2k$ such that $G-S$ is a clique.
If $(G,k)$ is a \yes-instance, then there is a $K_h$-witness structure 
$\Wcal=\{W(x)\mid x\in V(K_h)\}$ for $h\geq n-k$. 
In the next step of our algorithm, we guess the restriction of $\Wcal$ to $S$.
In other words, we guess the partition $\Scal=\{S_i\mid i\in[p]\}$ of $S$ such that the vertices of $S_i$ are in the same bag, and the vertices of distinct $S_i$ and $S_j$ are in distinct bags. Note that since our budget is upper bounded by $k$, we consider only partitions 
with $\sum_{i=1}^p(|S_i|-1)\leq k$. Given $\Scal$, we identify the vertices in the same set of the partition. 

Let $G'$ be the obtained graph, and let $S'$ be the set of vertices obtained from $S$ by these identifications. We set $k'=k-\sum_{i=1}^p(|S_i|-1)$. Observe that if $(G,k)$ is a \yes-instance and the guess of $\Scal$ is correct, then  
we can obtain a clique from $G'$ by at most $k'$ identifications in such a way that 
distinct vertices of $S'$ are in the distinct bags of the corresponding witness structure. This means that if $(G,k)$ is a \yes-instance and $\Scal$ is the restriction of a $K_h$-witness structure to $S$, 
then $(G',S',k')$ is a \yes-instance of \textsc{$S$-Constrained Clique-Identification}. 
We use the algorithm from~\autoref{lem:aux_kernel} to verify whether this is the case. 
Observe that because $G'$ is obtained by $k-k'$ identifications, if $(G',S',k')$ is a \yes-instance of \textsc{$S$-Constrained Clique-Identification}, then $(G,k)$ is a \yes-instance of \kidH{Clique}.  This completes the description of the algorithm and its correctness proof.

To evaluate the running time, recall that $S$ is constructed in $2^{\Ocal(k)}\cdot n^{\Ocal(1)}$ time. Then we consider $k^{\Ocal(k)}$ partitions of the set $S$ of size at most $2k$. Finally, the algorithm from~\autoref{lem:aux_kernel} runs in $k^{\Ocal(k)}\cdot n^{\Ocal(1)}$ time as $|S|\leq 2k$. Thus, the overall running time is $k^{\Ocal(k)}\cdot n^{\Ocal(1)}$. This completes the proof.
\end{proof}

\begin{corollary}\label{cor:FPTcluster}
\kidH{Cluster} can be solved in $k^{\Ocal(k)}\cdot n^{\Ocal(1)}$ time.
\end{corollary}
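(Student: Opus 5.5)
The plan is to reduce \kidH{Cluster} to \kidH{Clique} on each connected component, using \autoref{obs:connect}. Every bag of a cluster-witness structure lies inside one clique $K$ of the target. Any connected component $C$ of $G$ meeting the bags of $K$ must have all its vertices in bags of $K$, since bags of distinct cliques are nonadjacent. Hence the components of $G$ are partitioned into groups, and each group is identified to one clique. Conversely, if a group of components can be identified to a clique, then taking the union over all groups gives a cluster graph.

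The first structural step is to show that merging distinct components into one clique is never needed. Suppose a group $C_1,\dots,C_q$ with $q\ge 2$ is identified to a clique $K_h$. Every bag must then contain vertices of every $C_i$: a bag lying entirely in $C_1$ is nonadjacent to a bag lying entirely in $C_2$, so if some bag missed $C_j$, then every other bag would have to meet both $C_j$ and something adjacent to that bag. Rather than rely on this, I would use a direct exchange argument. Restrict the witness structure to $C_1$, namely $\{W(x)\cap V(C_1)\}$, discarding empty sets. I expect this to give a clique witness structure of $C_1$. Adjacency between two bags $W(x),W(y)$ is realized by an edge inside some component, but not necessarily inside $C_1$, and this is the delicate point.

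So I would instead argue by counting. If $q\ge2$, every one of the $h$ bags must be adjacent to every other bag. Consider a bag $W(x)$ with no vertex in $C_1$. The bags that do meet $C_1$ number at least one, and all of them must also meet some $C_j$ with $j\ne 1$ to be adjacent to $W(x)$. The cost is $\sum(|W(x)|-1)=\sum_i|C_i|-h$. The alternative is to identify each $C_i$ separately to a clique, contracting, say, $C_i$ to a single vertex at cost $|C_i|-1$ each. That alternative may be worse, so the cleaner approach is algorithmic rather than a normal-form lemma.

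Algorithmically, the key observation is that components which are already cliques cost $0$ and stay separate. The remaining components each require at least one identification, whether handled alone or in a group, because a non-clique component must have two of its vertices in a common bag or be merged with another component. If there are more than $k$ non-clique components, we report a \no-instance. Otherwise at most $k$ non-clique components $C_1,\dots,C_q$ remain, with $q\le k$. We guess their partition into groups; there are at most $k^{\Ocal(k)}$ partitions. For each group $\{C_i\}_{i\in J}$ we guess a budget $k_J$ with $\sum_J k_J\le k$, and run \autoref{th:FPTclique} on the disjoint union $G_J$ with parameter $k_J$. Equivalently, we compute the minimum clique-identification number of each $G_J$ by running \autoref{th:FPTclique} for $k_J=0,\dots,k$, and check whether some partition has total cost at most $k$. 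Correctness follows from the grouping argument of the first paragraph together with the clique components being free. The running time is $k^{\Ocal(k)}\cdot k\cdot k^{\Ocal(k)}n^{\Ocal(1)}=k^{\Ocal(k)}n^{\Ocal(1)}$.

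The main point to verify carefully is the bound $q\le k$. It needs that a non-clique component contributes at least one identification within its group. Each group of $r$ components that is identified to a connected clique needs at least $r-1$ identifications merely to connect it. If $r=1$ and the component is not a clique, one more identification is needed. Thus a group of $r\ge2$ components needs $r-1\ge r/2$ identifications, which gives $q\le 2k$ rather than $q\le k$. This bound still suffices for $k^{\Ocal(k)}$ partitions.
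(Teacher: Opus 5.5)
Your plan follows the paper's: bound the number of non-clique components, set the clique components aside, guess how the remaining components are grouped, and run \autoref{th:FPTclique} on each group. There is a gap, though. Completeness of your algorithm rests on the assertion that clique components ``cost $0$ and stay separate''. That is, you need some optimal solution that never puts a clique component in a group with other components, and you give no argument for this. Since you only enumerate partitions of the non-clique components, a solution that mixes clique and non-clique components is not covered otherwise.

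The assertion is not automatic. It is exactly the difficulty you flagged yourself in your second paragraph. If a clique component $C$ shares a group with other components, two bags may be adjacent only via an edge of $C$. Then deleting $V(C)$ from the bags need not leave a clique-witness structure of the rest of the group. The paper proves this step separately (\autoref{cl:restr_marked}) with an exchange argument that merges the bags meeting both $C$ and other components.

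The gap closes by counting. Suppose a group consists of a clique component $C$ and a nonempty union $D$ of other components, identified to $K_h$ with $|V(C)|+|V(D)|-h$ identifications. Let $a$ be the number of bags disjoint from $V(C)$. At most $|V(C)|$ bags meet $V(C)$, so $h\le a+|V(C)|$.

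If $a\ge 1$, these $a$ bags lie inside $V(D)$ and are pairwise adjacent via edges of $D$. Adding all remaining vertices of $D$ to one of them gives a $K_a$-witness structure of $G[V(D)]$ of cost $|V(D)|-a\le |V(C)|+|V(D)|-h$. Meanwhile $C$, given trivial bags, costs $0$.

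If $a=0$, then $h\le |V(C)|$, so the original cost is at least $|V(D)|$. This exceeds the cost $|V(D)|-1$ of identifying $V(D)$ to a single vertex.

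Repeating this detaches every clique component without increasing the total, and a lone clique component can then be given trivial bags.

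Separately, your algorithm rejects when there are more than $k$ non-clique components, but your last paragraph only proves $q\le 2k$. Either use $2k$ as the threshold, which still gives $k^{\Ocal(k)}$ partitions, or argue as the paper does. Each non-clique component has at least two vertices in non-trivial bags, while at most $2k$ vertices lie in non-trivial bags.
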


\begin{proof}
We reduce  \kidH{Cluster} to \kidH{Clique} by guessing the connected components of the input graph identified to separate cliques of a cluster graph.
Let $(G,k)$ be an instance of \kidH{Cluster}. Denote by $\Ccal$ the set of connected components of $G$.

If $\Ccal$ contains at least $k+1$ connected components distinct from a clique, we conclude that $(G,k)$ is a \no-instance because at least two vertices of such a component should be involved in the identifications in any solution. Assume that this is not the case. 
Denote by $\Ccal'$ the connected components of $G$ distinct from cliques, and let $G'$ be the subgraph of $G$ with these connected components. 
We show that the identifications can be restricted to $G'$.

\begin{claim}\label{cl:restr_marked}
The instances $(G,k)$ and $(G',k)$ are equivalent.    
\end{claim}

\begin{cproof}
Suppose that $(G,k)$ is a \yes-instance of \kidH{Cluster}. We choose $H$ to be cluster graph such that $G$ can be identified to $H$ by at most $k$ identifications, and select an $H$-witness structure $\Wcal=\{W(x)\mid x\in V(H)\}$ such that the number of vertices in the non-trivial bags, that is, bags $W(x)$ of size at least two is minimum.
We show that for every bag $W(x)$ with at least two vertices, $W(x)\subseteq V(G')$.

\begin{figure}[ht]
    \centering
    \includegraphics[scale=0.7]{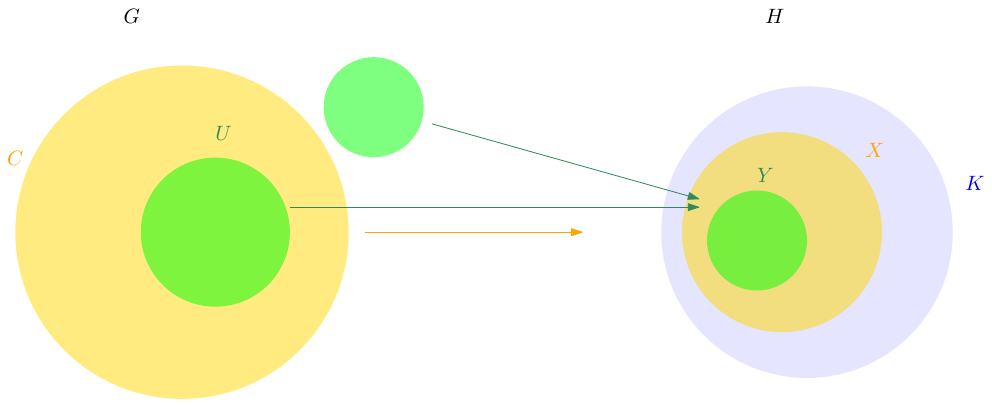}
    \caption{Illustration for \autoref{cl:restr_marked}}
    \label{fig:idCluster}
\end{figure}

Suppose that there is a  connected component $C\in \Ccal\setminus \Ccal'$ containing vertices of non-trivial bags. Then $C$ is a complete graph. Let $X=\{x\in V(H)\mid W(x)\cap V(C)\neq\emptyset\}$. Because $C$ is a clique, $X$ is a clique of $H$. Since $H$ is a cluster graph, there is a clique $K$, that is, a connected component of $H$, such that $X\subseteq V(K)$. Then 
$\Wcal'=\{W(x)\mid x\in V(K)\}$ is a $K$-witness structure. 
Denote by $U\subseteq V(C)$ the set of 
vertices of $C$ that are in the non-trivial bags containing the vertices of other connected components of $G$, and let $Y=\{x\in V(H)\mid W(x)\cap U\neq\emptyset\}$. 
See \autoref{fig:idCluster} for an illustration.
We have that $Y\neq\emptyset$. 
Otherwise, by the choice of $H$ and $\Wcal$, we would have that $K=C$, and $W(x)$ for $x\in V(K)$ would be trivial bags because $C$ is a clique. We modify $\Wcal'$ as follows. 
We replace the bags $W(x)$ with $x\in Y$ by the single bag $W'(Y)=\bigcup_{x\in Y}W(x)\setminus U$, and the bags $W(x)$ with $W(x)\subseteq V(C)$ are replaced by the trivial bags $W'(v)$ for $v\in V(C)$. 
Notice that these sets compose a witness structure for the disjoint union of $C$ and the clique $K'$ obtained from $K$ by identifying the vertices of $Y$ and deleting the vertices $x$ with $W(x)\subseteq V(C)$, which is a cluster graph. 
Since for every $x\in Y$, $W(x)$ contains at least one vertex of $U$,
$|W'(Y)|-1=|\bigcup_{x\in Y}W(x)\setminus U|-1\leq \sum_{x\in Y}(|W(x)|-1)$, and 
the number of identifications needed to obtain this graph does not exceed the number of identifications for $K$.
However, the number of vertices in the nontrivial bags is less. This contradicts the choice of $H$ and $\Wcal$.

Thus, for every non-trivial bag $W(x)$ with $x\in V(H)$, $W(x)\subseteq V(G')$.
Consider the subgraph $H'$ induced by the vertices $x\in V(H)$ with $W(x)\cap V(G')\neq\emptyset$, and let $\Wcal'=\{W(x)\cap V(G')\mid x\in V(H')\}$. Notice that $H'$ is a cluster graph. Since 
$W(x)\subseteq V(G')$ for every non-trivial bag $W(x)$, we have that $\Wcal'$ is an $H'$-witness structure. Therefore, $(G',k)$ is a \yes-instance of \kidH{Cluster}.

The opposite implication is straightforward as $G$ is the disjoint union of $G'$ and the cluster graph consisting of the connected components of $G$ in $\Ccal\setminus \Ccal'$. This completes the proof. 
\end{cproof}

Because $G'$ has at most $k$ connected components, there are at most $\binom{k}{2}$ possibilities to partition $\Ccal'$ into subsets $\Ccal_1,\ldots,\Ccal_s$ such that 
(i)~for each $i\in[s]$, the subgraph $G_i$ composed by the connected components of $G$ in $\Ccal'$ is identified to a clique $C_i$, and (ii)~the cluster graph $H$ with the connected components $C_1,\ldots,C_s$ is an identification of $G$ obtained by using at most $k$ identifications. 
We consider all possible partitions by guessing the pairs of connected components of $\Ccal'$ 
whose vertices are going to be identified in a potential solution. Then for every $i\in[s]$, we use the algorithm from \autoref{th:FPTclique} to find the minimum $0\leq k_i\leq k$ such that $(G_i,k_i)$ is a \yes-instance of \kidH{Clique} if such $k_i$ exists. If we have 
such $k_1,\ldots,k_s\leq k$ and 
$k_1+\dots+k_s\leq k$, we conclude that $(G',k)$ and, therefore, $(G,k)$ is a \yes-instance of \kidH{Cluster}. Otherwise, we discard the current choice of the partition of $\Ccal'$. If we fail to find a partition certifying 
that $(G,k)$ is a \yes-instance, we conclude that $(G,k)$ is a \no-instance.
This concludes the description of the algorithm and its correctness proof.

Note that finding the connected components of $G$ and their marking can be done in linear time.
Furthermore, we consider at most $\binom{k}{2}$ partitions of $\Ccal$. For each partition, 
we solve the corresponding instances \kidH{Clique} in $k^{\Ocal(k)}\cdot n^{\Ocal(1)}$ time using the algorithm from \autoref{th:FPTclique}. Then the overall running time is 
$k^{\Ocal(k)}\cdot n^{\Ocal(1)}$. This concludes the proof.
\end{proof}

Now we consider the dual parameterization by $n-k$. In this case, \kidH{Clique} and 
\kidH{Cluster} admit polynomial kernels.

\begin{theorem}\label{th:kernelclique}
\kidH{Clique} and \kidH{Cluster}
admit kernels with $\Ocal(p^4)$ vertices when parameterized by $p=n-k$. 
\end{theorem}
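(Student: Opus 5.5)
The plan is to reformulate both problems as the existence of a small certificate, and then to use a maximum matching either to decide the instance at once or to obtain a small vertex cover on which a marking rule gives the kernel.

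\textbf{Reformulation for cliques.} Complete graphs are closed under identification, so any clique reachable with at most $k$ identifications can be identified further to $K_p$, where $p=n-k$. Hence $(G,k)$ is a \yes-instance of \kidH{Clique} if and only if $G$ has a $K_p$-witness structure. Adding vertices to a bag never destroys adjacency between bags, and the clique target imposes no non-adjacency constraints. It follows that $(G,k)$ is a \yes-instance if and only if $G$ contains $p$ pairwise disjoint, non-empty, pairwise adjacent vertex sets; every remaining vertex can be put into an arbitrary bag. Moreover, it suffices to fix one edge between each pair of bags and shrink every bag to the endpoints of its chosen edges. So a solution is certified by $\binom{p}{2}$ edges touching at most $p(p-1)$ vertices.

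\textbf{Large matching and marking.} First I compute a maximum matching $M$. If $|M|\ge\binom p2$, I assign the matching edges bijectively to the pairs $\{x,y\}$ of $V(K_p)$ and put one endpoint into $W(x)$ and the other into $W(y)$. This is a $K_p$-witness structure, so I output a trivial \yes-instance. Otherwise the endpoints of $M$ form a vertex cover $C$ with $|C|<p^2$, and $I=V(G)\setminus C$ is independent. For each $c\in C$, I mark $\min\{|N(c)\cap I|,\,p^2\}$ neighbours of $c$ in $I$. I then delete all unmarked vertices of $I$ and set $k'=n'-p$, where $n'$ is the new number of vertices. This leaves $\Ocal(p^2\cdot p^2)=\Ocal(p^4)$ vertices.

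\textbf{Correctness of the deletion.} One direction is immediate: $G'$ is an induced subgraph of $G$, so a certificate in $G'$ is a certificate in $G$. For the other direction, I start from a minimal certificate in $G$. It uses at most $p(p-1)<p^2$ vertices and $\binom p2$ edges. Each chosen edge incident to an unmarked vertex $u\in I$ has the form $cu$ with $c\in C$. Since $u$ is unmarked, $c$ has $p^2$ marked $I$-neighbours. I replace this edge by $cu'$ for a marked $u'$ not yet used by the certificate, and put $u'$ into the bag that contained $u$. Such a $u'$ exists because fewer than $p^2$ vertices are ever used. I apply this replacement edge by edge, so one unmarked $u$ may be replaced by several distinct marked vertices, all placed in $u$'s bag. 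Bags stay disjoint and pairwise adjacent, giving a certificate in $G'$. I expect this exchange step to be the main obstacle: I must carefully count used vertices so that fresh marked vertices always exist, and check that bags remain disjoint.

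\textbf{Cluster graphs.} Cluster graphs are also closed under identification, so the question becomes whether $G$ can be identified to some cluster graph with at least $p$ vertices. The matching test works unchanged: if $|M|\ge\binom p2$, I form one $K_p$ and absorb all other vertices into its bags, which yields a clique and hence a cluster graph. Otherwise I again have a vertex cover $C$ with $|C|<p^2$. I mark $I$-neighbours exactly as before, and I additionally keep $\min\{\#\text{isolated vertices},\,p\}$ isolated vertices, since each isolated vertex can contribute a $K_1$ component. The exchange argument must now be applied bag by bag inside a witness structure for a cluster graph. Here I restrict to a minimal certificate: within each target component, one chosen edge per pair of bags in that component, plus one vertex per singleton component. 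The substituted vertices $u'$ are drawn from $N(c)$ with $c$ already in the bag of one endpoint of the replaced edge. This is the delicate point, because a cluster target forbids adjacency between bags of different components. I will therefore build the new witness structure from the certificate alone and assign every unused vertex to a bag so that no forbidden adjacencies appear. This should again give $\Ocal(p^4)$ vertices.
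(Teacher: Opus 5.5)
\textbf{Verdict: gap in the \textsc{Cluster-Identification} part.} Your treatment of \textsc{Clique-Identification} is essentially the paper's proof. A matching of size $\binom{p}{2}$ certifies a yes-instance. Otherwise its endpoints give a vertex cover $C$ of size $\Ocal(p^2)$, and an exchange of the $\binom{p}{2}$ important edges shows that deleting unmarked vertices of $I$ is safe. Each chosen edge has at most one endpoint in the independent set $I$, so at most $\binom{p}{2}-1$ marked vertices are ever blocked, and your $p^2$ marks suffice.

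The cluster part has a genuine gap. It lies in the direction from $G'$ back to $G$, which you did not revisit and which is no longer immediate. For a clique target, deleted vertices can be dropped into any bag. For a cluster target, bags of different components must stay non-adjacent. Deleting unmarked vertices of $I$ can split a component of $G$ into several components of $G'$, and a solution for $G'$ may send these pieces to different cliques. A deleted vertex adjacent to two such pieces then has no admissible bag, so your closing promise to place unused vertices without forbidden adjacencies cannot be kept.

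Here is a concrete counterexample. Let $p=4$, and let $G$ consist of two stars with centres $c_1,c_2$ and $20$ leaves each (among them $a_1$ and $b_1$, respectively), plus a vertex $u$ adjacent to both $c_1$ and $c_2$.
\begin{itemize}
\item $G$ is a no-instance. It is connected, so every identification of it is connected, hence a clique if it is a cluster graph. Since $\{c_1,c_2\}$ is a vertex cover, at most one bag of a clique-witness structure avoids $\{c_1,c_2\}$. So there are at most three bags, and $G$ cannot be identified to a cluster graph on $4$ vertices.
\item Your kernel is a yes-instance. $\{c_1a_1,c_2b_1\}$ is a maximum matching of size $2<\binom{4}{2}$. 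Your rule may mark $16$ of the $20$ $I$-neighbours of each centre while omitting $u$. Then $G'$ is two disjoint stars on $36$ vertices, which identify to $2K_2$ using $36-4=k'$ identifications.
\end{itemize}

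The missing ingredient is preserving connectivity. The paper first deletes isolated vertices, so every component of $G$ meets the vertex cover. It also keeps the vertices of $I$ that are internal nodes of a spanning tree of each component, which adds at most $|C|-1$ vertices and keeps the $\Ocal(p^4)$ bound. Then two kept vertices lie in the same component of $G'$ if and only if they lie in the same component of $G$. A solution for $G'$ now lifts to $G$: put all deleted vertices of a component of $G$ into any bag of the clique that receives its unique $G'$-component.

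Your direction from $G$ to $G'$ is fine, because each component of $G$ is mapped into a single clique of the target and your exchange stays inside it. A minor point: cluster graphs are not closed under identification ($2K_2$ can be identified to $P_3$), only under edge contraction. This is harmless here, since the problem already asks for at least $p$ vertices.
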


\begin{proof}
Our kernelization algorithms for the two problems are based on almost the same arguments. Therefore, we first present the algorithm for \kidH{Clique} and then explain how to modify it for \kidH{Cluster}.

Let $(G,k)$ be an instance of \kidH{Clique}. 
Given that the class of cliques is closed under identifications, $(G,k)$ is a \yes-instance if and only if $G$ can be identified to $K_p$ for $p=n-k$. If $p\leq 1$, then we return a trivial \yes-instance of constant size. If $p\geq 2$ and $G$ has no edges, then $(G,k)$ is a \no-instance. 
From now on, we assume that $p\geq 2$ and $G$ has at least one edge. 

We first greedily compute an arbitrary inclusion-maximal matching $M$ of $G$. We show that if $M$ is sufficiently big, then $(G,k)$ is a \yes-instance.

\begin{claim}\label{cl:matching}
    If $|M|\ge\binom{p}{2}$, then $(G,k)$ is a \yes-instance of \kidH{Clique}.
\end{claim}

\begin{cproof}
We set $\ell=\binom{p}{2}$.
Let $\{x_1y_1,\dots,x_\ell y_\ell\}$ be the edges of $K_p$ and $\{u_1v_1,\dots,u_\ell v_\ell\}$ be the $\ell$ first edges of $M$.
For each $x\in V(K_p)$, we initially set $W(x)=\{u_i\mid i\in[\ell]\text{ s.t. }x=x_i\}\cup\{v_i\mid \text{ s.t. }x=y_i\}$. 
We additionally add all remaining vertices of $G$ distinct from $u_i$ and $v_i$ for $i\in[\ell]$ to $W(x^*)$ for some arbitrary $x^*\in V(K_p)$.
Then $\{W(x)\mid x\in V(K_p)\}$ is a $K_p$-witness structure in $G$, hence the result.  
\end{cproof}

Hence, if $|M|\geq\binom{p}{2}$, we return a trivial \yes-instance of \kidH{Clique} of constant size and stop. 

From now on, we assume that this is not the case and $|M|<\binom{p}{2}$. Since $M$ is inclusion maximal, the set $S$ of the endpoints of the edges of $M$ is a vertex cover of $G$ of size at most $2\cdot\binom{p}{2}-2$. Let $I=V(G)\setminus S$.   
For each $v\in S$, we construct the set $R_v$ of at most $\binom{p}{2}$ neighbors of $v$ in $I$---if $v$ has at most $\binom{p}{2}$ such neighbors, we include all of them in $R_v$, and we include arbitrary $\binom{p}{2}$ neighbors, otherwise. We say that the vertices of $R_v$ are \emph{marked}. We set $G'=G[S\cup\big(\bigcup_{v\in S}R_v\big)]$ and define 
$k'=k-(|V(G)|-|V(G')|)$. Notice that $|V(G')|\leq |S|+|S|\binom{p}{2}\leq p^4$ and $p=n-k=|V(G')|-k'$.
We claim that the instances $(G,k)$ and $(G',k')$ are equivalent. 

\begin{claim}\label{cl:mark_clique}
$(G,k)$ and $(G',k')$ are equivalent instances of \kidH{Clique}.
\end{claim}

\begin{cproof}
Notice that we have to show that $G$ can be identified to $K_p$ if and only if $G'$ can be identified to $K_p$.

Suppose that $G$ can be identified to $K_p$. Let $\Wcal=\{W(x)\mid x\in V(K_p)\}$ be a $K_p$-witness structure. Because $K_p$ is a clique, for any two distinct $x,y\in V(K_p)$, there are adjacent $v_x\in W(x)$ and $v_y\in W(y)$. Thus, there is a set $A$ of $\binom{p}{2}$ edges of $G$ such that for distinct $x,y\in V(K_p)$, there are $v_x\in W(x)$ and $v_y\in W(y)$
with $v_xv_y\in A$. We say that the edges of $A$ are \emph{important}. The choice of a $K_p$-witness structure and important edges is not unique. We select 
$\Wcal$ and $A$ in such a way that the size of $A\cap E(G')$ is maximum. 
We prove that $A\subseteq E(G')$.

The proof is by contradiction. Assume that there are distinct $s,t\in V(K_p)$ such that for $v_sv_t\in A$ with $v_s\in W(s)$ and $v_t\in W(t)$, $v_sv_t\notin E(G')$. Then one of $v_s$ and $v_t$ is in $S$ and the other in $I$. By symmetry, we assume that $v_s\in S$ and $v_t\in I$. Note that $v_t\notin R_{v_s}$. Since $v_t$ is a non-marked neighbor of $v_s$, $|R_{v_s}|=\binom{p}{2}$. Each important edge has at most one endpoint in $I$. Then because $|R_{v_s}|=\binom{p}{2}$ and $v_t\notin R_{v_s}$, there is a vertex $v_z\in R_{v_s}$ in some bag $W(z)$ such that $v_z$ is not incident with any important edge. We define $\Wcal'=\{W'(x)\mid x\in V(K_p)\}$ by setting $W'(t)=W(t)\cup\{v_z\}$, $W'(z)=W(z)\setminus\{v_z\}$, and $W'(x)=W(x)$ for $x\neq t,z$, that is, we move $v_z$ from $W(z)$ to $W(t)$.
Since the endpoints of the important edges are not moved, we have that $\Wcal'$ is a $K_p$-witness structure. However, now we can choose $A'=(A\setminus\{v_sv_t\})\cup\{v_sv_z\}$ to be the set of important edges, contradicting the choice of $A$. The obtained contradiction proves that 
$A\subseteq E(G')$.

Because $A\subseteq E(G')$, we have that
$\Wcal'=\{W(x)\cap V(G')\mid x\in V(K_p)\}$ is a $K_p$-witness structure for $G'$. Therefore,
$G'$ can be identified to $K_p$.

To show the opposite implication, assume that $G'$ can be identified to $K_p$ and consider a $K_p$-witness structure $\Wcal'=\{W'(x)\mid x\in V(K_p)\}$ for $G'$.  Let $z\in V(K_p)$ be an arbitrary vertex. Then $\Wcal=\{W(x)\mid x\in V(K_p)\}$, where $W(z)=W'(z)\cup (V(G)\setminus V(G'))$ and $W(x)=W'(x)$ if $x\neq z$, is a $K_p$-witness structure. This shows that 
$G$ can be identified to $K_p$ and completes the proof.
\end{cproof}

As $(G,k)$ and $(G',k')$ are equivalent instances and $|V(G')|\leq p^3$, our kernelization algorithm returns the instance $(G',k')$. Because an inclusion maximal matching $M$ can be greedily selected in linear time, and the sets $R_v$ of marked vertices can also be found in linear time, the overall running time is linear. This completes the proof for \kidH{Clique}.

\medskip
Now we consider \kidH{Cluster}. Let $(G,k)$ be an instance of the problem. Notice that $(G,k)$ is a \yes-instance if and only if $G$ can be identified to a cluster graph $H$ with $p$ vertices.
In the first step, we exhaustively apply the following reduction rule.

\begin{reduction}\label{red:isol}
If $G$ has an isolated vertex $v$ then set $G:=G-v$.
\end{reduction}

The rule is trivially safe because the class of cluster graphs is closed under deletions and additions of isolated vertices. If by the application of the rule we get the empty graph or obtain that $p=|V(G)|-k\leq 1$, we immediately conclude that $(G,k)$ is a \yes-instance of \kidH{Cluster}. Respectively, our algorithm returns a trivial \yes-instance of constant size and stops. From now on, we assume that this is not the case.

We find an inclusion maximal matching $M$ of $G$. By~\autoref{cl:matching}, if $|M|\geq \binom{p}{2}$, then $(G,k)$ is a \yes-instance of \kidH{Clique}. Therefore, $(G,k)$ is a \yes-instance of \kidH{Cluster}, and our algorithm returns a trivial \yes-instance of \kidH{Cluster} of constant size. Suppose that $|M|<\binom{p}{2}$. Then the set $S$ of the endpoints of the edges of $M$ is a vertex cover of $G$ of size at most $2\binom{p}{2}-2$.
Let $I=V(G)\setminus S$.

Now we mark some vertices of $I$. Exactly as for \kidH{Clique}, for each $v\in S$, we select the set $R_v$ of at most $\binom{p}{2}$ neighbors of $v$ in $I$ and mark the vertices of $R_v$.
For \kidH{Cluster}, we have to mark additional vertices. For this, we find a spanning tree $T$ of each connected component of $G$ and mark the non-leaf vertices of this tree in $I$. Notice that we have at most 
$|S|-1\leq 2\binom{p}{2}-3$ such vertices. For the set $R$ of marked vertices, we define
$G'=G[S\cup R]$ and set 
$k'=k-(|V(G)|-|V(G')|)$. Observe that because of marking of certain vertices of the spanning trees, two vertices $u,v\in V(G')$ are in the same connected component of $G'$ if and only if they are in the same connected component of $G$. Note also that
$|V(G')|\leq |S|+|S|\binom{p}{2}+|S|-1\leq |S|(\binom{p}{2}+2)\leq p^4$ and $p=n-k=|V(G')|-k'$.
We claim that the instances $(G,k)$ and $(G',k')$ are equivalent. 

\begin{claim}\label{cl:mark_cluster}
The instances $(G,k)$ and $(G',k')$ are equivalent instances of \kidH{Cluster}.
\end{claim}

\begin{cproof}
Suppose that $(G,k)$ is a \yes-instance of \kidH{Cluster}.
Then by~\autoref{obs:connect}, there is a partition of the set $\Ccal$ of connected components of $G$ into $\Ccal_1,\ldots,\Ccal_s$ for some $s\geq 1$ and integers $p_1,\ldots,p_s$ with $p_1+\dots+p_s=p$ such that for each $i\in[s]$, the subgraph $G_i$ of $G$ consisting of the connected components of $G$ in $\Ccal_i$ can be identified to $K_{p_i}$.
Observe that because $G$ has no isolated vertices, each connected component of $G$ contains a vertex of $S$. Since $S\subseteq V(G')$, for every $i\in[s]$, $G_i$ contains vertices of $G'$.
For $i\in[s]$, let $G_i'=G_i[V(G')\cap V(G_i)]$. Using the arguments from~\autoref{cl:mark_clique}, we have that $(G_i',p_i)$ is a \yes-instance of \kidH{Clique} for every $i\in[s]$. Since $G'$ is the disjoint union of $G_1',\ldots,G_s'$ and $p_1+\dots+p_s=p$, we have that $(G',k')$ is a \yes-instance of \kidH{Cluster}.

Assume now that $(G',k')$ is a \yes-instance of \kidH{Cluster}.
Then there is a partition of the set $\Ccal'$ of connected components of $G'$ into $\Ccal_1',\ldots,\Ccal_s'$ for some $s\geq 1$ and integers $p_1,\ldots,p_s$ with $p_1+\dots+p_s=p$ such that for each $i\in[s]$, the subgraph $G_i'$ of $G'$ consisting of the connected components of $G'$ in $\Ccal_i'$ can be identified to $K_{p_i}$. 
For $i\in[s]$, consider the subgraph $G_i$ of $G$ composed of the connected components of $G$ having vertices in $G_i'$. 
Because each connected component of $G$ contains a vertex of $S$, the union of $G_1,\ldots,G_s$ is $G$. Furthermore, since two vertices $u,v\in V(G')$ are in the same connected component of $G'$ if and only if they are in the same connected component of $G$,
we have that the subgraphs $G_1,\ldots,G_s$ are disjoint. Thus, $G$ is the disjoint union of $G_1,\ldots,G_s$. Since each $G_i'$ can be identified to $K_{p_i}$, the same holds for $G_i$. This implies that $G$ can be identified to a cluster graph with $p$ vertices. Therefore, $(G,k)$ is a \yes-instance of \kidH{Cluster}. This concludes the proof.
\end{cproof}

Because $(G,k)$ and $(G',k')$ are equivalent instances and $|V(G')|\leq p^3$, we return  $(G',k')$. Since  $M$ can be greedily selected in linear time, and the sets $R_v$ of marked vertices can also be found in linear time, the overall running time is linear. This completes the proof.
\end{proof}

\autoref{th:kernelclique} implies the following \FPT algorithms. 

\begin{corollary}\label{cor:clique-FPT}
\kidH{Clique} and \kidH{Cluster} can be solved in $p^{\Ocal(p^2)}\cdot n^{\Ocal(1)}$ time when parameterized by $p=n-k$.
\end{corollary}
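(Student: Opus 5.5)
The plan is to combine the kernels of \autoref{th:kernelclique} with a brute-force search on the kernelized instance. Given an instance $(G,k)$ of \kidH{Clique} or \kidH{Cluster} with $p=n-k$, I would first run the kernelization algorithm of \autoref{th:kernelclique}. This takes polynomial (in fact linear) time and either returns a trivial instance of constant size, or an equivalent instance $(G',k')$ with $|V(G')|=\Ocal(p^4)$ and $|V(G')|-k'=p$. The parameter is therefore preserved. This preservation matters: the brute force should be organized around the $p$ vertices of the target, not around the $k'$ identifications, because $k'$ may be as large as $\Ocal(p^4)$.

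On the reduced instance, I would use the equivalent formulation from the proof of \autoref{th:kernelclique}: $(G',k')$ is a \yes-instance if and only if $G'$ can be identified to $K_p$ (respectively, to some cluster graph on exactly $p$ vertices). This holds because cliques and cluster graphs are closed under identifications within a component, so we may always perform exactly $k'$ identifications. For cluster graphs there is one subtlety. Identifying two vertices from different clique components does not in general yield a cluster graph. However, the paper already notes that all testbed classes are closed under edge contraction, and one can also contract inside a single clique. So exactly $p$ target vertices can be assumed whenever at most $k'$ identifications suffice, provided that $p\ge 1$.

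The key counting step is to avoid enumerating all partitions of $V(G')$, since there are $p^{\Ocal(p^4)}$ of them. Instead, I would use the important-edges argument from \autoref{cl:mark_clique}. A $K_p$-witness structure is certified by a set $A$ of at most $\binom{p}{2}$ edges, one for each pair of bags, together with a labeling of their endpoints by $V(K_p)$. Every vertex not covered by $A$ can be dumped into an arbitrary bag, since adding vertices to bags never destroys adjacency between bags, and for a clique no non-adjacency has to be kept. So I would guess $A$ among at most $|E(G')|^{\binom p2}=p^{\Ocal(p^2)}$ choices. Then I would guess a labeling of its at most $p^2$ endpoints by $[p]$, which gives $p^{\Ocal(p^2)}$ choices, and check the result in polynomial time.

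For cluster graphs, I would additionally guess the partition of $p$ into clique sizes, which adds only $p^{\Ocal(p)}$ choices. In this case we must also ensure that different cliques are non-adjacent. I would handle this by using \autoref{obs:connect}: whole connected components of $G'$ go into a single clique. The dumping of unlabeled vertices is then restricted to a bag of the clique containing their component. Each component of $G'$ contains a vertex of $S$, and $S$ has $\Ocal(p^2)$ vertices, so $G'$ has $\Ocal(p^2)$ components. The component-to-clique assignment therefore costs only $p^{\Ocal(p^2)}$ choices.

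The main obstacle is exactly this point: getting $p^{\Ocal(p^2)}$ rather than the naive $p^{\Ocal(p^4)}$. It is resolved by the witness-edge certification. Everything else is a direct invocation of \autoref{th:kernelclique}, and the total running time is $p^{\Ocal(p^2)}\cdot n^{\Ocal(1)}$.
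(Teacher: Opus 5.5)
Your proposal is correct and takes essentially the same route as the paper: apply the $\Ocal(p^4)$-vertex kernel of \autoref{th:kernelclique}, guess the at most $\binom{p}{2}$ witnessing edges together with the bag labels of their endpoints ($p^{\Ocal(p^2)}$ branches), dump all remaining vertices into suitable bags, and for \textsc{Cluster-Identification} use connected components to keep distinct clusters non-adjacent. Your explicit guess of the clique sizes and of the component-to-clique assignment (bounded using that every component of the kernel meets $S$) is a slightly more careful version of the paper's verification step, and it also handles target cluster graphs with isolated vertices, which the paper sidesteps by assuming $H$ has none.
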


\begin{proof}
First, we explain the algorithm for \kidH{Clique}.
Let $(G,k)$ be an instance of \kidH{Clique} and let $p=n-k$. After applying the kernelization algorithm from~\autoref{th:kernelclique}, we have that $n=\Ocal(p^4)$. 
Recall that $(G,k)$ is a \yes-instance if and only if $G$ can be identified to $K_p$.
If $p=n-k=1$, $(G,k)$ is a trivial \yes-instance. If $p\geq 2$ and $G$ is an edgeless graph, $(G,k)$ is a trivial \no-instance. Thus, we assume that $p\geq 2$ and $G$ has at least one edge.
Notice that $G$ can be identified to $K_p$ if and only if there is a set $A$ of $\binom{p}{2}$ edges whose endpoint can be partitioned into $p$ sets $W_1,\ldots,W_p$ such that for any distinct $i,j\in[p]$, $A$ contains an edge with one endpoint in $W_i$ and the other in $W_j$. Since $n=\Ocal(p^4)$, such a set $A$ with the corresponding partition $W_1,\ldots,W_p$ can be guessed in $p^{\Ocal(p^2)}$ time. Then the total running time is $p^{\Ocal(p^2)}\cdot n^{\Ocal(1)}$. This completes the proof.

For \kidH{Cluster}, the algorithm is almost the same. For an instance $(G,k)$ of \kidH{Cluster}, we apply the kernelization algorithm from \autoref{th:kernelclique}.
The problem is trivial if $p\leq 1$. Also, we can assume that $G$ has no isolated vertices. 
We use the fact that $(G,k)$ is a \yes-instance of \kidH{Cluster} if and only if $G$ can be identified to a cluster graph $H$ with $p=n-k$ vertices, and because $G$ has no isolated vertices, we can assume that the same holds for $H$. 
Notice that $H$ has at most $\binom{p}{2}$ edges. Then we guess the set $A$ of at most $\binom{p}{2}$ edges of $G$ ensuring the adjacencies between the bags in an $H$-witness structure. In the next step, we guess the partition of the endpoints of $A$ corresponding to their inclusion in the bags.   
Finally, we verify whether the partition of the endpoint of $A$ can be extended into a partition of $V(G)$ such that the bags of distinct clusters are not adjacent. This can be done by considering the connected components of $G$ containing the endpoints of the edges of $A$.
We have $p^{\Ocal(p^2)}$ possibilities for $A$. Then, because the edges of $A$ have at most
$2\binom{p}{2}$ endpoints, we have $p^{\Ocal(p^2)}$ possible partitions of the endpoints of $A$. 
The extendability of the partition can be verified in linear time by the standard breadth-first search. Since the kernelization algorithm is polynomial, the total running time is
$p^{\Ocal(p^2)}\cdot n^{\Ocal(1)}$. This concludes the proof.
\end{proof}

\section{Identification to Split, Interval and Chordal Graphs}\label{sec:chord}
In this section, we consider \Hid{$\Hcal$}, where $\Hcal$ is the class of split, interval, or chordal graphs.

\subsection{Algorithmic Lower Bounds}
First, we show algorithmic and kernelization lower bound using the results for \kidH{Clique}. 

\begin{theorem}\label{th:NPsplit}
\Hid{$\Hcal$}, where $\Hcal$ is the class of split, interval, or chordal graphs,
is \NP-complete and does not admit a polynomial kernel
when parameterized by $k$ unless \coNP$\subseteq$~\NPp.
\end{theorem}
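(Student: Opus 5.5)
We plan a single polynomial parameter transformation from \kidH{Clique} restricted to connected input graphs. By \autoref{th:NPclique}, that problem is \NP-complete and has no polynomial kernel parameterized by $k$ unless \coNP$\subseteq$~\NPp, so one transformation gives both claims for all three classes at once. Given a connected instance $(G,k)$ of \kidH{Clique}, we build $G'$ from $G$ by adding an independent set $C=\{c_1,\ldots,c_{2k+3}\}$ of new vertices, each adjacent to every vertex of $V(G)$. We keep the budget $k$. The parameter is preserved and $|V(G')|$ is polynomial, so it remains to show that $(G,k)$ is a \yes-instance of \kidH{Clique} if and only if $(G',k)$ is a \yes-instance of \Hid{$\Hcal$}, and to do this simultaneously for split, interval and chordal graphs.

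For the forward direction, apply to $G'$ the at most $k$ identifications that turn $G$ into a clique $K$. The result is $K$ joined to an independent set $C$. This graph is split, with clique $K$ and independent set $C$. It is also interval: give every vertex of $K$ the interval $[0,2k+4]$ and give $c_i$ the point interval $\{i\}$. Since interval and split graphs are chordal, $(G',k)$ is a \yes-instance for all three classes.

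For the backward direction it suffices to treat chordal graphs, since split and interval graphs are chordal. Let $\Wcal$ be an $H$-witness structure of $G'$ with $H$ chordal, obtained with at most $k$ identifications. At most $2k$ vertices of $G'$ lie in non-trivial bags. Hence at least three vertices of $C$, say $c,c',c''$, form singleton bags, and these three bags are pairwise non-adjacent in $H$. Now take two bags $W(x)$ and $W(y)$ that contain vertices of $G$. Both are adjacent to the bags $\{c\}$ and $\{c'\}$, and those two bags are not adjacent to each other. So $x\,c\,y\,c'$ would be an induced $C_4$ in $H$ unless $xy\in E(H)$. It follows that all bags meeting $V(G)$ are pairwise adjacent in $H$.

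The main obstacle is transferring this back to a clique-identification of $G$ itself. The difficulty is that two bags may be adjacent in $H$ only through a vertex of $C$ that sits inside one of them, rather than through an edge of $G$. Our plan is to restrict $\Wcal$ to $V(G)$ and charge each vertex of $C$ that lies in a non-trivial bag against the identification it consumed. Removing such a $c_i$ from its bag saves exactly one identification. We then spend that saved identification to merge the $G$-part of the bag that lost $c_i$ into an adjacent $G$-bag, which exists because $G$ is connected. We must also check that this repair does not create a new non-adjacent pair. If an exchange argument of this kind is too messy, the fallback is to choose $\Wcal$ minimizing the number of vertices of $C$ in non-trivial bags. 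We would then first show, in the spirit of \autoref{lem:singletons}, that some optimal structure keeps every $c_i$ in a singleton bag. In that case all adjacencies among $G$-bags come from edges of $G$, and the restriction of $\Wcal$ to $V(G)$ is directly a $K_h$-witness structure with at most $k$ identifications.
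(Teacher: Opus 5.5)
Your reduction, your forward direction and your induced-$C_4$ step match the paper's proof. The paper adds $2k+2$ new vertices; two singleton $C$-bags already suffice. The gap is at the step you yourself call the main obstacle, and neither of your two plans closes it as written.

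\textbf{Primary plan.} Merging the $G$-part of a mixed bag (one containing a vertex of $C$) into ``an adjacent $G$-bag'' does not make the final bags pairwise adjacent in $G$. The adjacent bag may itself be the $G$-part of another mixed bag. The union of such parts may have been adjacent to some other bag only through a $C$-vertex.

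Merging never destroys an adjacency. So the risk is not a newly created non-adjacent pair, as you suggest. It is an old non-adjacency, previously hidden by a $C$-vertex, that survives.

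You also never show that some bag lies entirely inside $V(G)$. Without that, the merges may have no good bag to end in. Connectivity of $G$ could rescue this only with an extra ordering argument that routes every merge toward such a bag.

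\textbf{Fallback.} This plan is circular. A structure with every $c_i$ in a singleton bag is exactly what the backward direction needs. Moreover, \autoref{lem:singletons} does not apply, since the $c_i$ are not pendant vertices of $H$. A local exchange would also have to keep $H$ chordal, and you do not check this.

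\textbf{The missing idea.} Discard the mixed bags entirely and keep only the pure ones. Let $X=\{x\mid W(x)\subseteq V(G)\}$.
\begin{itemize}
\item Your $C_4$ argument shows that these bags are pairwise adjacent. Since they contain no vertex of $C$, each of these adjacencies is an edge of $G$.
\item You may assume $|V(G)|\ge k+1$; otherwise $G$ identifies to $K_1$ within budget. Then $X\neq\emptyset$. Indeed, if every bag meeting $V(G)$ contained a $C$-vertex, summing over those bags would give $\sum(|W(y)|-1)\ge\sum|W(y)\cap V(G)|=|V(G)|>k$.
\item Fix $z\in X$, and add $W(y)\cap V(G)$ to $W(z)$ for every mixed bag $W(y)$. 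The added parts need not be adjacent to $W(z)$, because $W(z)$ is already adjacent to every other pure bag and enlarging it keeps those adjacencies.
\end{itemize}
So $\{W'(x)\mid x\in X\}$ is a clique-witness structure for $G$. Your own accounting, $|W(y)\cap V(G)|\le|W(y)|-1$ for each mixed $y$, keeps the number of identifications at most $k$. With this argument, connectivity of $G$ is not needed at all.
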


\begin{proof}
In~\autoref{th:NPclique}, we proved that \kidH{Clique} is \NP-complete and does not admit a polynomial kernel when parameterized by $k$ unless \coNP$\subseteq$~\NPp. We use this and prove both claims of the theorem by presenting a polynomial parameter transformation from \kidH{Clique}parameterized by $k$.

Let $(G,k)$ be an instance of \kidH{Clique}. We assume that $n\geq k+1$ as, otherwise, 
$(G,k)$ is a trivial \yes-instance. We construct $G'$ from $G$ by adding an independent set $I$ of size $2k+2$, and adding an edge between each vertex of $I$ and each vertex of $V(G)$. 

If $(G,k)$ is a \yes-instance of \kidH{Clique} with an $H$-witness structure $\Wcal=\{W(x)\mid x\in V(H)\}$ for some clique $H$, then $\Wcal'=\Wcal\cup\{\{v\}\mid v\in I\}$ is a $H'$-witness structure in $G'$, where $H'$ is the graph obtained from the clique $H$ by adding an independent set $I'$ composed of $2k+2$ universal vertices to $H$ and making these vertices adjacent to each vertex of $H$. Since $H$ is a clique, 
$H'$ is both a split and an interval graph, and therefore also a chordal graph.
Because the bags of $\Wcal'$ corresponding to the vertices $x\in I'$ are trivial, we need the same number of identifications
to identify $G'$ to $H'$ as we need for identifying $G$ to $H$.
Therefore, $(G',k)$ is a \yes-instance of \Hid{$\Hcal$}.

Suppose now that $(G',k)$ is a \yes-instance of \Hid{$\Hcal$} with $H$-witness structure 
$\{W(x)\mid x\in V(H)\}$ for some graph $H\in\Hcal$. Let $X=\{x\in V(H)\mid W(x)\subseteq V(G)\}$. We claim that $H[X]$ is a non-empty clique.

Recall that $|V(G)|\geq k+1$. Then there is $x\in V(H)$ such that $W(x)\subseteq V(G)$. Indeed, if for every $x\in V(H)$, $W(x)\cap I\neq\emptyset$, then we need at least $k+1$ identifications to obtain $H$ from $G$. Thus, $X\neq\emptyset$.  
To show that $H[X]$ is a clique, note that because $|I|=2k+2$, we know that at least two vertices of $I$ are not identified, i.e., there exist $a,b\in I$ and $x_a,x_b\in V(H)$ such that $W(x_a)=\{a\}$ and $W(x_b)=\{b\}$.
Let us suppose toward a contradiction that there exist $x,y\in X$ with non-adjacent $W(x)$ and $W(y)$. But then $x_axx_by$ is an induced cycle on four vertices in $H$, contradicting the fact that $H$ is chordal. Thus, $H[X]$ is a non-empty clique.

Consider $Y=\{x\in V(H)\colon W(x)\cap V(G)\neq\emptyset\}$. If $X=Y$, then $\{W(x)\mid x\in X\}$ is a $H[X]$-witness structure for $G$ demonstrating that $(G,k)$ is a \yes-instance of \kidH{Clique}. Suppose that $Y\setminus X\neq\emptyset$. Let $z\in X$ be an arbitrary vertex of the non-empty $X$.
Consider
$\Wcal'=\{W'(x)\mid x\in X\}$, where $W'(z)=W(z)\cup\bigcup_{y\in Y\setminus X}(W(y)\cap V(G))$ and $W'(x)=W(x)$ for $x\in X$ such that $x\neq z$. Because $H[X]$ is a clique, we have that $\Wcal'$ is a clique-witness structure. Since for every $y\in Y\setminus X$, 
$|W(y)\cap V(G)|<|W(y)|$, 
$\sum_{x\in X}(|W'(x)|-1)\leq \sum_{y\in Y}(|W(y)|-1)\leq k$. 
Thus, $(G,k)$ is a \yes-instance of \kidH{Clique}.
This completes the proof.
\end{proof}

In~\autoref{thm:forest_XP}, we proved that \id can be solved by an \XP-algorithm on instances $(G,F)$ where $F$ is a forest with $\ell$ leaves, when parameterized by $\ell$. Notice that leaves are simplicial vertices of a forest. Here, we show that the result for forests cannot be generalized for chordal graphs parameterized by the number of simplicial vertices---the problem is \para when parameterized by the number of simplicial vertices.

\begin{theorem}\label{th:NPchordal}
\id is \NP-complete on instances $(G,H)$ where $H$ is both split and interval graph with two simplicial vertices.
\end{theorem}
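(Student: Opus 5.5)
We plan a polynomial reduction from \kidH{Clique}, which is \NP-complete by \autoref{th:NPclique}. Recall that $(G,k)$ is a \yes-instance if and only if $G$ can be identified to $K_p$ with $p=n-k$. The target graph will be a fixed gadget: a clique $K$ on $p+2$ vertices containing two special vertices $u$ and $w$, together with two non-adjacent vertices $a$ and $b$, where $N_H(a)=V(K)\setminus\{u\}$ and $N_H(b)=V(K)\setminus\{w\}$.

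First we check the required properties of $H$. It is split, with clique $K$ and independent set $\{a,b\}$. It is interval: represent $K$ by intervals sharing a common point, give $u$ an interval sticking out to the right, give $w$ one sticking out to the left, and place $a$ on the left and $b$ on the right. Its only simplicial vertices are $a$ and $b$:
\begin{itemize}
\item a vertex of $K\setminus\{u,w\}$ sees both $a$ and $b$, which are non-adjacent;
\item $u$ sees $b$ but $b\not\sim w$, so $N(u)$ is not a clique;
\item $w$ sees $a$ but $a\not\sim u$, so $N(w)$ is not a clique.
\end{itemize}

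The instance graph $G'$ is built from $G$ by adding four vertices $a',b',u',w'$ that mimic $a,b,u,w$. Make $u'$ and $w'$ adjacent to each other and to all of $V(G)$. Make $a'$ adjacent to all of $V(G)$ and to $w'$, and $b'$ adjacent to all of $V(G)$ and to $u'$. If the adjacency to all of $V(G)$ turns out to be too permissive in the backward direction, we will instead replace $u'$ and $w'$ by large cliques, or pad $a'$ and $b'$ with more structure, keeping the target's simplicial count at two.

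The forward direction is immediate. Given a $K_p$-witness structure of $G$, add the trivial bags $\{a'\},\{b'\},\{u'\},\{w'\}$. This yields an $H$-witness structure, since every bag of $G$ is adjacent to $a',b',u',w'$, matching $N_H(a)\cap K$ and $N_H(b)\cap K$ on the $p$ clique vertices.

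The backward direction is the main obstacle. Given an $H$-witness structure $\Wcal$ of $G'$, note that $|V(G')|-|V(H)|=n-p=k$. We want to show that, after local surgery, the bags of $a,b,u,w$ can be assumed to be exactly $\{a'\},\{b'\},\{u'\},\{w'\}$; restricting the remaining bags to $V(G)$ then gives a $K_p$-witness structure of $G$.
\begin{enumerate}
\item Exploit that $W(a)$ and $W(b)$ are non-adjacent, as are the pairs $W(a),W(u)$ and $W(b),W(w)$. In $G'$ the non-edges touching new vertices are only $a'b'$, $a'u'$ and $b'w'$, and every vertex of $G$ is adjacent to all four new vertices.
\item Do a case analysis on where $a',b',u',w'$ lie. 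For example, if $W(a)$ contains a vertex $g\in V(G)$, then $W(b)$ cannot contain $a',b',u',w'$, so $W(b)\subseteq V(G)$ and it must avoid $N(g)$.
\item In each bad case, swap or merge bags, in the style of \autoref{lem:singletons} and the exchange arguments of \autoref{th:NPsplit}, without increasing the number of identifications, until the four new vertices sit in their own trivial bags.
\end{enumerate}
Getting this case analysis airtight, so that no vertex of $G$ can profitably play the role of $a$ or $b$, is where we expect the difficulty. If it fails, we will strengthen the gadget by blowing up $u'$ and $w'$ into cliques of size $n+1$, so that they cannot all be absorbed within budget $k$.
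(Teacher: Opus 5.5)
There is a genuine gap. You do not prove the backward direction, and for the gadget you actually define it is false.

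Because $a',b',u',w'$ are all complete to $V(G)$, these four vertices, helped by two vertices of $G$, can themselves act as the bags of four plain vertices of $K\setminus\{u,w\}$. Meanwhile an induced $P_4$ inside $G$ plays the roles of $a,w,u,b$.

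Concretely, let $G$ be the path $g_1x_1x_2x_3x_4g_2$ and $k=2$, so $p=4$ and $K=\{c_1,c_2,c_3,c_4,u,w\}$. Since $G$ has only $5<\binom{4}{2}$ edges, $(G,2)$ is a no-instance of \textsc{Clique-Identification}. Yet the following is an $H$-witness structure of $G'$: $W(c_1)=\{a',g_1\}$, $W(c_2)=\{b',g_2\}$, $W(c_3)=\{u'\}$, $W(c_4)=\{w'\}$, $W(a)=\{x_1\}$, $W(w)=\{x_2\}$, $W(u)=\{x_3\}$, $W(b)=\{x_4\}$. To check it:
\begin{itemize}
\item the required non-adjacencies $au$, $ab$, $bw$ are the non-edges $x_1x_3$, $x_1x_4$, $x_2x_4$ of the path;
\item the adjacencies $aw$, $wu$, $ub$ are the path edges $x_1x_2$, $x_2x_3$, $x_3x_4$;
\item the $c_i$-bags are pairwise adjacent via $g_1b'$, $g_1u'$, $a'w'$, $b'u'$, $g_2w'$, $u'w'$;
\item every $c_i$-bag meets the $a,b,u,w$ bags because its new vertex is adjacent to all of $V(G)$.
\end{itemize}
So step~3 of your plan cannot be carried out. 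The case $W(a)\cap V(G)\neq\emptyset$ from your step~2 is not contradictory; it is a genuine cheating solution.

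What is missing is an argument that makes the gadget rigid. Your fallback could supply one, but only if $u$ and $w$ are blown up in $H$ as well and the argument is actually made. Replace $u',w'$ and $u,w$ by cliques $U',W'$ and $U,W$ of size $N\ge\max\{n,2\}$.
\begin{itemize}
\item If $W(a)$ contains a vertex of $V(G)\cup U'\cup W'$, fewer than $N$ vertices remain available for the $N$ disjoint bags of $U$, which must all be non-adjacent to $W(a)$. The symmetric argument applies to $W(b)$ and $W$.
\item This forces $\{W(a),W(b)\}=\{\{a'\},\{b'\}\}$.
\item Then the bags of $U\cup W$ are exactly the singletons of $U'\cup W'$.
\item The remaining $p$ bags partition $V(G)$ and are pairwise adjacent through edges of $G$ alone.
\end{itemize}

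The paper gets rigidity more cheaply from distances. Its $G'$ is $G$ plus two non-adjacent vertices $a,b$ complete to $V(G)$, plus a pendant vertex $a'$ at $a$. Its $H$ is a clique $Q$ on $n-k$ vertices, plus $x_a,x_b$ complete to $Q$, plus a pendant $x_a'$ at $x_a$. Identifications never increase distances, so the unique pair at distance three in each graph forces $W(x_a')=\{a'\}$ and $W(x_b)=\{b\}$. The swapped assignment is excluded because it would put all of $V(G)$ into $W(x_a)$. Next, $W(x_a)=\{a\}$, since $a$ is the only vertex at distance two from $b$. The $Q$-bags then form a $K_{n-k}$-witness structure of $G$.
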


\begin{proof}
We present a reduction from \kidH{Clique}, which is \NP-complete by \autoref{th:NPclique}.
Let $(G,k)$ be an instance of \kidH{Clique}. We assume that $n=|V(G)|\geq k+2$, as otherwise, we have a trivial \yes-instance.
We construct $G'$ and $H$ as follows. To construct $G'$, we 
\begin{itemize}
\item construct a copy of $G$,
\item add two vertices $a$ and $b$ adjacent to all vertices of $G$,
\item construct a vertex $a'$ and make it adjacent to $a$.
\end{itemize}
For constructing $H$, we 
\begin{itemize}
\item construct a clique $Q$ of size $n-k$,
\item add two vertices $x_a$ and $x_b$ adjacent to all vertices of $Q$,
\item construct a vertex $x_a'$ and make it adjacent to $x_a$.
\end{itemize}
See \autoref{fig:NPchordal} for an illustration.
Notice that $H$ is both split and interval, and $x_a'$ and $x_b$ are the only simplicial vertices of $H$.

\begin{figure}[ht]
    \centering
    \includegraphics[scale=0.7]{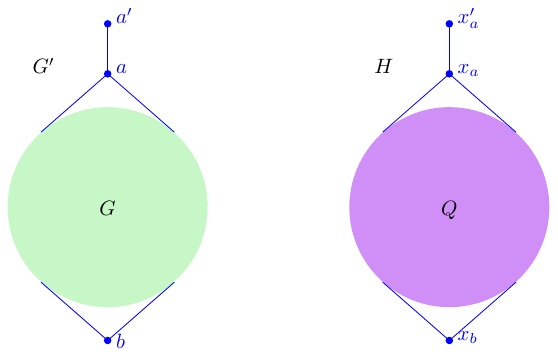}
    \caption{Illustration of the graph $G'$ and the graph $H$ in the proof of \autoref{th:NPchordal}.}
    \label{fig:NPchordal}
\end{figure}

We prove that $(G,k)$ is a \yes-instance of \kidH{Clique} if and only if $(G',H)$ is a \yes-instance of \id. As the class of cliques is closed under identifications, $(G,k)$ is a \yes-instance if and only $G$ can be identified to a clique of size $n-k$. Therefore, we have to show that $G$ can be identified to a clique of size $n-k$ if and only if $G'$ can be identified to $H$.

Suppose that $G$ can be identified to a clique of size $n-k$. Because the $Q$ in the construction of $H$ is of size $n-k$, $G$ can be identified to $H[Q]$ with an $H[Q]$-witness structure $\Wcal=\{W(x)\mid x\in Q\}$. We extend the witness structure on $H$ by defining
$W(x_a)=\{a\}$, $W(x_a')=\{a'\}$, and $W(x_b)=\{b\}$. By straightforwardly checking the adjacency between the bags, we conclude that we have an $H$-witness structure for $G$.

For the opposite implication, assume that $G'$ can be identified to $H$ with an $H$-witness structure $\Wcal=\{W(x)\mid x\in V(H)\}$. 
Observe that $a'$ and $b$ is the unique pair of vertices at distance three in $G$. 
Similarly, $x_a'$ and $b$ is the unique pair of vertices at distance three in $H$. This implies that either $W(x_a')=\{a'\}$ and $W(x_b)=\{b\}$ or, symmetrically, $W(x_a')=\{b\}$ and $W(x_b)=\{a'\}$. 
However, in the second case, we have that the vertices of $G$ are in $W(x_a)$ because they are adjacent to $b$. 
This contradicts the fact that $\Wcal$ is an $H$-witness structure since $|Q|=n-k\geq 2$. Thus, $W(x_a')=\{a'\}$ and $W(x_b)=\{b\}$. 
This implies that $a\in W(x_a)$. Since $a$ is the unique vertex of $G$ at distance two from $b$, and $x_a$ is the unique vertex of $H$ at distance two from $x_b$, we obtain that $W(x_a)=\{a\}$. Then $\Wcal'=\{W(x)\mid x\in Q\}$ is an $H[Q]$-witness structure. 
This means that $G$ can be identified a clique of size $n-k$ and concludes the proof.
\end{proof}

\subsection{FPT Algorithm for Identification to Split Graphs}
In~\autoref{th:NPsplit}, we proved that \Hid{$\Hcal$}, where $\Hcal$ is the class of split graphs, does not admit a polynomial kernel
when parameterized by $k$ unless \coNP$\subseteq$~\NPp. Here, we prove that the problem is \FPT for this parameterization and refer to the problem as \kidH{Split}.

Given an $H$-witness structure $\Wcal=\{W(x)\colon x\in V(H)\}$ for $G$,  
where $H$ is a split graph with a given partition of the vertex set into a clique and an independent set, we say that $W(x)$ an \emph{independent-bag} if $x$ is in the independent set,
and $W(x)$ is a \emph{clique-bag} if $x$ is in the clique. We also say that a $\Wcal$ is \emph{nice} if $|W(x)|=1$ for all independent-bags. We show that a nice witness structure always exists.

\begin{lemma}\label{lem:singletons_split}
Let $G$ be a graph that can be identified to a split graph by at most $k$ identifications.
Then $G$ can be identified, by at most $k$ identifications, to a split graph $H=(K\cup I,E)$, where $H[K]$ is a clique and $I$ is an independent set, with a nice $H$-witness structure $\Wcal=\{W(x)\mid x\in V(H)\}$.
\end{lemma}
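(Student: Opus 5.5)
We argue by extremality, in the spirit of \autoref{lem:singletons}. Among all split graphs $H=(K\cup I,E)$ obtainable from $G$ by at most $k$ identifications, together with a fixed split partition and an $H$-witness structure $\Wcal$, we choose one minimizing the total number $\sum_{x\in I}(|W(x)|-1)$ of surplus vertices in independent-bags. Secondarily, we may minimize $\sum_{x\in V(H)}(|W(x)|-1)$. Suppose some $y\in I$ has $|W(y)|\geq 2$. We will modify $\Wcal$ to reduce the first quantity without increasing the number of identifications, which gives a contradiction.

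The modification is as follows. Pick any vertex $v\in W(y)$ and split $W(y)$ into the singleton bag $\{v\}$ and the rest. More precisely, we replace the independent vertex $y$ by new vertices $y_u$, one for each $u\in W(y)$, each with bag $\{u\}$. This yields a graph $H'$ with witness structure $\Wcal'$. Since $N_G(W(y))\subseteq\bigcup_{x\in N_H(y)}W(x)\subseteq\bigcup_{x\in K}W(x)$ and $W(y)$ is independent inside itself, each $y_u$ is adjacent only to clique vertices $x$ with $N_G(u)\cap W(x)\neq\emptyset$. First, suppose $W(y)$ is an independent set of $G$. Then the vertices $y_u$ are pairwise nonadjacent, so $H'$ is a split graph with clique $K$ and independent set $(I\setminus\{y\})\cup\{y_u\}$. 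The bags of all other vertices are unchanged and the adjacencies among them are preserved, so $\Wcal'$ is a valid witness structure. The number of identifications drops by $|W(y)|-1$, and the first quantity decreases, contradicting the choice.

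The main obstacle is when $W(y)$ is not independent in $G$, so an edge $uw$ lies inside $W(y)$. Then the split above does not give a split graph. Instead, we move surplus vertices into a clique-bag. If $y$ has a neighbor $z\in K$, we keep a vertex $v\in W(y)$ adjacent to $W(z)$, which exists since $yz\in E(H)$, and set $W'(y)=\{v\}$ and $W'(z)=W(z)\cup(W(y)\setminus\{v\})$. Every neighbor of $W(y)$ lies in a clique-bag, and clique-bags are pairwise adjacent, so enlarging $W(z)$ creates new adjacencies only between $z$ and independent vertices $y'$ with $N_G(W(y))\cap W(y')\neq\emptyset$. That set is empty, because $y$ has no independent neighbors. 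Hence $H$ is unchanged except possibly for the neighborhood of $y$, which can only shrink to some $N'\subseteq N_H(y)$ containing $z$. The result is still split, and the identification count is the same. If instead $y$ is isolated in $H$, so that $W(y)$ is a union of components of $G$, we move $y$ into $K$ only when $K$ is empty. Otherwise we apply the same move into an arbitrary clique-bag $W(z)$; we must check that this does not create adjacency between $z$ and an independent bag, which holds because $N_G(W(y))\subseteq W(y)$. In both cases the first quantity strictly decreases, which is the desired contradiction.

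The delicate point in the plan is to verify in the second case that every adjacency which must be preserved is indeed preserved. The pairs $(z,x)$ with $x\in K$ remain adjacent because $W(z)$ only grows. Pairs $(z,y')$ with $y'\in I$ keep exactly their old status, as shown above. Pairs $(y,x)$ might lose adjacency, but we simply redefine $H$ accordingly, and $H$ remains split since $y$ is still independent. Finally, if $K=\emptyset$ and $y$ is isolated, we reassign $y$ to $K$, so that $W(y)$ becomes a clique-bag; this also lowers the first quantity.
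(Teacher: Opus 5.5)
Your argument is correct and is essentially the paper's proof. The paper also minimizes the number of vertices of $G$ in independent-bags and, for $y\in I$ with $|W(y)|\geq 2$, keeps one $v\in W(y)$ and moves $W(y)\setminus\{v\}$ into a clique-bag $W(z)$, relying on your observation that $W(y)$ has no neighbors in other independent-bags. The only difference is that the paper applies this move uniformly with an arbitrary $z\in K$, which works whether or not $W(y)$ is independent in $G$ or $y$ is isolated, so your Case 1 and the choice $z\in N_H(y)$ are unnecessary; the case $K=\emptyset$ is excluded by the paper's standing convention $K\neq\emptyset$.
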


\begin{proof} 
The proof follows the same arguments as the proof of~\autoref{lem:singletons}.
Let $H=(K\cup I,E)$ be a split graph such that $G$ can be identified to $H$ by at most $k$ identifications, with an $H$-witness structure $\Wcal=\{W(x)\mid x\in V(H)\}$ where the
total number of vertices of $G$ in the bags $W(x)$ for  $x\in I$ is minimum. 
By contradiction, we prove that $|W(x)|=1$ for all $x\in I$.

Suppose that there is $y\in I$ such that $|W(y)|\geq 2$. Consider an arbitrary  
$v\in W(x)$ and an arbitrary $z\in K$.
We define $\Wcal'=\{W'(x)\mid x\in V(H)\}$, where $W'(y)=\{v\}$, $W'(z)=W(z)\cup (W(y)\setminus \{v\})$, and $W'(x)=W(x)$ for all $x\in V(H)$ distinct from $y$ and $z$. Let $H'$ be the graph with $V(H')=K\cup I$ such that two distinct vertices $x,x'$ are adjacent if and only if $W'(x)$ and $W'(x')$ are adjacent, that is, $\Wcal'$ is an $H'$-witness structure. 
For distinct vertices $x,x'\in K$, $xx'\in E(H')$ because $y\neq x,x'$. For every $x\in I$ distinct from $y$, $N_H(x)=N_{H'}(x)$ as all the neighbors of $x$ in $H$ are in $K$. Therefore, $H'$ is a split graph, where $H'[K]$ is the clique. 
Since $\sum_{x\in K\cup I}(|W'(x)|-1)=\sum_{x\in K\cup I}(|W(x)|-1)\leq k$, $H'$ can be obtained from $G$ by at most $k$ identifications. 
However, the total number of vertices in $\bigcup_{x\in I}W'(x)$ is less than the size of $\bigcup_{x\in I}W(x)$. This contradicts the choice of $\Wcal$ and proves the claim.
\end{proof}

In our algorithm for \kidH{Split}, we will reduce the problem to \kidH{Clique}. For this, we need the following technical lemma.

\begin{lemma}\label{lem:reduction_clique}
Let $G$ be a graph whose vertex set has a partition into two sets $U$ and $I$, where $I$ is an independent set which may be empty. Then $G$ can be identified to a split graph $H$ by at most $k$ identifications admitting a nice $H$-witness structure $\Wcal=\{W(x)\mid x\in V(H)\}$ such that every vertex of $U$ is in a clique-bag if and only if $G[U]$ can be identified to a clique by at most $k$ identifications.
\end{lemma}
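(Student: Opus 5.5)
The plan is to prove the two implications separately. The backward direction should be a direct construction. The forward direction needs a careful restriction of a nice witness structure to $U$, together with a counting argument.

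\emph{($\Leftarrow$)} Suppose $\{W(x)\mid x\in V(K_h)\}$ is a $K_h$-witness structure of $G[U]$ obtained with at most $k$ identifications. I add a trivial bag $\{v\}$ for every $v\in I$, and let $H$ be the resulting quotient graph. In $H$, the vertices of $K_h$ form a clique. The new vertices are pairwise non-adjacent, because $I$ is independent in $G$, so every edge leaving a vertex of $I$ goes into $U$. Hence $H$ is a split graph with clique $V(K_h)$ and independent set $I$. This witness structure is nice, every vertex of $U$ lies in a clique-bag, and the number of identifications is unchanged.

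\emph{($\Rightarrow$)} Fix a split graph $H=(K\cup I_H,E)$ and a nice $H$-witness structure $\Wcal$ in which every vertex of $U$ lies in a clique-bag.
\begin{itemize}
\item Since independent-bags are singletons and contain no vertex of $U$, each independent-bag is $\{a\}$ for some $a\in I$.
\item The clique-bags therefore partition $U\cup I'$ for some $I'\subseteq I$.
\end{itemize}
Split $K$ into two parts:
\begin{itemize}
\item $K_0=\{x\in K\mid W(x)\cap I=\emptyset\}$, whose bags lie entirely in $U$;
\item $K_1=K\setminus K_0$.
\end{itemize}
For $x,y\in K_0$, the bags $W(x)$ and $W(y)$ are adjacent via an edge whose endpoints both lie in $U$, so this edge survives in $G[U]$. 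I then merge everything coming from $K_1$ into one set $Z=\bigcup_{x\in K_1}(W(x)\cap U)$, which I drop if it is empty. The counting step is as follows. Each $x\in K_1$ loses at least one vertex when intersected with $U$, so $|Z|\le\sum_{x\in K_1}(|W(x)|-1)$. Hence $|Z|-1$ plus the cost of the bags in $K_0$ is at most the original number of identifications, which is at most $k$, and the slack is at least one whenever $K_1\neq\emptyset$.

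\emph{Main obstacle.} The difficulty is adjacency between $Z$ and a bag $W(y)$ with $y\in K_0$. In $H$, the adjacency between $W(y)$ and some $W(x)$ with $x\in K_1$ may be witnessed only by an edge $ay'$ with $a\in W(x)\cap I$ and $y'\in W(y)$, so it can disappear once $I$ is removed. My first attempt at a fix is to absorb every such non-adjacent bag $W(y)$ into $Z$, and then recount using the fact that the vertices of $I$ deleted from the bags of $K_1$ were each paid for by one identification. If that budget turns out to be insufficient, I would instead choose $\Wcal$ minimizing the number of vertices of $I$ in clique-bags. Then any vertex $a\in W(x)\cap I$ could be moved out to its own independent singleton bag, unless it is the sole witness of some clique adjacency; this extremal choice should force the needed $U$–$U$ edges to exist. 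I expect this adjacency repair to be the delicate part of the proof.
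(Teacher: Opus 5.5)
Your ($\Leftarrow$) direction is correct and is the paper's. Your setup for ($\Rightarrow$) is also sound: the split $K=K_0\cup K_1$, the set $Z$, and the bound $|Z|\le\sum_{x\in K_1}(|W(x)|-1)$. But the proof stops at the one step that matters, because you never exhibit a clique-witness structure of $G[U]$.

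\textbf{Your first repair fails.} Absorbing a $K_0$-bag into $Z$ costs one extra identification. Your slack is only $1+\sum_{x\in K_1}(|W(x)\cap I|-1)$, which equals $1$ whenever each bag of $K_1$ contains a single $I$-vertex. Yet that single $I$-vertex may be the only witness of many clique adjacencies. Concretely, take $U=\{u_1,u_2,z\}$ and $I=\{a\}$ with edges $u_1u_2,au_1,au_2,az$, and bags $\{u_1\},\{u_2\},\{a,z\}$. This gives $H=K_3$ with $k=1$. Here $Z=\{z\}$ is adjacent to neither $K_0$-bag, and absorbing both produces a bag of size $3$, i.e.\ $2>k$ identifications. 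Yet $G[U]$ does identify to $K_2$ with one identification.

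\textbf{Your second repair is incomplete.} It points at the paper's extremal choice: the paper maximizes the number of independent-bags, which for nice structures is the same as your minimization. But ``the extremal choice should force the needed $U$--$U$ edges'' is not an argument. Removing an $I$-vertex alone from a clique-bag really can destroy adjacencies, and you do not say which modification avoids this.

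The missing idea is to repair by \emph{enlarging} existing bags, since enlarging a bag never destroys an adjacency. The paper does this as an exchange. Suppose a clique-bag $W(y)$ contains $U$-vertices and $p\ge 1$ vertices of $I$, and $W(z)$ with $z\neq y$ is another clique-bag meeting $U$. Move $W(y)\cap U$ into $W(z)$, turn each vertex of $W(y)\cap I$ into its own singleton bag, and delete $y$. Then:
\begin{itemize}
\item $K\setminus\{y\}$ is still a clique, because the surviving clique-bags only grew;
\item all neighbours of $I$-vertices lie in $U$, hence in clique-bags, so the result is again a nice split-witness structure;
\item its cost drops by $p-1\ge 0$, and it has $p$ more independent-bags.
\end{itemize}
This contradicts the extremal choice. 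Hence the clique-bags meeting $U$ lie inside $U$ and directly give a clique-witness structure of $G[U]$; if only one clique-bag meets $U$, the single bag $U$ works.

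In your own notation you can even skip the extremal choice. If $K_0\neq\emptyset$, add all of $Z$ to one bag $W(y_0)$ with $y_0\in K_0$. The $K_0$-bags are pairwise adjacent via edges inside $U$, so this is a clique-witness structure of $G[U]$, with cost $\sum_{y\in K_0}(|W(y)|-1)+|Z|\le k$ by your bound. If $K_0=\emptyset$, then $Z=U$ and the single bag $U$ costs $|U|-1\le k$.
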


\begin{proof}
For the forward implication,
consider a split graph $H$ with at least $n-k$ vertices having a nice $H$-witness structure $\Wcal=\{W(x)\mid x\in V(H)\}$ for $G$ such that every vertex of $U$ is in a clique-bag and the number of independent-bags is maximum. 
Let $K$ be the vertex set inducing the corresponding clique of $H$. 
Consider $K'=\{x\in V(H)\mid W(x)\cap U\neq\emptyset\}\subseteq K$. 
We claim that $\Wcal'=\{W(x)\mid x\in K'\}$ is a partition of $U$.

Clearly, the bags $W(x)$ of $\Wcal$ are disjoint, and by definition, $U\subseteq\bigcup_{x\in K'}W(x)$. Thus, we have to show that $I\cap\big(\bigcup_{x\in K'}W(x)\big)=\emptyset$.

If $|K'|=1$, then for the unique $x\in K'$, $U\subseteq W(x)$. Then because the number of 
independent-bags is maximum, we immediately obtain that $K=K'$ and the vertices of $I$ are in singleton-bags. Therefore, the claim holds. Suppose that $|K'|\geq 2$. 
For the sake of contradiction, assume that there is $y\in K'$ such that $W(y)\cap I\neq\emptyset$. Let $\{v_1,\ldots,v_p\}=W(y)\cap I$ and consider an arbitrary $z\in K'$ distinct from $y$.
We construct the
graph $H'$ and an $H'$-witness structure $\Wcal''=\{W'(x)\mid x\in V(H')\}$ for $G$ from $H$ as follows:
\begin{itemize}
\item delete $y$ and introduce new vertices $r_1,\ldots,r_p$,
\item set $W'(z)=W(z)\cup(W(y)\cap U)$,
\item for $i\in[p]$, set $W'(r_i)=\{v_i\}$,
\item for all distinct $x_1,x_2\in V(H')$, make $x_1$ and $x_2$ adjacent in $H'$ if and only of $W'(x_1)$ and $W'(x_2)$ are adjacent.
\end{itemize}
Notice that because $W'(r_1),\ldots,W'(r_p)$ are singletons, we have
$\sum_{x\in V(H')}(|W'(x)|-1)=\sum_{x\in V(H)\setminus\{y,z\}}(|W(x)|-1)+(|W(z)|-1)+(|W(y)|-p)\leq \sum_{x\in V(H)}(|W(x)|-1)$. Thus, $H'$ can be obtained from $G$ by at most $k$ identifications. 

Observe that if $x_1$ and $x_2$ are distinct vertices of $K\setminus \{y\}$, then $W'(x_1)\supseteq W(x_1)$ and $W'(x_2)\supseteq W(x_2)$ are adjacent because $W(x_1)$ and $W(x_2)$ are adjacent. Therefore, $K\setminus\{y\}$ is a clique of $H'$. 
Recall that the vertices of $I$ form an independent set and their neighbors are in $U$. Thus, $v_1,\ldots,v_p$ have their neighbors in in the bags $W'(x)$ for $x\in K'\setminus\{y\}$. 
Thus, $(V(H)\setminus K)\cup\{x_1,\ldots,x_p\}$ is an independent set of $H'$.
Therefore, $H'$ is a split graph, and $\Wcal''$ is a nice $H'$-witness structure. However, this contradicts the choice of $H$ and $\Wcal$. Thus, there is no $y\in X$ with $W(y)\cap I\neq\emptyset$. 

This proves that $\Wcal'=\{W(x)\mid x\in K'\}$ is a partition of $U$. Since $\sum_{x\in K'}(|W(x)|-1)\leq \sum_{x\in V(H)}(|W(x)|-1)\leq k$, we have that
$G[U]$ can be identified to a clique by at most $k$ identifications. 

The opposite implication is straightforward. If $G[U]$ can be identified to a clique by at most $k$ identifications, then these identifications applied to $G$ result in a split graph with the independent set $I$.
This completes the proof.
\end{proof}

\begin{theorem}\label{th:FPTsplit}
\kidH{Split} is solvable in $k^{\Ocal(k)}\cdot n^{\Ocal(1)}$ time.
\end{theorem}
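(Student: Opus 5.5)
We reduce \kidH{Split} to \kidH{Clique} via \autoref{lem:singletons_split} and \autoref{lem:reduction_clique}, and then invoke the algorithm of \autoref{th:FPTclique}. By \autoref{lem:singletons_split}, if $(G,k)$ is a \yes-instance, then $G$ can be identified by at most $k$ identifications to a split graph $H=(K\cup I_H,E)$ with a nice witness structure $\Wcal$. Every independent-bag is then a single vertex, and the set $J$ of vertices of $G$ lying in independent-bags is an independent set of $G$. Put $U=V(G)\setminus J$. Every vertex of $U$ lies in a clique-bag, so by \autoref{lem:reduction_clique} the instance is a \yes-instance if and only if there is an independent set $J$ of $G$ such that $G-J$ can be identified to a clique by at most $k$ identifications. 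The difficulty is that $J$ cannot simply be guessed, since there are exponentially many choices in $n$.

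To pin down $J$ we use the clique structure of $G-J$. If $G-J$ can be identified to a clique with at most $k$ identifications, then, exactly as in the proof of \autoref{th:FPTclique}, deleting the at most $2k$ vertices lying in non-trivial bags leaves a clique. Hence $G$ is a split graph plus at most $2k$ vertices: there is a set $X$ with $|X|\le 2k$ such that $G-X$ is a split graph. Such a set can be found in $c^k\cdot n^{\Ocal(1)}$ time, since split graphs are characterized by the finite forbidden induced subgraphs $2K_2$, $C_4$ and $C_5$, so bounded branching applies. If no such $X$ exists, we report a \no-instance. Otherwise, fix a split partition $(K_0,I_0)$ of $G-X$.

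Next we guess how $X$ meets $J$: for each vertex of $X$, whether it lies in $J$ or in $U$. This gives $2^{2k}$ choices. The vertices of $U\setminus X$ form a set $Q=U\cap (V(G)\setminus X)$ that must be a clique up to vertices in non-trivial bags. We then need to settle which vertices of $I_0$ enter $U$ and which vertices of $K_0$ enter $J$. At most one vertex of $K_0$ can be in $J$, since $J$ is independent, so we guess it, giving $n$ choices. For $I_0$, any vertex of $I_0$ placed in $U$ must, in the clique identification of $G[U]$, either be adjacent to everything or share a bag with something. We would argue by an exchange argument, in the spirit of \autoref{lem:reduction_clique}, that it is never worse to put a vertex $v\in I_0$ into $J$ whenever this is allowed. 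It is allowed when $v$ has no neighbor in $J\cap X$; the guessed vertex of $K_0$ is adjacent to at most all of $I_0$ and is handled the same way. Moving $v$ to $J$ only deletes a vertex from $G[U]$. Deleting a vertex never increases the minimum number of identifications to a clique: merge its bag contents elsewhere, or drop the bag if it was trivial. So the choice stays feasible. Vertices of $I_0$ adjacent to some vertex of $J\cap X$ are forced into $U$.

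After these guesses $U$ is determined. We run the $k^{\Ocal(k)}\cdot n^{\Ocal(1)}$ algorithm of \autoref{th:FPTclique} on $(G[U],k)$ and accept if some guess succeeds. Correctness follows from \autoref{lem:reduction_clique}, and the running time is $2^{\Ocal(k)}\cdot n\cdot k^{\Ocal(k)}n^{\Ocal(1)}$.

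The main obstacle is the monotonicity and exchange step. We must check that deleting a vertex from $U$ keeps $G[U]$ identifiable to a clique within budget; this is not entirely trivial, since the deleted vertex may be the only link between two bags. If it is, we instead reassign it into an adjacent bag rather than deleting it. The other point to check is that greedily maximizing $J$ within $I_0$ stays consistent with the independence of $J$.
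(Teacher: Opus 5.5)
Your proposal is correct and is essentially the paper's proof. Both arguments find a set of at most $2k$ vertices whose deletion leaves a split graph, guess the at most one clique-side vertex that goes to an independent-bag, and guess the split of the deletion set into a clique-bag part and an independent-bag part; note that guesses whose independent part, together with that vertex, is not independent must be discarded. All remaining independent-side vertices with no neighbour in that part then go into $J$, and the {\sc Clique-Identification} algorithm is run on $G[U]$. Your observation that identifiability to a clique within budget is preserved under deleting a vertex (drop a trivial bag, or merge the remnant of a non-trivial bag into another bag) does the job that \autoref{lem:reduction_clique} does in the paper, and is if anything a simpler justification of the greedy choice.
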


\begin{proof}
Let $(G,k)$ be an instance of \kidH{Split}. We assume that $G$ is not a split graph, as otherwise, the problem is trivial. 

Suppose that $(G,k)$ is a \yes-instance of \kidH{Split}. Then, because the class of split graphs is closed under vertex deletions, we have that there is a set of size at most $2k$ vertices whose deletion results in a split graph---we delete at most $2k$ vertices involved in the identification. We use the fact that, in $\Ocal(1.2738^hh^{\Ocal(\log h)}+n^3)$ time~\cite{CyganP13spli}, it can be decided whether it is possible to delete at most $h$ vertices to obtain a split graph. We first use the algorithm from~\cite{CyganP13spli} to find a set $S\subseteq V(G)$ of size at most $2k$ such that $G-S$ is a split graph. If such a set does not exist, we conclude that $(G,k)$ is a \no-instance. From now on, we assume that $S$ is given. Furthermore, we suppose that $S$ is an inclusion-minimal set whose deletion results in a split graph.

Let $G'=G-S$, and consider a partition of $V(G')$ into a clique $G'[K]$ and an independent set $I$ ($I$ may be empty). Since $G$ is not a split graph and $S$ is an inclusion minimal set whose deletion results in a split graph, we can choose $K$ to be of size at least two.

By~\autoref{lem:singletons_split}, we can assume that the independent-bags in a split-witness structure are singletons. Thus, at most one vertex $w\in K$ can be in an independent-bag. We guess this vertex by considering all possibilities, and set $K:=K\setminus\{w\}$ and $S:=S\cup\{w\}$ if such a vertex $w$ is selected; the sets are not changed if in the considered choice the vertices of $K$ are in clique-bags. From now on, we assume that $K$ and $S$ are fixed.

Notice that in the considered $K$, every vertex should be in a clique-bag in a hypothetical solution. Vertices of $S$ may be either in clique-bags or independent-bags. Also, by~\autoref{lem:singletons_split}, we can assume that the vertices of $S$ from independent-bags are the unique vertices of these bags and their neighbors in $G$ are in clique-bags.   
We consider all possible partitions of $S$ into $S_K$ and $S_I$ (one of the sets may be empty), where $S_I$ is an independent set, to guess the partition of $S$ into the set of vertices $S_K$ in the clique-bags and the set of vertices $S_I$ in independent-bags.
For the considered choice of $S_I$, the vertices of $N_G(S_I)$ should be in clique-bags. We define $U=K\cup S_K\cup N_G(S_I)$ and set $I'=I\setminus N_G(S_I)$. Then $U$ and $I'$ is a partition of $V(G)$, where $I'$ is an independent set. Furthermore, if $(G,k)$ is a \yes-instance of \kidH{Split}, for one of the considered choices, the vertices of $U$ should be in clique-bags. Then, by~\autoref{lem:reduction_clique}, $(G[U],k)$ should be a \yes-instance of \kidH{Clique}. We verify this by making use of the algorithm from~\autoref{th:FPTclique}. If the algorithm reports that $(G[U],k)$ is a \yes-instance of \kidH{Clique}, then by~\autoref{lem:reduction_clique}, we conclude that $(G,k)$ is a \yes-instance of \kidH{Split}.
If we fail to find such an instance for all choices of $w$ and all partitions of $S$, we have that $(G,k)$ is a \no-instance of \kidH{Split}. This completes the description of the algorithm and its correctness proof.

For the running time evaluation, recall that $S$ (if it exists) can be found in $2^{\Ocal(k)}\cdot n^{\Ocal(1)}$ time. Then we have at most $n$ choices of $w$ and, as $|S|\leq 2k+1$, at most $2^{2k+1}$ partitions of $S$. Finally, the algorithm from~\autoref{th:FPTclique} works in $k^{\Ocal(k)}\cdot n^{\Ocal(1)}$ time. Then the overall running time is $k^{\Ocal(k)}\cdot n^{\Ocal(1)}$. This concludes the proof.
\end{proof}

\subsection{Kernelization for Dual Parameterization}
We conclude this section by proving that for the parameterization by $p=n-k$, \Hid{$\Hcal$} admits a trivial polynomial kernel, where $\Hcal$ is the class split, interval, or chordal graphs. 

\begin{theorem}\label{thm:chordal_kernel}
Let $\Hcal$ be the class of split, interval, or chordal graphs (or more generally, any class containing the complete graphs, edgeless graphs, stars, and the disjoint unions of stars and isolated vertices).
Let $G$ be a graph and $k\in\bN$. Let $p=n-k$.
If $n\ge p^2-2p-2$,
then $(G,k)$ is a \yes-instance of \Hid{$\Hcal$}. Particularly, \Hid{$\Hcal$} admits a kernel with at most $p^2$ vertices.
\end{theorem}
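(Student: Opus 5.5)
The statement asks us to show that if $n$ is large compared to $p=n-k$, then $G$ can be identified to some graph with exactly $p$ vertices in the target class. Since every class in the statement is closed under edge contraction (and the identification of two vertices in a graph of the class with $p+1$ vertices can be realised), it suffices to exhibit a graph $H$ on exactly $p$ vertices that is complete, edgeless, a star, or a disjoint union of a star and isolated vertices, together with an $H$-witness structure of $G$. The kernel part follows at once: if $n\ge p^2-2p-2$ we output a trivial \yes-instance; otherwise $n<p^2$ and we return $(G,k)$ unchanged.

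The case analysis will be driven by a maximal matching $M$ of $G$ and a maximum independent set structure, in the spirit of \autoref{cl:matching}. First, if $|M|\ge\binom{p}{2}$, then \autoref{cl:matching} gives an identification of $G$ to $K_p$, and we are done. Second, if $G$ has at least $p$ isolated vertices, or more generally an independent set $J$ with $|J|\ge p$ whose complement can be absorbed, we aim for an edgeless graph or a star. Concretely, let $S$ be the set of endpoints of $M$; then $|S|\le 2\binom{p}{2}-2=p^2-p-2$, and $I=V(G)\setminus S$ is independent with $|I|=n-|S|$. The threshold $n\ge p^2-2p-2$ is designed so that $|I|\ge p-1$ in this case.

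Using $I$ with $|I|\ge p-1$, I would build a star-with-isolated-vertices target. Pick $p-1$ vertices $v_1,\dots,v_{p-1}\in I$ as singleton bags, and put everything else ($S$ together with the rest of $I$) into one central bag $W(c)$. Each $v_i$ is either isolated in $G$, in which case it becomes an isolated vertex of $H$, or it has all its neighbours in $S\subseteq W(c)$, in which case it becomes a leaf adjacent to $c$. The resulting $H$ is a star $K_{1,q}$ plus $p-1-q$ isolated vertices on exactly $p$ vertices. This is valid as long as $W(c)\neq\emptyset$. Degenerate cases are handled separately: if $S=\emptyset$ then $G$ is edgeless and identification to an edgeless graph on $p$ vertices is trivial, and if $W(c)$ would be empty we simply have $n=p-1<p$, which is impossible.

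The main obstacle is the arithmetic matching the exact threshold $p^2-2p-2$. The bound $|S|\le 2(\binom{p}{2}-1)$ gives $|I|\ge n-p^2+p+2$, which is at least $p-1$ only when $n\ge p^2-3$. To close this gap, I would refine the choice: only $p-1-q$ of the selected vertices need to be non-neighbours, or the central bag can absorb matched vertices while some endpoints of $M$ themselves serve as leaves. For example, if an edge $uv\in M$ has $v$ with all neighbours in the center, then $v$ can be a leaf. I would try the following concrete refinement first. For each of the roughly $p$ missing leaves, take an edge $uv\in M$ and use $v$ as a leaf while putting $u$ in the center. This requires that $v$ has no neighbours among the other chosen leaves. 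That condition holds if the chosen leaves form an independent set, so we pick at most one endpoint per matching edge and choose them greedily to be pairwise nonadjacent. Alternatively, if this greedy choice fails, then many matching edges are pairwise adjacent, and we aim for a clique instead.
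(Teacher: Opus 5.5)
\textbf{Verdict: genuine gap in your last paragraph.} Your first two steps are exactly the paper's proof. A matching of size $\binom{p}{2}$ gives $K_p$. Otherwise the endpoint set $S$ of the matching is a vertex cover with $|S|\le 2\binom{p}{2}-2=p^2-p-2$, and $p-1$ vertices of $I=V(G)\setminus S$ as singleton bags, plus one central bag, give a star with isolated vertices. Your arithmetic is also right: this certifies a yes-instance only when $n\ge p^2-3$.

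The gap is your last paragraph, which tries to lower this to $p^2-2p-2$ while still targeting only complete graphs, edgeless graphs, and a star plus isolated vertices. Choosing pairwise non-adjacent leaves (one endpoint per matching edge, plus vertices of $I$) just asks for an independent set of size $p-1$. The fallback (``otherwise many matching edges are pairwise adjacent, so aim for a clique'') is unsupported and in fact false.

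Take $G=2K_3$ with triangles $a_1a_2a_3$ and $b_1b_2b_3$, and $k=2$. Then $p=4$ and $n=6=p^2-2p-2$.
\begin{itemize}
\item Since $\alpha(G)=2<p-1$, no star, star plus isolated vertices, or edgeless graph on at least $4$ vertices is reachable. Its non-center bags would be pairwise non-adjacent and would yield an independent set of size $3$.
\item $G$ cannot be identified to $K_4$. Each triangle meets at most three bags, so it realizes at most the three pairs of one triangle of $K_4$. Any two triangles of $K_4$ share a pair, so together they cover at most $5<6$ pairs.
\end{itemize}
So if $\mathcal{H}$ consists of exactly these four families, $(2K_3,2)$ is a no-instance. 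The parenthetical ``more generally'' version of the statement is therefore false at this threshold, and no refinement of your scheme can succeed. For split, interval or chordal graphs this instance is a yes-instance, but only through targets outside your four families. For example, contracting one triangle to a point gives $K_3+K_1$, and identifying $a_1$ with $b_1$ and $a_2$ with $b_2$ gives $K_4-e$.

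The paper's own proof does not reach $p^2-2p-2$ either. It is exactly your first part, and it ends with the dichotomy: either the instance is yes, or $|V(G)|\le 2\binom{p}{2}-2+p-2=p^2-4$. That dichotomy is all the kernel claim needs. So drop the refinement and prove the yes-condition for $n\ge p^2-3$. In your kernel paragraph, which as written relies on the unproven threshold, output a trivial yes-instance when $n\ge p^2-3$, and otherwise return $(G,k)$, which then has at most $p^2-4\le p^2$ vertices.
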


\begin{proof}
Let $(G,k)$ be an instance of \Hid{$\Hcal$}. We construct the following kernelization algorithm.
If $G$ is an edgeless graph, then $G\in\Hcal$ and $(G,k)$ is a \yes-instance. Assume that this is not the case.  

If $G$ has a matching of size at least $\binom{p}{2}$, then $G$ is a \yes-instance of \kidH{Clique} by the arguments from~\autoref{cl:matching}. We remind that in this case, we can construct a $K_p$-witness structure by putting the endpoints of $\binom{p}{2}$ edges of the matching in the bags to ensure that the bags are pairwise adjacent, and other vertices can be placed in the bags arbitrarily. Since $K_p\in\Hcal$, we have that $(G,k)$ is a \yes-instance of \Hid{$\Hcal$}.  
Otherwise, $G$ has a vertex cover $S$ of size at most $2\cdot\binom{p}{2}-2$ composed by the endpoints of a maximum matching. Since $G$ has at least one edge, $S\neq\emptyset$. 
If $|V(G-S)|\ge p-1$, then we can identify $S$ into a single vertex. We thus obtain either a star of with at least $p$ vertices or the disjoint union of a star and some isolated vertices with at least $p$ vertices. This implies that $G$ is a \yes-instance of \Hid{$\Hcal$}. 

If we concluded that $(G,k)$ is a \yes-instance, then the kernelization algorithm returns 
the trivial \yes-instance $(G=(\{v\},\emptyset),0)$, where $G$ has a single vertex.
Otherwise, if $|S|\leq2\binom{p}{2}-2$ and $|V(G)\setminus S|\leq p-2$,
$|V(G)|\leq 2\binom{p}{2}-2+p-2\leq p^2$. Then the algorithm returns the input instance $(G,k)$. This concludes the proof.
\end{proof}

Combining the kernelization from \autoref{thm:chordal_kernel} with the brute force guessing at most $k\leq p^2$ identifications, we obtain the following corollary.

\begin{corollary}\label{cor:chordal_FPT}
\Hid{$\Hcal$}, where $\Hcal$ is the class of split, interval, or chordal graphs, can be solved in $p^{\Ocal(p^2)}\cdot n^{\Ocal(1)}$ time when parameterized by $p=n-k$.
\end{corollary}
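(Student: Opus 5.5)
The statement is Corollary~\ref{cor:chordal_FPT}. The plan is to combine the kernel of Theorem~\ref{thm:chordal_kernel} with an exhaustive search over all identification sequences on the kernel. Split, interval and chordal graphs are all closed under identification of adjacent vertices, i.e.\ under edge contraction; this is the closure used in the introduction for $n-k$. So we may assume that exactly $k$ identifications are performed and that the target graph has exactly $p=n-k$ vertices.

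\textbf{Step 1 (kernelize).} Run the polynomial-time procedure of Theorem~\ref{thm:chordal_kernel}. If $G$ is edgeless, or has a matching of size $\binom{p}{2}$, or has at least $p-1$ vertices outside the vertex cover $S$ of matched vertices, answer \yes. Otherwise $n\le p^2$. Since the procedure returns either a trivial instance or the input unchanged, after this step $n\le p^2$ and hence $k=n-p\le p^2$.

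\textbf{Step 2 (brute force).} Equivalently to guessing $k$ pairs, enumerate all partitions of $V(G)$ into exactly $p$ nonempty bags. There are at most $p^{n}\le p^{p^2}=p^{\Ocal(p^2)}$ of them. Alternatively, enumerate at most $\binom{n}{2}^k\le p^{4p^2}$ identification sequences.

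For each partition:
\begin{itemize}
\item build the quotient graph $H$ in polynomial time;
\item test whether $H\in\Hcal$. Chordality is tested via a perfect elimination ordering (e.g.\ LexBFS), interval graphs by a linear-time recognition algorithm, and split graphs via degree sequences. All of these run in polynomial time.
\end{itemize}
Answer \yes\ if some partition passes.

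\textbf{Correctness and running time.} If a solution exists, its witness structure with exactly $p$ bags appears in the enumeration. Conversely, any accepted partition yields a solution with exactly $n-p=k$ identifications. The total time is $p^{\Ocal(p^2)}\cdot n^{\Ocal(1)}$.

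\textbf{Main obstacle.} The only delicate point is the reduction to exactly $p$ bags. A solution may use fewer than $k$ identifications, leaving more than $p$ vertices. For such a solution, contract arbitrary edges of the target, or identify arbitrary vertices if the target is edgeless, until exactly $p$ vertices remain; closure under these operations keeps the result in $\Hcal$. For chordal, interval and split graphs, closure under contraction is standard. An edgeless graph stays edgeless when two of its vertices are identified, so it remains in every class. Enumerating bag-partitions of size at most $p$ avoids this subtlety anyway.
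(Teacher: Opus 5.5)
Your proposal is correct and follows the paper's route: apply the kernel of Theorem~\ref{thm:chordal_kernel} to get $n\le p^2$ (hence $k\le p^2$), then brute-force over identifications, with your closure-under-contraction argument justifying the restriction to exactly $p$ bags. One slip is in your closing aside: enumerating partitions into \emph{at most} $p$ bags would be wrong, because fewer than $p$ bags means more than $k$ identifications and the one-bag partition (giving $K_1$) would always be accepted; you mean \emph{at least} $p$ bags, which still gives at most $n^n\le p^{2p^2}$ partitions.
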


\section{Future Work}\label{s-con}\label{sec:ccl}

We presented a comprehensive study of the complexity of {\sc ${\cal H}$-Identification} and \id for ten well-known classes of chordal graphs, including the class of chordal graphs themselves. Our results are summarized in Tables~\ref{t-1} and~\ref{t-2}. By using a systematic approach, we were able to identify two open problems: 
Is \Hid{$\cal H$}, parameterized by $k$, \FPT\ or \W{1}-hard when ${\cal H}$ is the class of interval graphs or the class of chordal graphs?

It is also possible to consider alternative parameters. For example, we may ask whether a given graph $G$ can be modified into a graph from a specified class ${\cal H}$ by involving at most $p$ vertices in a sequence of identifications. This variant was studied in \cite{MorelleST26vert}, where a polynomial kernel was obtained when 
${\cal H}$ is the class of forests (through a reduction to {\sc Vertex Cover}). Since $k$ identifications involve $p\leq 2k$ vertices,
this result yields an \FPT\ $2$-approximation algorithm for \Hid{$\cal H$}.

Finally, we recall that for the minor relation, the set of permitted graph operations consists of vertex deletion, edge deletion and edge contraction. If we only allow vertex deletion and edge contraction, then we obtain the {\it induced minor} relation. In both relations we could replace edge contraction by vertex identification.  In the first case, we obtain the  {\it identification minor} relation. This containment relation was introduced in~\cite{MorelleST26vert}. We name the second case as {\it induced identification minor}.

We note that the corresponding decision problems
$H$-{\sc Identification Minor} and $H$-{\sc Induced Identification Minor} are in \XP, when parameterized by $|V(H)|$. This can be seen as follows. For each edge $xy\in E(H)$, we guess an edge $uv\in E(G)$ where we place $u$ in $W(x)$ and $v$ in $W(y)$. We also guess a vertex $t$ of $G$ to be in a bag $W(z)$ for each isolated vertex~$z\in V(H)$.
We assign the set $R$ of all non-guessed vertices to an arbitrary bag. For the variant $H$-{\sc Identification Minor}, we delete all edges incident to $R$ and between vertices of bags that must be non-adjacent. 
For the variant $H$-{\sc Induced Identification Minor}, we simply delete all vertices of $R$. We now identify each bag to a single vertex. If we do not obtain $H$ (which may happen when edge deletions are not allowed), then we discard the branch and try again. 

We leave a more extensive complexity study of 
${\cal H}$-{\sc Identification Minor} and ${\cal H}$-{\sc Induced Identification Minor} for future work.
As a final remark, we note that the identification minor relation is  well-quasi-ordered~\cite{MorelleST26vert}, which implies that every identification-minor-closed graph class admits a finite obstruction set. In contrast, this does not hold for the induced identification minor relation: for example, the complements of cycles form an infinite antichain (see also~\cite{MorelleST26vert} for another infinite antichain).

\end{document}